\newtheorem{proposition}{Proposition}
\newtheorem{corollary}[proposition]{Corollary}
\newtheorem{lemma}{Lemma}
\newtheorem{remark}{Remark}
\newtheorem{definition}{Definition}
\newcounter{maintheoremcnt}
\newtheorem{maintheorem}{Theorem}[maintheoremcnt]
\newcommand{\zerovector}{\langle 0, \ldots, 0 \rangle}
\newcommand{\vot}{{{\mathrm{Vot}}}}
\newcommand{\powA}{{{S_k(A)}}}
\newcommand{\profiles}{{{\mathcal{P}}}}
\newcommand{\orders}{{{\Pi_{>}(A)}}}
\newcommand{\naturals}{{{\mathbb{N}}}}
\newcommand{\rationals}{{{\mathbb{Q}}}}
\newcommand{\reals}{{{\mathbb{R}}}}
\newcommand{\cvx}{{{\mathrm{conv}}}}
\newcommand{\inter}{{{\mathrm{int}}}}
\newcommand{\pos}{{{\mathrm{pos}}}}
\newcommand{\posf}{{{\lambda}}}
\newcommand{\scorefull}[3]{{{\mathrm{score}_{#1}({#2}, {#3})}}}
\newcommand{\pairscorefull}[4]{{{\mathrm{score}_{#1}({#2}, {#3}, {#4})}}}
\newcommand{\posweight}{{\mathrm{pos\hbox{-}weight}}}
\newcommand{\np}{{{\mathrm{NP}}}}
\newcommand{\lin}{{{\mathrm{lin}}}}
\newcommand{\sntv}{{{\mathrm{SNTV}}}}
\newcommand{\pav}{{{\mathrm{PAV}}}}
\newcommand{\Bloc}{{{\mathrm{Bloc}}}}
\newcommand{\maj}{{{\mathrm{maj}}}}
\newcommand{\kBorda}{{{k\text{-}\mathrm{Borda}}}}
\newcommand{\CC}{{{\mathrm{CC}}}}
\newcommand{\shortcite}{\cite}
\newtheorem*{rep@theorem}{\rep@title}
\newcommand{\newreptheorem}[2]{%
\newenvironment{rep#1}[1]{%
 \def\rep@title{#2 \ref{##1}}%
 \begin{rep@theorem}}%
 {\end{rep@theorem}}}
\title{Axiomatic Characterization of Committee Scoring Rules}
\author{Piotr Skowron\\
  University of Oxford\\
  Oxford, UK
  \and 
Piotr Faliszewski\\
  AGH University\\
  Krakow, Poland
  \and 
Arkadii Slinko\\
  University of Auckland\\
  Auckland, New Zealand
}
\begin{document}

\maketitle

\begin{abstract}
  Committee scoring rules form a rich class of aggregators of voters'
  preferences for the purpose of selecting subsets of objects with
  desired properties, e.g., a shortlist of candidates for an
  interview, a representative collective body such as a parliament, or
  a set of locations for a set of public facilities.  In the spirit of
  celebrated Young's characterization result that axiomatizes
  single-winner scoring rules, we provide an axiomatic
  characterization of multiwinner committee scoring rules.  We show
  that committee scoring rules---despite forming a remarkably general
  class of rules---are characterized by the set of four standard
  axioms, anonymity, neutrality, consistency and continuity, and by
  one axiom specific to multiwinner rules which we call committee
  dominance. In the course of our proof, we develop several new
  notions and techniques. In particular, we introduce and
  axiomatically characterize multiwinner decision scoring rules, a
  class of rules that broadly generalizes the well-known majority
  relation.
\end{abstract}

\section{Introduction}

One of the most influential results in social choice, Arrow's
impossibility theorem~\cite{arrow1963}, states that when voters have
three or more distinct alternatives (candidates) to choose from, no
social welfare function can map the ranked preferences of individuals
into a transitive social preference order while satisfying four axioms
called unrestricted domain, non-dictatorship, Pareto efficiency, and
independence of irrelevant alternatives.  Arrow's axioms are
reasonable at the individual level of cognition but appeared too
strong to require from a social perspective (this seems particularly
true for the independence of irrelevant alternatives axiom). The
result was negative but it had two important consequences. First,
knowing what is impossible to achieve is important. Second, Arrow
created a framework for developing a positive approach to the social
choice theory, i.e., a framework for investigations of what is
actually possible to achieve. Indeed, numerous axiomatic
characterizations of existing voting rules followed Arrow's work
(these are too numerous to list here, see the survey of Chebotarev and
Shamis~\cite{che-sha:j:scoring-rules} for a comprehensive list of such
characterizations, as well as Section~\ref{sec:related} where we
outline work related to ours).  This was foreshadowed to a certain
extent by May who in a highly original paper axiomatically
characterized the simple majority rule~\cite{mayAxiomatic1952} (but in
a narrow framework that did not allow for generalizations).

Most common voting rules have been introduced without normative
considerations. Hence a discovery of an axiomatic characterization for
a voting rule is hard to overestimate. When we axiomatically
characterize a rule, we are discovering sets of axioms that we know in
advance are consistent (in particular, the rule that is characterized
satisfies all of
them).  
This is, in fact, where the normative theory begins: a commitment to a
particular voting rule is a commitment to the set of axioms that
define this rule. Now, if electoral designers compare two voting
rules, they can look at them from different `angles' where each axiom
provides them with a certain `view' of the rule.  These `views' can be
interpreted as behavioral characteristics with normative
implications. Comparisons of such characteristics can cause an
electoral designer to prefer one rule to another.

In the process of investigating various voting rules, several axioms
were identified that are not only reasonable at the individual level
but also leave enough room for a wide class of procedures for
aggregating preferences of the society. Among them, one of the most
important is consistency,\footnote{In Smith's terminology,
  separability~\cite{smi:j:scoring-rules}.} introduced by
Smith~\cite{smi:j:scoring-rules} and adopted by
Young~\cite{young74}. Consistency says that if two societies decide on
the same set of options and if both societies prefer option $x$ to
option $y$, then the union of these two societies should also prefer
$x$ to $y$. Amazingly, together with the symmetry (which says that all
alternatives and all voters are treated equally) and
continuity,\footnote{In Smith's terminology,
  Archimedean~\cite{smi:j:scoring-rules}} consistency uniquely defines
the class of scoring social welfare
functions~\cite{smi:j:scoring-rules,young74}, which are also called
positionalist voting
rules~\cite{gardenfors73:scoring-rules,pattanaik2002positional} or,
perhaps more commonly, positional scoring rules. These rules are
defined as follows. Given voters' rankings over alternatives, each
alternative earns points from each voter's ranking depending on its
position in that ranking. The alternative $x$ is then at least as high
in the social order as $y$ if the total number of points that $x$
garnered from all the voters is at least as large as for~$y$. Young
also obtained a similar axiomatic characterization of social choice
functions~\cite{you:j:scoring-functions}, which, unlike social welfare
functions, determine only the winner(s). These characterizations of
scoring rules made it possible to axiomatize some particular scoring
rules, most notably Borda~\cite{youngBorda} and
Plurality~\cite{RichelsonPlurality} (see also the work of
Merlin~\shortcite{merlinAxiomatic} for a refined presentation of
Young's result, and the survey of Chebotarev and
Shamis~\cite{che-sha:j:scoring-rules} for a comprehensive list of
axiomatic characterizations of voting rules).

The study of single-winner voting rules is now
well-advanced~\cite{arrow2002handbook,arrow2010handbook}. This is not
the case for the multi-winner voting rules, i.e., for the rules that
aim at electing committees. The only success in their axiomatic study
was Debord's characterization of the $k$-Borda voting rule
\cite{deb:j:k-borda} by methods similar to Young's.
In this paper we provide axiomatic characterization of committee
scoring rules---the multiwinner analogues of single-winner scoring
rules, recently introduced by Elkind et
al.~\shortcite{elk-fal-sko-sli:c:multiwinner-rules}---in the style of
Smith's and Young's 
results for the single-winner case~\cite{smi:j:scoring-rules,young74}.


%

In our model of a multiwinner election, we are given a set of
candidates, a collection of voters with preferences over these
candidates, and an integer~$k$. A multiwinner voting rule is an
algorithm that allows us to compare any two committees (i.e., two
subsets of candidates of size~$k$) on the basis of preferences of the
voters and, in particular, it allows us to identify the best
committee.  In other words, multiwinner voting rules are assumed to
produce weak linear orders over the committees.  Multiwinner elections
of this type are interesting for a number of reasons, and, in
particular, due to a wide range of their applications.  For example,
we may use multiwinner elections to choose a country's parliament, to
identify a list of webpages a search engine should display in response
to a query, to choose locations for a set of facilities (e.g.,
hospitals or fire stations) in a city, to short-list a group of
candidates for a position or a prize, to decide which set of products
a company should offer to its customers (if there is a limited
advertising space), or even as part of a genetic
algorithm~\cite{fal-saw-sch-smo:c:multiwinner-genetic-algorithms}. There
are many other applications and we point the reader to the works of Lu
and Boutilier~\shortcite{budgetSocialChoice,bou-lu:c:value-directed},
Elkind et al.~\shortcite{elk-fal-sko-sli:c:multiwinner-rules}, and
Skowron et al.~\shortcite{owaWinner} for more detailed discussions of
them (including the applications mentioned above).

Multiwinner voting rules differ from their single-winner counterparts
in several important aspects.  First of all,
some 
multiwinner voting rules take into account possible interdependence
between the committee members---the issue which does not exist when
the goal is to select a single winning candidate. The valuation of a
candidate may depend not only on the voters' preferences but also on
who the other committee members would
be. 
For example, in some cases it is important to diversify the committee,
e.g., when we are choosing locations for a set of facilities like
hospitals, when we are choosing a set of advertisements (within the
given budget) to reach the broadest possible audience of customers, or
when we want to provide a certain level of proportionality of
representation in a collective body such as a parliament.



Identifying the class of committee scoring rules has been a recent,
important step on the way of getting a better understanding of multiwinner
voting rules~\cite{elk-fal-sko-sli:c:multiwinner-rules}.  Committee
scoring rules extend their single-winner counterparts as follows.  Let us recall that a
single-winner scoring rule is based on a scoring function that, given a position of a
candidate in a vote (that is, in the ranking of candidates provided by
the voter), outputs the number of points that the candidate gets from this particular voter.  
The overall score of a candidate in the election is
the sum of the points she gets from all the votes, and the
candidate with the highest overall score wins.  
In the case of committee scoring rules,
 we elect not a single candidate but a committee of size~$k$, so we
need a different notion of a position.
Specifically, we say that the position of a committee in a given vote
is the set of positions of its members in this vote.
%
A committee scoring function then assigns points to each possible
position of a committee (with $m$ candidates and committee size $k$,
there are $m \choose k$ such committee positions) and the total score
of a committee is the sum of the points it gets from all the
voters. Then committee $X$ is at least as good as committee $Y$ if the
total number of points that committee $X$ receives is at least as
large as the number of points of committee $Y$.  We view committee
scoring rules as social welfare functions, generalized to the
multiwinner setting; in this respect, our approach is closer to that
of Smith~\cite{smi:j:scoring-rules}, Young~\cite{young74}, and
Merlin~\cite{merlinAxiomatic} rather than to that of
Young~\cite{you:j:scoring-functions}.



Although this generalization of single-winner voting rules to committee scoring rules is quite natural, one can expect much more diversity in the multiwinner case. And this is indeed the case:
the committee scoring rules form a remarkably wide class of
multiwinner election rules, which includes simple ``best $k$'' rules
such as SNTV or $k$-Borda (which select $k$ candidates that are ranked
first most often, or that have the highest Borda scores,
respectively), more involved rules, such as the Chamberlin--Courant
rule~\shortcite{ccElection} that focuses on providing proportional
representation, or even more complex selection procedures, such as the
variants of the OWA-based rules of Skowron et
al.~\shortcite{owaWinner} and Aziz et
al.~\shortcite{azi-gas-gud-mac-mat-wal:c:multiwinner-approval}, or
decomposable rules of Faliszewski et
al.~\cite{fal-sko-sli-tal:c:classification} with applications reaching
far beyond political science.

It is, therefore, remarkable that the committee scoring rules admit an axiomatic characterization very similar in spirit to 
the celebrated characterization of single-winner scoring rules. 
%
%
Our first main result is as follows.

\begin{maintheorem}[\bf Axiomatic Characterization of Committee Scoring Rules]\label{thm:theMainTheorem}
  A multiwinner voting rule is a committee scoring rule if and only if
  it is symmetric, consistent, continuous, and satisfies committee dominance.
\end{maintheorem}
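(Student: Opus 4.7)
The plan is as follows. The forward direction (every committee scoring rule satisfies the four axioms) should be a routine verification: each scoring function is invariant under renaming of voters and candidates (symmetry), additive across disjoint subprofiles (consistency), produces weak orders that are locally constant on a sufficiently small neighborhood of any profile (continuity), and respects coordinatewise improvement of committee positions (committee dominance). The substantive content lies in the converse, which I would attack as follows.

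My plan for the converse is to reduce from full weak-order output to pairwise committee comparisons and then invoke the auxiliary characterization of \emph{multiwinner decision scoring rules} that the paper announces. Concretely, fix an ordered pair of distinct size-$k$ committees $(X,Y)$ and define the induced ternary decision map $d_{X,Y}$ sending each profile $P$ to one of $\{\succ,\sim,\prec\}$ according to whether $X$ is strictly above, tied with, or strictly below $Y$ in the rule's output on $P$. I would check that the four hypothesized axioms descend to the corresponding axioms for $d_{X,Y}$: symmetry of the global rule yields anonymity and a restricted form of neutrality for $d_{X,Y}$; consistency of the weak order on unions of profiles transfers to consistency of the ternary decision; continuity transfers directly; and committee dominance yields the requisite monotonicity along coordinatewise position improvements. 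Applying the MDSR characterization then supplies, for every ordered pair $(X,Y)$, a positional scoring function $g_{X,Y}$ on committee positions that represents $d_{X,Y}$.

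The crux of the proof, and the step I expect to be the main obstacle, is the \emph{gluing} of the family $\{g_{X,Y}\}$ into a single global committee scoring function $f$. My plan here is two-pronged. First, use neutrality to identify $g_{X,Y}$ with $g_{\sigma(X),\sigma(Y)}$ for every candidate permutation $\sigma$, collapsing the family modulo the symmetric-group action and reducing the essentially distinct scoring functions to a much smaller collection. Second, exploit transitivity of the output weak order: for any committees $X,Y,Z$ and any profile, the signs produced by $d_{X,Y}$, $d_{Y,Z}$, and $d_{X,Z}$ must be compatible on every profile, which forces a cocycle-type identity $g_{X,Y} + g_{Y,Z} = g_{X,Z}$ up to an antisymmetric normalization. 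Committee dominance, meanwhile, anchors each $g_{X,Y}$ so that it depends genuinely on positions rather than on arbitrary committee labels. These constraints together should force the existence of a single function $f$ on committee positions from which every $g_{X,Y}$ is recovered as a difference.

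Once $f$ has been extracted, the remaining task is verification: the induced committee scoring rule and the original rule agree on rational profiles by additivity from consistency, on arbitrary integer profiles by a scaling argument combined with continuity to handle boundary cases, and hence on all inputs, since the output is always a weak order over the finitely many size-$k$ committees. The hardest step remains the gluing argument, since the MDSR theorem only produces pairwise data; reconciling these pairwise scoring functions into a single globally coherent object is precisely where neutrality, transitivity of the output order, and committee dominance must interact, and careful bookkeeping will be needed to keep the normalizations consistent across all pairs simultaneously.
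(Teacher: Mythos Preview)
Your proposal has a genuine gap at the step you yourself flag as the crux. The MDSR characterization (Theorem~\ref{thm:decision-rules}) does \emph{not} hand you ``a positional scoring function $g_{X,Y}$ on committee positions.'' What it produces for a fixed pair $(X,Y)$ with $|X\cap Y|=s$ is a \emph{decision} scoring function $d_s\colon [m]_k\times [m]_k\to\reals$, a function of \emph{pairs} of committee positions, determined only up to a positive scalar. Showing that $d_s(I_1,I_2)$ can be written as $\lambda(I_1)-\lambda(I_2)$ for a single $\lambda$ is precisely the content of the theorem, not a consequence of Theorem~\ref{thm:decision-rules}. Your cocycle identity $g_{X,Y}+g_{Y,Z}=g_{X,Z}$ does not even type-check: the three objects are functions on pairs of positions with, in general, three different intersection sizes, so they live in different domains; and since each is fixed only up to a positive multiple, any additive relation presupposes a normalization that you have not supplied and that is not forced by the axioms in any obvious way.

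The paper's route is substantially different and does the real work you are skipping. It uses Theorem~\ref{thm:decision-rules} only indirectly (to know that the tie-set $\{P:C_1=_P C_2\}$ is a hyperplane and, via Lemma~\ref{thm:nontransitive2}, that getting the tie-set right suffices). The actual extraction of $\lambda$ proceeds first for $|C_1\cap C_2|=k-1$: one proves (Lemma~\ref{lemma:zeroAlpha}) that $\alpha_{C_1,C_2}(P)=\zerovector$ implies $C_1=_P C_2$ by exhibiting a basis of $\ker\alpha_{C_1,C_2}$ consisting entirely of $(C_1,C_2)$-symmetric voting situations, built inductively using Hamiltonian paths in Johnson graphs. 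This collapses the comparison to a $\binom{m}{k}$-dimensional space where a hyperplane-separation argument yields $\lambda$. The extension to arbitrary $|C_1\cap C_2|$ is then an induction that inserts an intermediate committee $C_3$ with larger overlaps and exploits transitivity together with a careful case analysis on $\lambda$. None of this is captured by ``cocycle bookkeeping''; the combinatorial basis construction is where transitivity and neutrality actually bite.
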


Let us give an informal description of the axioms in this
characterization and explain the appearance of committee dominance
among them. Symmetry, as in the single-winner case, simply means that
all the voters and candidates are treated in a uniform way.  This is a
standard, widely accepted requirement, and cannot be disputed if the
society adheres to the basic principles of equality both for the
voters and for the candidates.

The requirement of consistency is easily adapted to the multiwinner
case. It says that if there are two groups of voters and for both of
them our voting rule shows that committee $C_1$ is at least as good as
committee $C_2$, then the rule must show that $C_1$ is at least as
good as $C_2$ when the two groups join together in a single
electorate.  We saw that in the case of single-winner rules this
requirement is rather appealing.  Rejecting it would be difficult to
justify from the point of view of social philosophy as it would mean
that we treat large and small societies differently.

Let us now explain continuity. Again two societies are
involved. Suppose that for the first one the voting rule outputs that
committee $C_1$ is at least as good as committee $C_2$ and for the
second one it outputs the opposite conclusion, that committee $C_2$ is
strictly better than committee $C_1$. Then, if we join together the
first society and the second society cloned sufficiently many times,
then for the combined society the rule will output that $C_2$ is
strictly better than committee $C_1$.  That is, continuity ensures
that large enough majority of a population always gets its
choice.\footnote{Smith refers to continuity as the Archimedean
  property and this is a better name for it but, we stick to Young's
  terminology.}

Now, we move to the new axiom which we call committee dominance.  This
axiom requires that if there are two committees, $X$ and $Y$, such
that every voter can pair the candidates in $X$ with candidates in $Y$
into a sequence of pairs $(x_1,y_1),\ldots,(x_k,y_k)$ so that for
every pair $(x_i,y_i)$ this voter weakly prefers $x_i$ to $y_i$, then
the society weakly prefers $X$ to $Y$.  This is an incarnation, in the
case of multiwinner rules, of the famous Pareto Principle which is the
least disputable principle in social choice.  Any libertarian
philosopher would agree that if such a concept like social preference
is at all used, then it should be derived in some systematic way from
individual preferences, and this inevitably leads to the Pareto
Principle. The requirement of committee dominance is, in fact, a part
of the definition of committee scoring
rules~\cite{elk-fal-sko-sli:c:multiwinner-rules}, so it cannot be
avoided here.  Committee dominance can also be seen as a weak form of
monotonicity (see the works of Elkind et
al.~\cite{elk-fal-sko-sli:c:multiwinner-rules} and Faliszewski et
al.~\cite{fal-sko-sli-tal:c:classification} for extended discussions
of various multiwinner monotonicity notions). In his definition of
scoring functions, Young disregards monotonicity considerations and
his scoring functions can assign a higher score to a lower position,
but if one were to use the standard definition of a single-winner
scoring rule which is predominantly used in social choice and which
stipulates that a higher position yields a number of points that is at
least as high as for any lower position, then one would have to add to
Young's characterization an axiom enforcing the Pareto Principle too.
 
Unfortunately, the original Young's technique cannot be applied to prove Theorem~\ref{thm:theMainTheorem}.  
%
Some observations critical to Young's approach cannot be extended to
multiwinner case.  For instance, Young's analysis heavily relies on
the fact that for any two ordered pairs of candidates $(a_1,a_2)$ and
$(b_1,b_2)$ there is a permutation of the set of candidates that maps
$a_1$ to $b_1$ and $a_2$ to $b_2$.  This however fails for two pairs
of committees $(C_1,C_2)$ and $(C_3,C_4)$ since the intersections
$C_1\cap C_2$ and $C_3\cap C_4$ may have different cardinalities. As a
result, the neutrality axiom (symmetry with respect to the candidates)
has much less bite in the context of multiwinner elections.

Our approach is based on the novel concept of a decision rule (or, a
$k$-decision rule if we fix the cardinality $k$ of the committees
involved).  Given a profile of the society and two committees of size
$k$, a $k$-decision rule tells us which committee is better for this
society (or that they are equally good).
However, as opposed to our multiwinner rules, decision rules are not
required to be transitive (e.g., it is perfectly legal for a decision
rule to say that committee $C_1$ is better than $C_2$, that $C_2$ is
better than $C_3$, and that $C_3$ is better than $C_1$).
We note that all the properties of symmetry, consistency and
continuity are equally applicable to decision rules as to multiwinner rules.

In the class of decision rules, we distinguish the class of decision
scoring rules that is much broader than the class of committee scoring
rules.
A decision scoring rule stipulates that any linear order in the
profile, `awards' points (positive or negative) to pairs of
committees.  If in a ranking $v$ committees $C_1$ and $C_2$ have,
respectively, committee positions $I_1$ and $I_2$, then this pair of
committees gets $d(I_1,I_2)$ points from $v$, where $d$ is a certain
function that returns real values. The score of an ordered pair of
committees $(C_1,C_2)$ is the total number of points that this pair
gets from all linear orders of the profile. If the score is positive,
then $C_1$ is strictly preferred over $C_2$. If it is negative, then
$C_2$ is strictly preferred over $C_1$. Otherwise, if the score is
zero, the two committees are declared equally good.
%
%
%
Decision scoring rules, while a bit counterintuitive at first, are a
very general and useful notion. For example, one can easily show a decision scoring
rule that generates the standard majority relation, where alternative
$a$ is preferred to alternative $b$ if and only if more voters place
$a$ higher than $b$ than the other way around.

As indicated above, decision scoring rules are a very broad class that
goes far beyond committee scoring rules.  It is, therefore, quite
surprising that we can still obtain an axiomatic characterization for
them (especially that it uses the same axioms as Young's
characterization of single-winner scoring
rules~\cite{you:j:scoring-functions} adapted to the multiwinner
setting):\par\medskip

\begin{maintheorem}[\bf Axiomatic Characterization of Decision Scoring Rules]\label{thm:decision-rules}
  A decision rule is a decision scoring rule if and only if it is symmetric, consistent and continuous.
\end{maintheorem}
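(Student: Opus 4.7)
The forward direction, that every decision scoring rule is symmetric, consistent, and continuous, is routine: the total score of an ordered pair of committees is additive over the profile and invariant under relabelings of voters or candidates. The substance of the theorem lies in the converse, and I would prove it by adapting Young's geometric argument for single-winner scoring rules to the multiwinner decision setting.

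Fix an intersection size $s \in \{0, 1, \ldots, k\}$ and a reference pair of committees $(C_1^*, C_2^*)$ with $|C_1^* \cap C_2^*| = s$. Let $\mathcal{I}_s = \{(I_1, I_2) : |I_1| = |I_2| = k,\ |I_1 \cap I_2| = s\}$ be the set of joint committee-position patterns that this pair can realize. Given a profile $P$, each vote $v$ assigns some pattern in $\mathcal{I}_s$ to $(C_1^*, C_2^*)$; let $\mathbf{x}(P) \in \integers_{\geq 0}^{\mathcal{I}_s}$ record how many voters produce each pattern. Using anonymity together with neutrality restricted to those permutations of candidates that setwise fix $C_1^*$ and $C_2^*$, one first shows that $R(P, C_1^*, C_2^*)$ depends on $P$ only through $\mathbf{x}(P)$. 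Therefore the three level sets
\[
V_s^+ = \{\mathbf{x}(P) : C_1^* \succ_P C_2^*\},\quad V_s^0 = \{\mathbf{x}(P) : C_1^* \sim_P C_2^*\},\quad V_s^- = \{\mathbf{x}(P) : C_1^* \prec_P C_2^*\}
\]
are well defined. Consistency makes each closed under addition; Young's standard trick of augmenting by a large common profile so that differences become representable as non-negative profiles then extends the three sets to convex cones in $\rationals^{\mathcal{I}_s}$. Continuity allows passage to $\reals^{\mathcal{I}_s}$, makes $V_s^+$ and $V_s^-$ relatively open, and forces $V_s^0$ to be closed and, as I would show, closed under negation---hence a rational linear subspace. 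A separating hyperplane then produces a linear functional, represented by a scoring function $d_s : \mathcal{I}_s \to \reals$, whose sign when summed over the votes governs $R$ on the pair $(C_1^*, C_2^*)$.

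To finish, I would use full neutrality to transport $d_s$ from the reference pair to every pair $(C_1, C_2)$ with intersection size $s$: any permutation $\sigma$ of candidates sending $(C_1^*, C_2^*)$ to $(C_1, C_2)$ induces a bijection on $\mathcal{I}_s$ under which $d_s$ is equivariant, so the same function represents $R$ on every such pair. Gluing the pieces $d_0, d_1, \ldots, d_k$ yields a single scoring function $d$ defined on all pairs of committee positions. The main obstacle, I expect, is the combination of three-region separation with reduced symmetry. Showing that $V_s^0$ is a genuine linear subspace---equivalently, that a pattern vector on the common boundary of $V_s^+$ and $V_s^-$ cannot yield a strict preference---requires a delicate argument combining consistency with continuity in the manner of Young. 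Moreover, the constraint $|I_1 \cap I_2| = s$ breaks the full permutation symmetry of $\binom{[m]}{k}^2$ that would be available in the single-winner setting, forcing the analysis to be carried out one intersection size at a time before the individual scoring functions are reassembled into a global $d$.
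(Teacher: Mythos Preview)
Your approach is plausible but takes a genuinely different route from the paper. You propose a Young-style geometric argument: reduce to the joint-position pattern space $\mathcal{I}_s$, show that the three level sets $V_s^+, V_s^0, V_s^-$ form convex cones (after extension to $\rationals^{\mathcal{I}_s}$), and extract the scoring function from a separating hyperplane. The paper instead builds the scoring function \emph{constructively}: after fixing a reference pair of positions $(I_1^*, I_2^*)$ on which the rule is strict, it defines each value $d_s(I_1,I_2)=\Delta_{I_1,I_2}$ explicitly as a supremum (or infimum) of voter ratios $y/x$ in two-block profiles, and then verifies by a direct profile-transformation argument (Lemma~\ref{thm:nontransitive}) that summing these values reproduces the rule. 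No convex-separation theorem is invoked for Theorem~\ref{thm:decision-rules}; the hyperplane machinery appears only later, in the proof of Theorem~\ref{thm:theMainTheorem}, where Theorem~\ref{thm:decision-rules} is already available as a tool (via Corollary~\ref{cor:hyperplane}).

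Two remarks on your outline. First, the reduction ``$R(P,C_1^*,C_2^*)$ depends only on $\mathbf{x}(P)$'' is not an immediate consequence of neutrality for permutations that setwise fix $C_1^*$ and $C_2^*$: neutrality acts on the whole profile, whereas you need invariance under swaps in a \emph{single} vote. The paper proves this as a separate lemma (Lemma~\ref{lemma:othersIrrelevant}) by a symmetry-plus-consistency cancellation, and your argument would need the same step. Second, your ``delicate argument'' that $V_s^0$ is a genuine hyperplane is exactly where the paper's constructive proof pays off: once the $\Delta$-values are in hand, $V_s^0$ being a hyperplane is an immediate corollary, whereas in your approach you must combine the Archimedean property with the convex partition to rule out higher-codimension equality sets before the separating functional can be pinned down uniquely. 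Both routes work, but the paper's is more elementary at this stage and defers the convex-geometric reasoning to where transitivity is also available.
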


Since decision rules generalize the notion of the majority relation,
this result opens a possibility to use ideas from the theory of
tournament solution concepts in future research on multiwinner rules
(for an overview of tournament theory, see, e.g., the book of
Laslier~\cite{las:b:tournaments}).  Our theorem says that decision
scoring rules form the unique class of functions mapping voters'
preferences to tournaments and satisfying the above three axioms.

This paper is organized as follows.  First, in
Section~\ref{sec:related} we discuss related work and then, in
Section~\ref{sec:prelims}, we provide necessary background regarding
multiwinner elections and committee scoring rules. In
Section~\ref{sec:properties} we formally describe the axioms that we
use in our characterization. Section~\ref{sec:mainResult} contains our
main result and its proof. The proof is quite involved and is divided
into two parts.  First, we provide a variant of our characterization
for decision rules (for this part of the proof, we use a technique
that is very different from that used by Young). Second, we build an
inductive argument with Young's characterization providing us with the
induction base to obtain our final result (while this part of the
proof is inspired by Young's ideas, it uses new technical approaches
and tricks), using results from the first part as tools.  We conclude
in Section~\ref{sec:conclusions}.



\section{Related Work}\label{sec:related}

Axiomatic characterizations of single-winner election rules have been
actively studied for quite a long time.  Indeed, the classical theorem
of Arrow~\cite{arrow1963} and the related, and equally important,
result of Gibbard~\cite{gib:j:polsci:manipulation} and
Satterthwaite~\cite{sat:j:polsci:manipulation} can be seen as
axiomatic characterizations of the dictatorial rule\footnote{Under the
  dictatorial voting rule, the winner is the candidate most preferred
  by a certain fixed voter (the dictator).}  (however, typically these
theorems are considered as impossibility results, taking the view of
an electoral designer).  Other well-known axiomatic characterizations
of single-winner rules include the characterizations of the majority
rule\footnote{The majority rule is defined for the set of two
  candidates only. It selects the one out of two candidates that is
  preferred by the majority of the voters.} due to
May~\cite{mayAxiomatic1952} and Fishburn~\cite{fishburn73SocChoice},
several different characterizations of the Borda
rule~\cite{youngBorda, hansson76, fishburnBorda, smi:j:scoring-rules} and the
Plurality rule~\cite{RichelsonPlurality, Ching1996298}, the
characterization of the Kemeny rule\footnote{The Kemeny rule, given
  the set of rankings over the alternatives, returns a ranking that
  minimizes the sum of the Kendall tau~\cite{kendall1938measure}
  distances to the rankings provided by the
  voters.}~\cite{lev-you:j:condorcet}, the
characterization of the Antiplurality rule~\cite{Barbera198249}, and
the characterizations of the approval voting rule\footnote{In the
  approval rule, each voter expresses his or her preferences by
  providing a set of approved candidates. A candidate that was
  approved by most voters is announced as the winner.} due to
Fishburn~\cite{fishburn78Approval} and Sertel~\cite{sertel88Approval}.
Freeman~et~al.~\cite{conf/aaai/FreemanBC14} proposed an axiomatic
characterization of runoff methods, i.e., methods that proceed in
multiple rounds and in each round eliminate a subset of candidates
(the single transferable vote (STV) rule, a
rule used, e.g., in Australia, is perhaps the best known example of
such a multistage elimination rule).  Still, in terms of axiomatic
properties, single-winner scoring rules appear to be the best
understood single-winner rules. Some of their axiomatic
characterizations were proposed by
G{\"a}rdenfors~\cite{gardenfors73:scoring-rules},
Smith~\cite{smi:j:scoring-rules} and
Young~\cite{you:j:scoring-functions} (we refer the reader to the
survey of Chebotarev and Shamis~\cite{che-sha:j:scoring-rules} for an
overview of these characterizations).  For a number of voting rules no
axiomatic characterizations are yet known.

Probabilistic single-winner election rules have also been a subject of
axiomatic studies.  For instance, Gibbard~\cite{10.2307/1911681}
investigated strategyproofness of probabilistic election systems and
blue his result can be seen as an axiomatic characterization of the
random dictatorship rule. Brandl~et~al.~\cite{Bran13a}, by studying
different types of consistency of probabilistic single-winner election
rules, characterized the function returning maximal lotteries, first
proposed by Fishburn~\cite{Fish84a}.



The state of research on axiomatic characterizations  of multiwinner voting rules is
far less advanced. Indeed, we are aware of only one unconditional
characterization of a multiwinner rule: Debord has characterized the
$k$-Borda rule as the only rule that satisfies neutrality,
faithfulness, consistency, and the cancellation
property~\cite{deb:j:k-borda}.
%
%
Yet, there exists an interesting line of research, where the
properties of multiwinner election rules are studied. A large bulk of
this literature focuses on the principle of Condorcet
consistency~\cite{bar-coe:j:non-controversial-k-names,
  kay-san:j:condorcet-winners, fis:j:majority-committees,
  rat:j:condorcet-inconsistencies}, and on approval-based multiwinner
rules~\cite{kil-handbook, kil-mar:j:minimax-approval,
  azi-gas-gud-mac-mat-wal:c:multiwinner-approval,justifiedRepresenattion}. Properties
of other types of multiwinner election rules have been studied by Felsenthal
and Maoz~\cite{fel-mao:j:norms},
Elkind~et~al.~\shortcite{elk-fal-sko-sli:c:multiwinner-rules},
and---in a somewhat different
context---Skowron~\cite{skow:multiwinner-models}.

In their effort to analyze axiomatic properties of multiwinner rules,
Elkind et al.~\cite{elk-fal-sko-sli:c:multiwinner-rules} introduced
the notion of committee scoring rules, the main focus of the current
work. Committee scoring rules were later studied axiomatically and
algorithmically by
Faliszewski~et~al.~\cite{fal-sko-sli-tal:c:classification,fal-sko-sli-tal:c:top-k-counting}.
In particular, they have identified many
interesting subclasses of committee scoring rules and found that most committee scoring rules
are $\np$-hard to compute, but in many cases there are good
approximation algorithms (the work on the complexity of committee
scoring rules can be traced to the studies of the complexity of the
Chamberlin--Courant rule, initiated by Procaccia, Rosenschein and
Zohar~\cite{complexityProportionalRepr} and continued by Lu and
Boutilier~\cite{budgetSocialChoice}, Betzler et
al.~\cite{fullyProportionalRepr}, and Skowron et
al.~\cite{sko-fal-sli:j:multiwinner}).
The axiomatic part of the works of Faliszewski et
al.~\cite{fal-sko-sli-tal:c:classification,fal-sko-sli-tal:c:top-k-counting},
has lead, in particular, to characterizations of several multiwinner
voting rules within the class of committee scoring rules.  They showed
that SNTV is the only nontrivial weakly separable
representation-focused rule, Bloc is the only nontrivial weakly
separable top-$k$-counting rule, and the $k$-approval-based
Chamberlin--Courant rule is the only nontrivial representation-focused
and top-$k$-counting rule.\footnote{These characterizations are, in a
  sense, syntactic, because the properties they rely on describe
  syntactic features of committee scoring functions. Faliszewski et
  al.~\cite{fal-sko-sli-tal:c:classification,fal-sko-sli-tal:c:top-k-counting}
  also provide some semantic characterizations. For example, within
  the class of committee scoring rules, a rule is weakly separable if
  and only if it is non-crossing monotone and, if a rule is
  fixed-majority consistent, then it is top-k-counting. In effect,
  they characterize the Bloc rule as a committee scoring rule that is
  non-crossing monotone and fixed-majority consistent.}  (For brevity,
we omit exact description of these properties here and point the
readers to the original papers.) 

Skowron et al.~\cite{owaWinner} has studied a family of multiwinner
rules that are based on utility values of the alternatives instead of
preference orders, and where these utilities are aggregated using
ordered weighted average operators (OWA operators) of
Yager~\cite{yager1988}.  (The same class, but for approval-based
utilities, first appeared in early works of the Danish polymath
Thorvald N. Thiele~\cite{Thie95a} and was later studied by Forest Simmons\footnote{See the
  description in the overview of Kilgour~\cite{kil-handbook}.} and
Aziz~et~al.~\cite{azi-gas-gud-mac-mat-wal:c:multiwinner-approval,justifiedRepresenattion}).
It is easy to express these OWA-based rules as committee scoring
rules.

%

As we mentioned
in the introduction, the decision rules---studied
in~Section~\ref{sec:nontransitiveRules}---can be seen as
generalizations of majority relations in the case of single-winner
elections. In the world of single-winner elections, majority relations
are often seen as inputs to election procedures (known as tournament solution
concepts). For example, according to the Copeland
method~\cite{cop:m:copeland} the candidate with the greatest number of
victories in pairwise comparisons with other candidates is a
winner. The Smith set~\cite{smi:j:scoring-rules} is another example of
such a rule: it returns the minimal (in terms of inclusion) subset of
candidates, such that each member of the set is preferred by the
majority of voters over each candidate outside the
set. Fishburn~\cite{fishburn77socChooiceFunctions} describes nine
other tournament solution concepts that explore the
Condorcet principle for majority graphs. For an overview of
tournament solution
concepts we refer the reader to the book of
Laslier~\cite{las:b:tournaments} (and to the chapter of Brandt, Brill,
and Harrenstein~\cite{BrandtEtAlChapter} for a more computational
perspective).  We believe that it would be a fascinating topic of
research to explore the properties (computational or axiomatic) of the
generalized tournament solutions for multiwinner rules generated by
our decision rules.

Intransitive preference relations have also been studied by Rubinstein~\cite{rubinstein80Tournament} and by Nitzan and Rubinstein~\cite{nitzan81BordaChar}.
Rubinstein~\cite{rubinstein80Tournament} shows axiomatic characterization of scoring systems among rules which take input preferences
in the form of tournaments, i.e., complete, assymetric (possibly intransitive) relations. 
Nitzan and Rubinstein~\cite{nitzan81BordaChar}, on the other hand, provide axiomatic characterization of the Borda rule,
assuming each voter can have intransitive preferences.



\section{Multiwinner Voting and Decision Rules}\label{sec:prelims}

In this section we provide necessary background regarding multiwinner
elections and committee scoring rules, as well as a definition of our
novel concept of decision rules. For each positive integer $t$, by
$[t]$ we mean the set $\{1, \ldots, t\}$, and by $[t]_k$ we mean the
set of all $k$-element subsets of $[t]$.  For each set $X$ and each $k
\in \naturals$, by $S_k(X)$ we denote the set of all $k$-element
subsets of $X$ (so, in particular, we have that $S_k([t]) = [t]_k$).
For a given set $X$, by $\Pi_{>}(X)$ and $\Pi_{\geq}(X)$ we denote the
set of all linear orders over $X$ and the set of all weak orders over
$X$, respectively.

\subsection{Multiwinner Elections}\label{sec:prelim1}
Let $A = \{a_1, \ldots, a_m\}$ be the set of all the candidates, and
let $N = \{1, 2, \ldots\}$ be the set of all possible voters. We refer
to the members of $\powA$ as size-$k$ committees, or, simply, as
committees, when $k$ is clear from the context. For each finite subset
$V \subseteq N$, by $\profiles(V)$ we denote the set of all
$|V|$-element tuples of elements from $\orders$, indexed by elements
of $V$. We refer to elements of $\profiles(V)$ as preference profiles
for the set of voters $V$. We set $\profiles = \{P \in \profiles(V):
V$ is a finite subset of $N\}$ to be the set of all possible
preference profiles. For each preference profile $P \in
\profiles$, by $\vot(P)$ we denote the set of all the voters in $P$
(in particular, we have that for each $P \in \profiles(V)$ it holds
that $\vot(P)=V$). For each profile $P$ and each voter $v \in
\vot(P)$, by $P(v)$ we denote the preference order of $v$~in~$P$.

Our proof approach crucially relies on using what we call
\emph{$k$-decision rules}.  A \emph{$k$-decision rule} $f_k$,
\[ 
  f_k\colon \profiles \to \Big(\powA \times \powA \to \{-1, 0, 1\}\Big),
\]
is a function that for each preference profile $P \in \profiles$
provides a mapping, $f_k(P) \colon \powA \times \powA \to \{-1, 0,
1\}$, such that for each two size-$k$ committees $C_1$ and $C_2$ it
holds that $f_k(P)(C_1,C_2) = -f_k(P)(C_2,C_1)$.
We interpret $f_k(P)(C_1,C_2) = 1$ as saying that at profile $P$ the
society prefers committee $C_1$ over committee $C_2$ and we denote
this as $C_1 \succ_P C_2$ (we omit $f_k$ from this notation because it
will always be clear from the context).  Similarly, we interpret
$f_k(P)(C_1,C_2) = 0$ as saying that at profile $P$ the society views
the committees as equally good (denoted $C_1 =_P C_2$) and
$f_k(P)(C_1,C_2) = -1$ as saying that at profile $P$ the society
prefers $C_2$ to $C_1$ (denoted $C_2 \succ_P C_1$).  We write $C_1
\succeq_P C_2$ if $C_1 \succ_P C_2$ or $C_1 =_P C_2$, which is
equivalent to $f_k(P)(C_1, C_2) \ge 0$.
%
%
 %
Sometimes, when $P$ is a more involved expression, we write $C_1\:
{\succeq}[P]\: C_2$ instead of $C_1 \succeq_P C_2$ and $C_1\: {=}[P]\:
C_2$ instead of $C_1 =_P C_2$.
%
By $C_1 \prec_P C_2$ we mean $C_2 \succ_P C_1$, and by $C_1 \preceq_P
C_2$ we mean $C_2 \succeq_P C_1$.


A \emph{$k$-winner election rule} $f_k$ is a $k$-decision rule that
additionally satisfies the transitivity requirement, i.e., it is a
$k$-decision rule such that for each profile $P$ and each three
committees $C_1$, $C_2$, and $C_3$ of size $k$ it satisfies the
following condition:
\[ C_1 \succeq_P C_2\ \  \text{and}\ \  C_2 \succeq_P C_3 \implies C_1 \succeq_P
C_3 \textrm{.}
\]
A multiwinner election rule $f$ is a family $(f_k)_{k \in \naturals}$
of $k$-winner election rules, with one $k$-winner rule for each
committee size~$k$. We remark that often multiwinner rules are defined
to simply return the set of winning committees, whereas in our case
they implicitly define weak orders over all possible committees of a
given size. Since the number of such committees is huge, we believe
that giving a concise algorithm for comparing committees---this is
what a transitive decision rule is---is the right way to define a
multiwinner analog of a social welfare function.



\subsection{Committee Scoring Rules and Decision Scoring
  Rules}\label{sec:committeeScoringRules}

\paragraph{Committee Scoring Rules}
For a preference order $\pi \in \orders$, by $\pos_{\pi}(a)$ we denote
the position of candidate $a$ in $\pi$ (the top-ranked candidate has
position $1$ and the bottom-ranked candidate has position $m$). A
single-winner scoring function $\gamma\colon [m]\to \reals$ assigns a
number of points to each position in a preference order so that
$\gamma(i)\geq \gamma(i+1)$ for all $i\in [m-1]$.  For example, the
Borda scoring function, $\beta \colon [m] \rightarrow
\naturals$, is defined as $\beta(i) = m-i$.  Similarly, for each $t
\in [m]$, we define the $t$-Approval scoring function, $\alpha_t$, so
that $\alpha_t(i) = 1$ for $i \leq t$ and $\alpha_t(i) = 0$
otherwise. $1$-Approval scoring function is known as the plurality
scoring function.

We extend the notion of a position of a candidate to the case of
committees in the following way. For a preference order $\pi \in
\orders$ and a committee $C \in \powA$, by $\pos_\pi(C)$ we mean the
%
%
set $\pos_\pi(C) = \{\pos_\pi(a):
a \in C\}$.
By a \emph{committee scoring function for committees of size $k$}, we
mean a function $\posf\colon [m]_k \to \reals$, that for each element
of $[m]_k$, interpreted as a position of a committee in some vote,
assigns a score. A committee scoring function must also satisfy the
following dominance requirement. 
Let $I$ and $J$ be two sets from $[m]_k$ (i.e., two possible committee
positions) such that $I = \{i_1, \ldots, i_k\}$, $J = \{j_1, \ldots,
j_k\}$ with $i_1 < \cdots < i_k$ and $j_1 < \cdots < j_k$. We say that
$I$ dominates $J$ if for each $t \in [k]$ we have $i_t \leq j_t$ (note
that this notion might be referred to as ``weak dominance'' as well,
since a set dominates itself).  If $I$ dominates $J$, then we require that $\posf(I) \geq \posf(J)$.
For each set of voters $V \subseteq N$ and each preference profile $P
\in \profiles(V)$, by $\scorefull{\posf}{C}{P}$ we denote the total score that the
voters from $V$ assign to committee $C$. Formally, we have that
$\scorefull{\posf}{C}{P} = \sum_{v \in \vot(P)} \posf(\pos_{P(v)}(C))$.
By a \emph{committee scoring function} we mean a family of committee
scoring functions, one for each possible size of the committee.

\begin{definition}[Committee scoring rules]
A multiwinner election rule is a \emph{committee scoring rule} if
there exists a committee scoring function $\posf$ such that for each
two equal-size committees $C_1$ and $C_2$, we have that $C_1 \succ_P
C_2$ if and only if $\scorefull{\posf}{C_1}{P} > \scorefull{\posf}{C_2}{P}$, and $C_1 =_P C_2$ if
and only if $\scorefull{\posf}{C_1}{P} = \scorefull{\posf}{C_2}{P}$.  
\end{definition}

Committee scoring rules were introduced by Elkind et
al.~\shortcite{elk-fal-sko-sli:c:multiwinner-rules} and were later
studied by Faliszewski et
al.~\cite{fal-sko-sli-tal:c:top-k-counting,fal-sko-sli-tal:c:classification}
(closely related notions were considered by Thiele~\cite{Thie95a},
Skowron~et~al.~\cite{owaWinner} and by
Aziz~et~al.~\cite{justifiedRepresenattion,azi-gas-gud-mac-mat-wal:c:multiwinner-approval}).
Below we present some examples of committee scoring rules by
specifying the actions of the corresponding committee scoring
functions on $I=\{i_1,\ldots,i_k\}$ with $i_1 < \cdots < i_k$:
\begin{enumerate}
\item The single non-transferable vote (SNTV) rule uses scoring
  function $\posf_\sntv(I) = \sum_{t=1}^k\alpha_1(i_t)$. In other
  words, for a given voter it assigns score $1$ to every committee
  that contains her highest-ranked candidate, and it assigns score zero otherwise.
  That is, SNTV selects the committee of $k$ candidates with the highest
  plurality scores.
  
\item The Bloc rule uses function $\posf_\Bloc(I) =
  \sum_{t=1}^k\alpha_k(i_t)$, i.e., the score a committee gets from a
  single vote is the number of committee members positioned among the
  top $k$ candidates in this vote. Bloc selects the committee with the
  highest total score accumulated from all the voters (and one can see
  that it selects $k$ candidates with the highest $k$-Approval
  scores).
  
\item The $k$-Borda rule uses function $\posf_\kBorda(I) =
  \sum_{t=1}^k\beta(i_t)$, i.e., the score a committee gets from a
  single vote is the sum of the Borda scores of the committee
  members. It selects the committee with the highest total score (and
  one can see that this committee consists of $k$ candidates with the
  highest Borda scores).
  
\item The classical Chamberlin--Courant rule~\shortcite{ccElection} is
  defined through the scoring
  function $\posf_{\beta\text{-}\CC}(I) = \beta(i_1)$ (recall
  that we assumed that $i_1 < \cdots < i_k$). Intuitively, under the
  Chamberlin--Courant rule the highest-ranked member of a committee is
  treated as the representative for the given voter, and this voter
  assigns the score to the committee equal to the Borda score of his
  or her representative.

\item The
  $t$-Approval-Based Proportional Approval Voting rule~\cite{Thie95a,
    kil-handbook, justifiedRepresenattion,
    azi-gas-gud-mac-mat-wal:c:multiwinner-approval, owaWinner} (the
  $\alpha_t$-PAV rule, for short) is defined by the scoring function
  $\posf_{\alpha_t\text{-}\pav}(I) = \sum_{j = 1}^{k}
  \frac{1}{j}\alpha_t(i_j)$.  Thus, the score that a voter $v$ assigns
  to a committee $C$ increases (almost) logarithmically with the
  number of members of $S$ located among the top $t$ preferred
  candidates by $v$.  The use of logarithmic function, implemented by
  the sequence of harmonic weights $(1, \nicefrac{1}{2},
  \nicefrac{1}{3}, \ldots)$ ensures some interesting properties
  pertaining to proportional representation of the
  voters~\cite{justifiedRepresenattion}, and allows one to view
  $\alpha_t$-PAV as an extension of the d'Hondt method of
  apportionment~\cite{Puke14a} to the setting where voters can vote
  for individual candidates rather than for
  parties~\cite{per:pav-dhondt}.
\end{enumerate}
Naturally, many other rules can be interpreted as committee scoring
rules; Faliszewski et
al.~\cite{fal-sko-sli-tal:c:top-k-counting,fal-sko-sli-tal:c:classification}
provide specific examples.

\paragraph{Decision Scoring Rules}
\emph{Decision scoring rules} are our main example of $k$-decision
rules. These rules are similar to committee scoring rules, but with
the difference that the scores of two committees cannot be computed
independently.
Specifically, for each pair of committee positions $(I_1,I_2)$ we
define a numerical value, the score that a voter assigns to the pair
of committees $(C_1, C_2)$ under the condition that $C_1$ and $C_2$
stand in this voter's preference order on positions $I_1$ and $I_2$,
respectively. If the total score of a pair of committees $(C_1, C_2)$
is positive, then $C_1$ is preferred over $C_2$; if it is negative,
then $C_2$ is preferred over $C_1$; if it is equal to zero, then $C_1$
and $C_2$ are seen by this decision rule as equally good.

\begin{definition}[Decision scoring rules]\label{def:cpsr}
  Let $d\colon [m]_k \times [m]_k \to \reals $ be a \emph{decision
    scoring function}, that is, a function that for each pair of
  committee positions $(I_1, I_2)$, where $I_1, I_2 \in [m]_k$,
  returns a score value (possibly negative), such that for each $I_1$
  and $I_2$, it holds that $d(I_1, I_2) = -d(I_2, I_1)$. For each
  preference profile $P \in \profiles$ and for each pair of committees $(C_1, C_2)$, we define
  the score:
  \begin{align}
    \label{def_D_P}
    \pairscorefull{d}{C_1}{C_2}{P} = \sum_{v \in \vot(P)}d(\pos_{P(v)}(C_1), \pos_{P(v)}(C_2))
    \textrm{.}
  \end{align}
  A $k$-decision rule is a \emph{decision scoring rule} if there
  exists a decision scoring function $d$ such that for each preference
  profile $P$ and each two committees $C_1$ and $C_2$ it holds that:
  (i) $C_1 \succeq_P C_2$ if and only if
  $\pairscorefull{d}{C_1}{C_2}{P} \geq 0$, and (ii) $C_1 =_P C_2$ if
  and only if $\pairscorefull{d}{C_1}{C_2}{P} = 0$.
\end{definition}

As we have indicated throughout the introduction and the related work
discussion, one of the arguments in favor of decision scoring rules is
that they generalize the notion of a majority relation: For committee
size $k=1$ we define $d_\maj(\{i_1\},\{i_2\})=1$ if $i_1<i_2$ and
$d_\maj(\{i_1\},\{i_2\})=-1$ if $i_1>i_2$.  A candidate $x$ is
preferred to candidate $y$ if and only if more voters place $x$ ahead
of $y$ than the other way around.  Naturally, each committee scoring
rule is an example of (a transitive) decision scoring rules as well.

\section{Axioms for Our Characterization}\label{sec:properties}

In this section we describe the axioms that we use in
our characterization of committee scoring rules.  
The properties expressed by these axioms are
natural, straightforward generalizations of the respective properties 
from the world of single-winner elections.
%
%
%
We formulate them for the case of $k$-decision rules (for a given
value of $k$), but since $k$-winner rules are a type of $k$-decision
rules, the properties apply to $k$-winner rules as well. For each of
our properties $\mathfrak{P}$, we say that a multiwinner election rule
$f = \{f_k\}_{k \in \naturals}$ satisfies $\mathfrak{P}$ if $f_k$
satisfies $\mathfrak{P}$ for each $k \in \naturals$.

We start by recalling the definitions of anonymity and neutrality,
these two properties ensure that the election is fair to all voters
and all candidates. 
Anonymity means that none of the voters is neither privileged nor
discriminated against, whereas neutrality says the same for the
candidates.

\begin{definition}[Anonymity]
  We say that a $k$-decision rule $f_k$ is \emph{anonymous} if for
  each two (not necessarily different) sets of voters $V, V' \subseteq
  N$ such that $|V| = |V'|$, for each bijection $\rho\colon V \to V'$ and
  for each two preference profiles $P_1 \in \profiles(V)$ and $P_2 \in
  \profiles(V')$ such that $P_1(v) = P_2(\rho(v))$ for each $v \in V$,
  it holds that $f_k(P_1) = f_k(P_2)$.
\end{definition}


Let $\sigma$ be a permutation of the set of candidates $A$.  For a
committee $C$, by $\sigma(C)$ we mean the committee $\{ \sigma(c)
\colon c \in C \}$.  For a linear order $\pi \in \orders$, by
$\sigma(\pi)$ we denote the linear order such that for each two
candidates $a$ and $b$ we have $a \mathrel{\pi} b \iff \sigma(a)
\mathrel{\sigma(\pi)} \sigma(b)$. For a given $k$-decision rule $f_k$
and profile $P$, by $\sigma(f_k(P))$ we mean the function such that
for each two size-$k$ committees $C_1$ and $C_2$ it holds that
$\sigma(f_k(P))(\sigma(C_1),\sigma(C_2) = f_k(P)(C_1,C_2)$.

\begin{definition}[Neutrality]
  A $k$-decision rule $f_k$ is \emph{neutral} if for each permutation
  $\sigma$ of $A$ and each two preference profiles $P_1, P_2$
  over the same set of voters $V$, such that $P_1(v) = \sigma(P_2(v))$ for each $v \in
  V$, it holds that $f_k(P_1) = \sigma( f_k(P_2) )$.
\end{definition}

A rule that is anonymous and neutral is called \emph{symmetric}.  We
note that our definition of anonymity resembles the ones used by
Young~\shortcite{you:j:scoring-functions} or
Merlin~\shortcite{merlinAxiomatic} rather than the traditional ones,
as presented by May~\shortcite{mayAxiomatic1952} or
Arrow~\shortcite{arrow1963}.  The difference comes from the fact that
we (just like Young and Merlin) need to consider elections with
variable sets of voters. The next axiom, in particular, describes a situation
where  two elections with disjoint sets of voters are merged. Given two profiles $P$
  and $P'$ over the same set of alternatives and with disjoint sets of voters, by
  $P+P'$ we denote the profile that consists of all the voters from $P$
  and $P'$ with their respective preferences.


\begin{definition}[Consistency]
  A $k$-decision rule $f_k$ is \emph{consistent} if for each two
  profiles $P$ and $P'$ over disjoint sets of voters, $V \subset N$
  and $V' \subset N$, 
  and each two committees
  $C_1, C_2 \in \powA$, (i) if $C_1 \succ_P C_2$ and $C_1
  \succeq_{P'} C_2$, then it holds that $C_1 \succ_{P+P'} C_2$, and (ii) if $C_1 \succeq_P C_2$ and $C_1
  \succeq_{P'} C_2$, then it holds that $C_1 \succeq_{P+P'} C_2$.
\end{definition}

In some sense, consistency is the most essential element of Young's
characterization, that distinguishes single-winner scoring rules from
the other single-winner rules. In particular, it
means that the rule treats small electorates in the same way as it
treats large ones.

In the framework of social welfare functions (and our decision rules
are analogues of those) it is important to distinguish consistency and
reinforcement axioms.  If we were to express the reinforcement axiom
in our language, it would be worded in the same way as the consistency
axiom, except that the conclusion would only apply to profiles $P$ and
$P'$ such that $f_k(P) = f_k(P')$ (i.e., when the entire rankings
produced by the rule for profiles $P$ and $P'$ coincide).
On the other hand, the premise of consistency requires only that
$f_k(P)$ and $ f_k(P')$ agree on the ranking of $C$ and $C'$ which is
a much weaker requirement than $f_k(P) = f_k(P')$.  As a result, the
consistency axiom is much stronger than the reinforcement axiom.
For example, Kemeny's social welfare function satisfies reinforcement
but not consistency. (We point the reader to the work of Young and
Levenglick for an axiomatic characterization of the Kemeny's
rule~\cite{lev-you:j:condorcet} using the reinforcement axiom.)

\begin{remark}\label{rem:using-consistency}
  In our proofs, we often use the consistency axiom in the following
  way. Let $C_1$ and $C_2$ be two committees and let $P$ and $Q$ be
  two profiles over disjoint voter sets, such that $C_1 \succ_{P+Q}
  C_2$ and $C_1 =_P C_2$. Using consistency, we conclude that $C_1
  \succ_Q C_2$. Indeed, if, for example, it were the case that $C_2
  \succeq_Q C_1$ then by consistency (as applied to merging profiles $P$
  and $Q$) we would have to conclude that $C_2 \succeq_{P+Q} C_1$ which is
 the opposite to what is  assumed.
\end{remark}

The next axiom concerns the dominance relation between committee
positions and specifies a basic monotonicity condition (it can also be
viewed as a form of Pareto dominance).

\begin{definition}[Committee Dominance]
  A $k$-decision rule $f_k$ has the committee dominance property if
  for every profile $P$ and every two committees $C_1, C_2 \in \powA$
  the following holds: If for every vote $v \in \vot(P)$ we have that
  $\pos_{P(v)}(C_1)$ dominates $\pos_{P(v)}(C_2)$, then $C_1 \succeq_P
  C_2$.
\end{definition}

The definition of committee scoring rules requires that if $\lambda$
is a committee scoring function (for committee size~$k$) and $I$ and
$J$ are two committee positions such that $I$ dominates $J$, then
$\lambda(I) \geq \lambda(J)$. That is, committee scoring rules have
the committee dominance property by definition and, thus, we include
this axiom in our characterization.
Young~\cite{you:j:scoring-functions} did not include axioms of this
form because he allowed scoring functions to assign lower scores to
higher positions.


Finally, we define the continuity property, which ensures that if a
certain set of voters $V$ prefers $C_1$ over $C_2$, then for each set
of voters $V'$, disjoint from $V$, there exists some (possibly large)
number $n$, such that if we clone $V$ exactly $n$ times and add such
a profile to $V'$, then in this final profile $C_1$ will be preferred to
$C_2$ (note that when we speak of cloning voters, we implicitly assume
that the decision rule is anonymous and that the identities of the
cloned voters do not matter). Thus, continuity might be viewed as a
kind of ``large enough majority always gets its choice'' principle.

\begin{definition}[Continuity]
  An anonymous $k$-decision rule $f_k$ is \emph{continuous} if for
  each two committees $C_1, C_2 \in \powA$ and each two profiles $P_1$
  and $P_2$ where $C_1 \succ_{P_2} C_2$, there exists a number $n \in
  \naturals$ such that for the profile $Q=P_1+nP_2$ (that consists of the
  profile $P_1$ and of $n$ copies of the profile $P_2$), it holds that
  $C_1 \succ_{Q} C_2$.
\end{definition}

Although we call this axiom continuity (after Young), we note that
there are many axioms of this nature in decision theory, where they
are called ``Archimedean.''  Such axioms usually rule out the
existence of parameters which are infinitely more important than some
other parameters; mathematically, this is usually expressed in terms
of rules that use lexicographic orders (see, for example, axiom A3 in
the work of Gilboa, Schmeidler, and Wakker~\cite{gilboa2002utility} on
Case-Based Decision Theory).
In Young's characterization the continuity
axiom plays a similar role. For more discussion on continuity, we
refer the reader to the original work of
Young~\shortcite{you:j:scoring-functions}.  

\section{Proofs of Main Results}\label{sec:mainResult}

We now start
proving our main results---the axiomatic characterizations of
committee scoring rules and of decision scoring rules, i.e.,
Theorems~\ref{thm:theMainTheorem} and~\ref{thm:decision-rules}. In
fact, Theorem~\ref{thm:decision-rules} will be proved first and will
serve as an intermediate step in proving
Theorem~\ref{thm:theMainTheorem}. Here is the roadmap of the proof.

Since anonymity allows us to ignore the order of linear orders in
profiles, in Section~\ref{sec:nontransitiveRules} we change the domain
of our rules from the set of preference profiles to
the set of voting situations. A voting situation is an
$m!$-dimensional vector with non-negative integers specifying how many
times each linear order representing a vote repeats in the voters'
preferences.  We use this new representation of the domain of decision
rules in Section~\ref{sec:nontransitiveRules} and we conclude this
section by proving Theorem~\ref{thm:decision-rules}.

In Section~\ref{sec:toolsforcsrs} we further extend the
domain of our rules to generalized voting
situations, allowing fractional and negative multiplicities of linear
orders; the voting situations in such extended domain can then be
identified with the elements of $\rationals^{m!}$. We use
characterization from Section~\ref{sec:nontransitiveRules} to prove
that for each symmetric, consistent,
committee-dominant, continuous $k$-winner election rule $f_k$ and
each two committees of size $k$
the set of voting situations for which $C_1$ and $C_2$ are equivalent
is a hyperplane in $\rationals^{m!}$. This will be an important
technical tool in the subsequent proof. In particular, this
observation will be used in Lemma~\ref{thm:nontransitive2} which
implies that for the proof of Theorem~\ref{thm:theMainTheorem} it
would be sufficient to find a committee scoring rule that correctly
identifies the voting situations for which given committees are equivalent
under $f_k$.

In Sections~\ref{sec:second-part-a} and~\ref{sec:specialCase_base} we
then concentrate solely on proving Theorem~\ref{thm:theMainTheorem}.
In Section~\ref{sec:second-part-a}, we prove
Theorem~\ref{thm:theMainTheorem} for the case where $f$ is used to
recognize in which profiles a certain committee $C_1$ is preferred
over some other committee $C_2$, when $|C_1 \cap C_2| = k-1$. If $|C_1
\cap C_2| = k-1$ then there are only two candidates, let us refer to
them as $c_1$ and $c_2$, such that $C_1 = (C_1 \cap C_2) \cup
\{c_1\}$, and $C_2 = (C_1 \cap C_2) \cup \{c_2\}$. Thus, this case
closely resembles the single-winner setting, studied by
Young~\shortcite{you:j:scoring-functions} and
Merlin~\shortcite{merlinAxiomatic}. For each two candidates $c_1$ and
$c_2$, Young and Merlin 
present a basis of the vector space of preference profiles that
satisfies the following two properties:
\begin{enumerate}
\item[(i)] For each preference profile in the basis, the scores of $c_1$ and $c_2$ are
  equal according to every possible scoring function.
\item[(ii)] Candidates $c_1$ and $c_2$ are ``symmetric'' and, thus,
  every neutral and anonymous voting rule has to judge them as equally
  good.
\end{enumerate}
These observations allow one to use geometric arguments to note that
the set of profiles in which $c_1$ is preferred over $c_2$ can be
separated from the set of profiles in which $c_2$ is preferred over
$c_1$ by a hyperplane. The coefficients of the linear equation that
specifies this hyperplane define a single-winner scoring rule, and
this scoring rule is exactly the voting rule that one started with.
In Section~\ref{sec:second-part-a} we use the same geometric
arguments, but the construction of the appropriate basis is more
sophisticated. Indeed, finding this
basis is the core technical part of Section~\ref{sec:second-part-a}.

In Section~\ref{sec:specialCase_base} we extend the result from
Section~\ref{sec:second-part-a} to the case of any two committees
(irrespective of the size of their intersection), concluding the
proof. Here, finding an appropriate basis seems
even harder and, consequently, we use a different technique. To deal
with committees $C_1$ and $C_2$ that have fewer than $k-1$ elements in
common, we form a third committee, $C_3$, whose intersections with
$C_1$ and $C_2$ have more elements than the intersection of $C_1$ and
$C_2$. Then, using an inductive argument, we conclude that the space
of profiles $P$ where $C_1=_P C_3$ is $(m!-1)$-dimensional, and that
the same holds for the space of profiles $P$ such that $C_2 =_P C_3$.
An intersection of two vector spaces with this dimension has dimension
at least $m!-2$ and, so, we have a subspace of profiles $P$ such that
$C_1 =_P C_2$ whose dimension is at least $(m-2)!$.  Using
combinatorial arguments, we find a profile $P'$ which does not belong
to the space but for which $C_1 =_{P'} C_2$ still holds. This gives us
our $(m-1)!$-dimensional space. By applying results from the first
part of the proof, this suffices to conclude that the committee
scoring function that we found in Section~\ref{sec:second-part-a} for
committees that differ in at most one element works for all other
committees as well.

\subsection{Characterization of Decision
  Rules}\label{sec:nontransitiveRules}

We start our analysis by considering $k$-decision rules.  Recall that
the outcomes of $k$-decision rules do not need to be transitive.  That
is, for a $k$-decision rule $f_k$ it is possible to have a profile $P$
and three committees such that $C_1 \succ_P C_2$, $C_2 \succ_P C_3$,
and $C_3 \succ_P C_1$. The remaining part of this section is devoted
to proving Theorem~\ref{thm:decision-rules}.

The whole discussion, i.e., this and the following sections, is
divided into small subsections, each with a title describing its main
outcome. These section titles are intended to help the reader navigate
through the proof, but otherwise one can read the text as a continuous
piece. In particular, all the notations, conventions, and definitions
carry over from one subsection to the next, and so on.

\paragraph{Setting up the Framework.}
Let us fix, for the rest of the proof, a positive integer $k$, the
size of the committee to be elected, and a symmetric, consistent,
continuous $k$-decision rule $f_k$. Our immediate goal is to show that
this rule must be a decision scoring rule. For this, we
need to find a function $d\colon [m]_k \times [m]_k \to \reals $ such
that for each profile $P$ and each two committees $C_1$, $C_2$ it
holds that $C_1 \succeq_P C_2$ if and only if
$\pairscorefull{d}{C_1}{C_2}{P} \geq 0$.

Our function $d$ will be piecewise-defined.  For each $s \in [k]$ we
will define a function $d_s$ which applies only to pairs $(I_1,I_2)\in
[m]_k \times [m]_k$ satisfying $|I_1\cap I_2|=s$, outputs real values
and such that the score:
\begin{align}
    \label{def_D_P_s}
    \pairscorefull{d_s}{C_1}{C_2}{P} = \sum_{v \in \vot(P)}d_s(\pos_{P(v)}(C_1), \pos_{P(v)}(C_2))
\end{align}
calculated with the use of this function satisfies the following
condition: if $|C_1\cap C_2|=s$, then $C_1 \succeq_P C_2$ if and only
if $\pairscorefull{d_s}{C_1}{C_2}{P} \geq 0$.  Pursuing this idea, for
the rest of the proof we will fix $s$ and restrict ourselves to pairs
of committees satisfying $|C_1\cap C_2|=s$. The restriction of $f_k$
to such pairs of committees will be denoted $f_{k,s}$.


\paragraph{The First Domain Change.} Anonymity of $f_k$ allows us to
use a more convenient domain for representing preference profiles.
Indeed, under anonymity the order of votes in any profile is
no longer meaningful and the outcome of any symmetric rule is fully
determined by the {\em voting situation} that specifies how many times each
linear order repeats in a given profile. In particular, for any $\pi \in \orders$ and voting
situation $P$, by $P(\pi)$ we mean the number of voters in $P$ with
preference order $\pi$.  Fixing some order on possible votes from
$\orders$, a voting situation can, thus, be viewed as an
$m!$-dimensional vector with non-negative integer coefficients.

Correspondingly, we can view $f_k$ as a function:
\[ 
   f_k: \naturals^{m!} \to \big(\powA \times \powA \to \{-1, 0, 1\}\big),
\]
with the domain $\naturals^{m!}$ instead of $\profiles$ (recall the
definition of a $k$-decision rule in Section~\ref{sec:prelim1}). 
Representing profiles by voting situations will be helpful in our
further analysis, since algebraic operations on vectors from
$\naturals^{m!}$ become meaningful: for a voting situation $P$ and a
constant $c \in \naturals$, $cP$ is the voting situation that
corresponds to $P$ in which each vote was replicated $c$
times. Similarly, for two voting situations $P$ and $Q$, the sum $P+Q$
is the voting situation obtained by merging $P$ and $Q$. Subtraction
of voting situations can sometimes be meaningful as well.

Given a voting situation $P$, when we speak of ``some vote $v$ in
$P$,'' we mean ``some preference order that occurs within $P$.''  We
sometimes treat each vote $v$ (i.e., each preference order) as a
standalone voting situation that contains this vote only. When we say
that we modify some vote within some voting situation $P$, we mean
modifying only one copy of this vote, and not all the votes that have
the same preference order.

Let $d'\colon [m]_k \times [m]_k \to
\reals $ be some decision scoring function. Naturally, we can speak of
applying the corresponding decision scoring rule to voting situations
instead of applying them to preference profiles as in \eqref{def_D_P}.
For a voting situation $P \in \profiles$, the score of a committee
pair $(C_1, C_2)$ is:
\begin{align}
  \label{def_D_P_vot_sit}
  \pairscorefull{d'}{C_1}{C_2}{P} = \sum_{v \in \orders} P(v) \cdot d'(\pos_{v}(C_1), \pos_{v}(C_2)) \textrm{.}
\end{align}
 

\paragraph{Independence of Committee Comparisons from Irrelevant Swaps.}
We will now show that for
each two committees $C_1$ and $C_2$, the result of their comparison
according to $f_k$ depends only on the positions on which $C_1$ and
$C_2$ are ranked by the voters (and do not depend on the positions of
candidates not belonging to $C_1 \cup C_2$). In particular, if a
committee $C_1$ is better than committee $C_2$ in some election, then
it will also be better after we permute the set of candidates in some
of the votes but without changing the positions of committees $C_1$
and $C_2$ in these votes.

For $v\in \orders$, we write $v[a \leftrightarrow b]$ to denote the
vote obtained from $v$ by swapping candidates $a$ and $b$. Further, if
$v$ is a vote in $P$, by $P[v, a \leftrightarrow b]$ we denote the
voting situation obtained from $P$ by swapping $a$ and $b$ in $v$, and
by $P[a \leftrightarrow b]$ we denote the voting situation obtained
from $P$ by swapping $a$ and $b$ in every vote.

\begin{lemma}\label{lemma:othersIrrelevant}
  Let $C_1$ and $C_2$ be two size-$k$ committees, $P$ be a voting
  situation, $a,b$ be two candidates such that one of the following conditions is satisfied:
  \begin{inparaenum}[(i)]
  \item $a, b \notin C_1 \cup C_2$,
  \item $a, b \in C_1 \cap C_2$,
  \item $a, b \in C_1 \setminus C_2$, or
  \item $a, b \in C_2 \setminus C_1$.
  \end{inparaenum} Then for
  each vote $v$ in $P$, $C_1 \succeq_{P} C_2$ if and only if $C_1
  \succeq_{P[v, a \leftrightarrow b]} C_2$.
\end{lemma}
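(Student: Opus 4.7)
The plan is to reduce the statement to a single use of neutrality via the transposition $\sigma = (a\;b)$, followed by a symmetrization trick that brings consistency into play. The first observation is that, under each of the four hypotheses, the elements $a$ and $b$ lie in the same block of the partition $\bigl\{A \setminus (C_1 \cup C_2),\ C_1 \cap C_2,\ C_1 \setminus C_2,\ C_2 \setminus C_1\bigr\}$ of $A$, so $\sigma$ fixes both committees setwise: $\sigma(C_1) = C_1$ and $\sigma(C_2) = C_2$. Neutrality then immediately yields, for every voting situation $R$, that $f_k(\sigma(R))(C_1, C_2) = f_k(R)(C_1, C_2)$; in words, applying $\sigma$ to \emph{every} vote of $R$ preserves the $(C_1, C_2)$-verdict. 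Note that continuity is not needed for this lemma.

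The next step is to exhibit a common symmetrization of $P$ and $P' := P[v, a \leftrightarrow b]$. Writing $P = Q + v$ with $Q := P - v$, and setting $v' := \sigma(v)$, we have $P' = Q + v'$, and a direct computation gives $\sigma(P) = \sigma(Q) + v'$ and $\sigma(P') = \sigma(Q) + v$. Consequently
\[
P + \sigma(P) \;=\; Q + \sigma(Q) + v + v' \;=\; P' + \sigma(P'),
\]
so the two symmetrized voting situations coincide.

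I will then argue by contradiction. Suppose the $(C_1, C_2)$-verdicts on $P$ and $P'$ disagree; a representative case is $C_1 \succeq_P C_2$ together with $C_2 \succ_{P'} C_1$. By the neutrality observation above, we also have $C_1 \succeq_{\sigma(P)} C_2$ and $C_2 \succ_{\sigma(P')} C_1$. Two applications of consistency then yield $C_1 \succeq_{P + \sigma(P)} C_2$ and $C_2 \succ_{P' + \sigma(P')} C_1$, directly contradicting the identity $P + \sigma(P) = P' + \sigma(P')$ established above. The remaining mismatches (strict-versus-equal, equal-versus-strict, or opposite strict verdicts) are handled identically, using the standard consequences of consistency: strict-plus-weak yields strict, and equal-plus-equal yields equal.

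The only genuine work is the per-case check that $\sigma$ fixes $C_1$ and $C_2$, which is immediate from the partition observation; the rest is a mechanical case analysis. I do not anticipate any deeper obstacle.
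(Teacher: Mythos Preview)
Your argument is correct, and it is genuinely simpler than the paper's. The paper does not use the transposition $\sigma=(a\;b)$ but instead a permutation $\sigma$ that swaps $C_1\setminus C_2$ with $C_2\setminus C_1$, hence interchanges the two committees rather than fixing them. This forces a longer detour: the paper first uses symmetry on $\sigma(P)+P[v,a\leftrightarrow b]-Q$ (with $Q=v[a\leftrightarrow b]+\sigma(v)$) to isolate $Q$, then applies $[a\leftrightarrow b]$ to $Q$, and finally shows that $Q+Q[a\leftrightarrow b]$ splits into two pieces each invariant under a permutation exchanging $C_1$ and $C_2$. Your key observation---that under any of the four hypotheses $(a\;b)$ already fixes both $C_1$ and $C_2$ setwise, so neutrality preserves the $(C_1,C_2)$-verdict on \emph{any} voting situation---collapses all of this into the single identity $P+\sigma(P)=P'+\sigma(P')$ and one application of consistency on each side. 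The paper's route has no apparent advantage here; your symmetrization is the cleaner one. One cosmetic remark: your final sentence about ``remaining mismatches'' is unnecessary, since the only way the biconditional can fail is exactly the case $C_1\succeq_P C_2$ and $C_2\succ_{P'} C_1$ (or its mirror with $P,P'$ swapped), which you have already handled.
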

\begin{proof}
  Let us assume that $C_1 \succeq_{P} C_2$. Our
  goal is to show that $C_1 \succeq_{P[v, a \leftrightarrow b]} C_2$,
  so for the sake of contradiction we assume that $C_2 \succ_{P[v, a
    \leftrightarrow b]} C_1$ holds.

  We rename the candidates so that $C_1 \setminus C_2 = \{a_1, \ldots,
  a_{\ell}\}$ and $C_2 \setminus C_1 = \{b_1, \ldots, b_{\ell}\}$, and
  we define $\sigma$ to be a permutation (over the set of candidates)
  that for each $x \in [\ell]$ swaps $a_{x}$ with $b_{x}$, but leaves all the other candidates
  intact.
  That is, $\sigma(a_i) = b_i$ and $\sigma(b_i) = a_i$ for all $i\in [\ell]$, and for each
  candidate $c \notin \{a_1, \ldots, a_\ell, b_1, \ldots, b_\ell\}$ it
  holds that $\sigma(c) = c$.  Since $C_1 \succeq_{P} C_2$, by
  neutrality we have that $C_2 \succeq_{\sigma(P)} C_1$. Due to our
  assumptions, it holds that $C_2 \succ_{P[v, a \leftrightarrow b]}
  C_1$ and, by consistency, 
  \begin{equation}\label{eq:iis:1}
     C_2 \:{\succ}{\big[\sigma(P) + P[v, a \leftrightarrow b]\big]}\: C_1.
  \end{equation}

  Let $Q = v[a \leftrightarrow b] + \sigma(v)$ be a voting situation
  that consists just of two votes, $v[a \leftrightarrow b]$ and
  $\sigma(v)$.  We observe that $\sigma(P) - \sigma(v) = \sigma(P[v, a
  \leftrightarrow b] - v[a \leftrightarrow b])$. This is because $P[v,
  a \leftrightarrow b] - v[a \leftrightarrow b]$ is the same as $P -
  v$. Since:
  \[ 
     \sigma(P) + P[v, a \leftrightarrow b] - Q\ \ = \ \ \underbrace{(\sigma(P) - \sigma(v))}_{R'} \ \ +\ \ \underbrace{(P[v, a \leftrightarrow b] - v[a \leftrightarrow b])}_{R''},
  \]
  and 
  both summands on the right-hand-side are symmetric with respect to
  $\sigma$ (i.e., $\sigma(R')=R''$, $\sigma(R'')=R'$, and $\sigma^2$
  is an identity permutation),
  by symmetry of $f_k$ we have: 
  \begin{equation}\label{eq:iis:2}
    C_2 \:{=}{[\sigma(P) + P[v, a \leftrightarrow b] - Q]}\: C_1.
  \end{equation}
  Thus, by consistency---as applied to
  equations~\eqref{eq:iis:1} and~\eqref{eq:iis:2} in the way described
  in Remark~\ref{rem:using-consistency}---we get that $C_2 \succ_{Q}
  C_1$. By neutrality, we also infer that $C_2 \succ_{Q[ a
    \leftrightarrow b]} C_1$.
  This follows because for each of the four conditions for $a, b$ from
  the statement of the lemma it holds that permutation $a
  \leftrightarrow b$ maps committee $C_1$ to committee $C_1$ and
  committee $C_2$ to committee $C_2$.  Next, by consistency we get
  that
 $
     C_2 \:{\succ}_{\big[Q + Q[ a \leftrightarrow b]\big]}\: C_1.
 $
 However, we observe that:
  \begin{align*}
     Q + Q[ a \leftrightarrow b] &= \Big(v[a \leftrightarrow b] + \sigma(v)\Big) + \Big(v + \sigma(v)[a \leftrightarrow b]\Big) \\
                                 &= \underbrace{\Big(v[a \leftrightarrow b] + \sigma(v)[a \leftrightarrow b]\Big)}_{Q'} + \underbrace{\Big(v + \sigma(v)\Big)}_{Q''} \textrm{.}
 \end{align*} 
 Furthermore, if $a, b \notin C_1 \cup C_2$, or $a, b \in C_1 \cap
 C_2$, then $\sigma(v)[a \leftrightarrow b] = \sigma(v[a
 \leftrightarrow b])$. On the other hand, if $a, b \in C_1 \setminus
 C_2$ or $a, b \in C_2 \setminus C_1$, then $\sigma(v)[a
 \leftrightarrow b] = (\sigma \circ [a \leftrightarrow b]) (v[a
 \leftrightarrow b])$. In other words, there always exists a
 permutation $\tau$ such that $Q' = \tau(Q')$, $C_1 = \tau(C_2)$, and
 $C_2 = \tau(C_1)$ ($\tau$ is either $\sigma$ or $\sigma \circ [a
 \leftrightarrow b]$), and, similarly, we have $Q'' = \sigma(Q'')$,
 $C_2 = \sigma(C_1)$, $C_1 = \sigma(C_2)$.  Thus, by neutrality, we
 get that:
 \begin{align*}
   C_2 \:{=}{\big[v[a \leftrightarrow b] + \sigma(v[a \leftrightarrow
     b])\big]}\: C_1 \quad \text{and} \quad C_2 \:{=}{\big[v +
     \sigma(v)\big]}\: C_1 \text{.}
 \end{align*}
 Thus, by consistency, we infer that $C_2 \:{=}_{\big[Q + Q[ a
   \leftrightarrow b]\big]}\: C_1$, which contradicts our previous conclusion.
 This completes the proof.
\end{proof}

\paragraph{Putting the Focus on Two Fixed Committees.}
Recall that we have assumed $f_k$ to be symmetric, consistent and
continuous.  Now, we fix a pair of size-$k$ committees, $C_1$ and
$C_2$ with $|C_1\cap C_2|=s$, and define $f_{C_1, C_2}$
to be the rule that acts on voting situations in the same way as
$f_{k,s}$ does, but with the difference that it only distinguishes, at
any voting situation $P$, whether (i)
$C_1$ is preferred over $C_2$, or (ii) $C_1$ and $C_2$ are seen as
equally good, or (iii) $C_2$ is preferred over $C_1$. In other words,
we set $f_{C_1,C_2}(P)$ to be $-1$, $0$ or $1$ depending on
$f_{k,s}(P)$ ranking $C_1$ lower than, equally to, or higher than $C_2$, respectively
($f_{k,s}$ can be viewed as the collection of rules $f_{C_1', C_2'}$,
one for each possible pair of committees $C_1'$ and $C_2'$).


\newcommand{\spcvote}[4]{v(#1 \rightarrow #2, #3 \rightarrow #4)}

\paragraph{Defining Distinguished Profiles.}
For each two committee positions $I_1$ and $I_2$ such that $|I_1 \cap
I_2| = |C_1 \cap C_2|=s$, we consider a single-vote voting situation
$\spcvote{C_1}{I_1}{C_2}{I_2}$, where $C_1$ and $C_2$ are ranked on
positions $I_1$ and $I_2$, respectively, and all the other candidates
are ranked arbitrarily, but in some fixed, predetermined order.

Let us consider two cases. First, let us assume that for each two
committee positions $I_1$ and $I_2$ such that $|I_1 \cap I_2| = s$, it
holds that $C_1$ is as good as $C_2$ relative to $\spcvote{C_1}{I_1}{C_2}{I_2}$,
i.e., $C_1 \: {=}\big[\spcvote{C_1}{I_1}{C_2}{I_2}\big] \: C_2$.
By Lemma~\ref{lemma:othersIrrelevant}, we infer that for each
single-vote voting situation $v$ we have $C_1 =_v C_2$ (because in any
vote the set of positions shared by $C_1$ and $C_2$ always has the
same cardinality $s$). Further, by consistency, we conclude that
$f_{C_1, C_2}$ is trivial, i.e., for every voting situation $P$ it
holds that $C_1 =_P C_2$. By neutrality, we get that $f_{k,s}$ is also
trivial (i.e., it declares equally good each two committees whose
intersection has $s$ candidates). Of course, in this case $f_{k, s}$
is a decision scoring rule (with trivial scoring function
$d_s(I_1,I_2) \equiv 0$).

If the above case does not hold, then there are some two committee
positions, $I_1^*$ and $I_2^*$, such that $|I_1^* \cap I_2^*| = s$ and
$C_1$ is not equivalent to $C_2$ relative to
$\spcvote{C_1}{I_1^*}{C_2}{I_2^*}$. Without loss of generality we
assume that:
\begin{equation}
\label{main_assumption}
C_1 \: {\succ}[\spcvote{C_1}{I_1^*}{C_2}{I_2^*}] \: C_2. 
\end{equation}
We note that, by neutrality, this implies:
\begin{equation}
\label{main_assumption_consequence}
C_2 \: {\succ}[\spcvote{C_2}{I_1^*}{C_1}{I_2^*}] \: C_1. 
\end{equation}

Let us fix any two such $I_1^*$ and $I_2^*$ for now. As we will see
throughout the proof, any choice of $I_1^*$ and
$I_2^*$ with the aforementioned property will suffice for our arguments.

For each two committee
positions $I_1$ and $I_2$ with $|I_1 \cap I_2| = |I_1^* \cap I_2^*| =
s$, and for each two nonnegative integers $x$ and $y$, we define the
following voting situation:
\[ 
P_{x(C_1 \rightarrow I_1, C_2 \rightarrow I_2)}^{y(C_1 \rightarrow
  I_1^*, C_2 \rightarrow I_2^*)}
= 
y \cdot \big( \spcvote{C_1}{I_1^*}{C_2}{I_2^*} \big)
+
x \cdot \big( \spcvote{C_1}{I_1}{C_2}{I_2} \big),
\] 
where there are $y$ voters that rank $C_1$ and $C_2$ on positions
$I_1^*$ and $I_2^*$, respectively, and there are $x$ voters that rank
$C_1$ and $C_2$ on positions $I_1$ and $I_2$, respectively.

\paragraph{Deriving the Components for the Decision Scoring
  Function for \boldmath{$f_{C_1,C_2}$.}}
We now proceed toward defining a decision scoring function for
$f_{k,s}$. To this end, we define the value $\Delta_{I_1, I_2}$ as:
\begin{align}
\label{eq:diffDefinition}
\Delta_{I_1, I_2}  =
  \begin{cases}
    \:\:\:\:\!\! \sup \Big\{\frac{y}{x} \colon C_2 \; {\succ}\Big[P_{x(C_1 \rightarrow I_2, C_2 \rightarrow I_1)}^{y(C_1 \rightarrow I_1^*, C_2 \rightarrow I_2^*)}\Big]\; C_1, \;\; x, y \in \naturals \Big\}       & \hspace{-2mm} \text{for } C_1 \; {\succ}\big[ \spcvote{C_1}{I_1}{C_2}{I_2} \big] \; C_2,\\[2mm]
    -\inf \Big\{\frac{y}{x} \colon C_2  \;{\succ}\Big[P_{x(C_1 \rightarrow I_2, C_2 \rightarrow I_1)}^{y(C_1 \rightarrow I_2^*, C_2 \rightarrow I_1^*)}\Big]\; C_1, \;\; x, y \in \naturals \Big\}       & \hspace{-2mm} \text{for } C_2 \; {\succ}\big[ \spcvote{C_1}{I_1}{C_2}{I_2}  \big] \; C_1,\\[2mm]
    \phantom{-} 0  & \hspace{-2mm} \text{for } C_1 \; {=}\big[ \spcvote{C_1}{I_1}{C_2}{I_2} \big] \; C_2 \textrm{.}\\ 
  \end{cases}
\end{align}
This definition certainly might not seem intuitive at first. However,
we will show that the values $\Delta_{I_1, I_2}$, for all possible
$I_1$ and $I_2$ with $|I_1\cap I_2|=s$, in essence, define a decision
scoring function for $f_{k,s}$. The next few lemmas should build an
intuition for the nature of these values.
However, let us first argue that the values $\Delta_{I_1, I_2}$ are
well defined. Let us fix some committee positions $I_1$ and $I_2$ (such
that $|I_1 \cap I_2| = |I_1^* \cap I_2^*|=s$). Due to continuity of $f_k$, we see
that the appropriate sets in Equation~\eqref{eq:diffDefinition} are
non-empty. For example, if $C_1 \:
{\succ}\big[\spcvote{C_1}{I_1}{C_2}{I_2}\big] \: C_2$ and, thus, $C_2
\: {\succ}\big[ \spcvote{C_1}{I_2}{C_2}{I_1}\big] \: C_1$, then
continuity of $f_k$ ensures that there exists (possibly large) $x$
such that $C_2 \:{\succ}\Big[P_{x(C_1 \rightarrow I_2, C_2 \rightarrow
  I_1)}^{(C_1 \rightarrow I_1^*, C_2 \rightarrow I_2^*)}\Big]\:
C_1$. This proves that the set from the first condition
of~\eqref{eq:diffDefinition} is nonempty. An analogous reasoning
proves the same fact for the set from the second condition
in~\eqref{eq:diffDefinition}. Further, we claim that the value
$\Delta_{I_1, I_2}$ is finite.  This is evident for the case where we
take the infimum over the set of positive rational numbers. For the
case where we take the supremum, this follows from
Lemma~\ref{lemma:reverseDelta}, below.
%

\begin{lemma}
\label{lemma:reverseDelta}
For each two committee positions $I_1$ and $I_2$ with $|I_1 \cap I_2|
= s$, it holds that $\Delta_{I_2, I_1} = -\Delta_{I_1, I_2}$.
\end{lemma}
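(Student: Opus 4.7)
The plan is to perform case analysis on the outcome at $\spcvote{C_1}{I_1}{C_2}{I_2}$, reduce the lemma to a sup-equals-inf identity on a one-parameter family of profiles, and establish that identity using consistency together with rational approximation. Let $\sigma$ denote the neutrality permutation that pairwise swaps the candidates of $C_1\setminus C_2$ with those of $C_2\setminus C_1$ and fixes every other candidate, so that $\sigma(C_1)=C_2$ and $\sigma(C_2)=C_1$. If $C_1 =_{\spcvote{C_1}{I_1}{C_2}{I_2}} C_2$, then neutrality via $\sigma$ gives $C_1 =_{\spcvote{C_1}{I_2}{C_2}{I_1}} C_2$, so both $\Delta_{I_1,I_2}$ and $\Delta_{I_2,I_1}$ equal $0$ by the third branch of~\eqref{eq:diffDefinition} and the claim is trivial. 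The cases $C_1\succ$ and $C_2\succ$ at $\spcvote{C_1}{I_1}{C_2}{I_2}$ are interchanged by $\sigma$, so it suffices to treat $C_1 \succ_{\spcvote{C_1}{I_1}{C_2}{I_2}} C_2$; under this assumption, neutrality gives $C_2 \succ_{\spcvote{C_1}{I_2}{C_2}{I_1}} C_1$, so $\Delta_{I_1,I_2}$ comes from the sup branch and $\Delta_{I_2,I_1}$ from the negated-inf branch of~\eqref{eq:diffDefinition}.

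Set $w^* := \spcvote{C_1}{I_1^*}{C_2}{I_2^*}$ and $w := \spcvote{C_1}{I_2}{C_2}{I_1}$. Applying $\sigma$ via neutrality to the profile inside the second branch of~\eqref{eq:diffDefinition} (with $I_1,I_2$ swapped, as required for $\Delta_{I_2,I_1}$), and using Lemma~\ref{lemma:othersIrrelevant} to absorb any reshuffling inside the four blocks $C_1\cap C_2$, $C_1\setminus C_2$, $C_2\setminus C_1$, and $A\setminus(C_1\cup C_2)$, the vote $\spcvote{C_1}{I_2^*}{C_2}{I_1^*}$ becomes equivalent to $w^*$, the vote $\spcvote{C_1}{I_1}{C_2}{I_2}$ becomes equivalent to $w$, and the outcome flips from $C_2\succ C_1$ to $C_1\succ C_2$. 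The claim $\Delta_{I_2,I_1} = -\Delta_{I_1,I_2}$ therefore reduces to
\[
\alpha := \sup\{y/x : C_2 \succ_{y w^* + x w} C_1,\ x,y\in\naturals\} \;=\; \inf\{y/x : C_1 \succ_{y w^* + x w} C_2,\ x,y\in\naturals\} =: \beta,
\]
where $C_1 \succ_{w^*} C_2$ by~\eqref{main_assumption} and $C_2 \succ_{w} C_1$ by neutrality.

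For $\alpha \leq \beta$, suppose toward a contradiction that witnesses $(y_1,x_1)$ of $C_2\succ$ and $(y_2,x_2)$ of $C_1\succ$ satisfy $y_1 x_2 > x_1 y_2$. Scale the first by $x_2$ and the second by $x_1$ using repeated consistency; both scaled profiles have $w$-coefficient $x_1 x_2$, while the first has $x_2 y_1 - x_1 y_2 > 0$ more copies of $w^*$ than the second. View the first scaled profile as the second plus that many extra copies of $w^*$; since $C_1\succ_{w^*}C_2$, consistency forces $C_1\succ$ on the first scaled profile, contradicting $C_2\succ$ there. For $\alpha\geq \beta$, if $\alpha < r < \beta$ for some rational $r = p/q$ in lowest terms, then for every $n\in\naturals$ the profile $np\cdot w^* + nq\cdot w$ is neither $C_2$-winning (as $y/x = r > \alpha$) nor $C_1$-winning (as $y/x = r < \beta$), hence $C_1=C_2$ on it; adding a single $w^*$ and invoking consistency yields $C_1 \succ_{(np+1)w^* + nq\cdot w} C_2$, so $\beta \leq (np+1)/(nq)$, and letting $n\to\infty$ forces $\beta \leq r$, a contradiction.

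The main obstacle is the translation step: one must verify carefully that applying $\sigma$ to the specific profiles in the second branch of~\eqref{eq:diffDefinition} and then cleaning up the intra-block reorderings via Lemma~\ref{lemma:othersIrrelevant} really produces the canonical profiles $y w^* + x w$ with a flipped outcome, so that the two $\Delta$-values are genuinely comparing the same one-parameter family. Once this bookkeeping is settled, the identity $\alpha = \beta$ is a clean two-sided application of consistency, with continuity entering only implicitly via the nonemptiness of the sets defining $\alpha$ and $\beta$ (already established in the text preceding the lemma).
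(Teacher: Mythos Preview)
Your proof is correct and takes a genuinely different route from the paper's. The paper keeps the two profile families separate: it works directly with the sets
\[
U=\Bigl\{\tfrac{y}{x}: C_2\succ\bigl[P_{x(C_1\to I_2,C_2\to I_1)}^{y(C_1\to I_1^*,C_2\to I_2^*)}\bigr]C_1\Bigr\},
\qquad
L=\Bigl\{\tfrac{y}{x}: C_2\succ\bigl[P_{x(C_1\to I_1,C_2\to I_2)}^{y(C_1\to I_2^*,C_2\to I_1^*)}\bigr]C_1\Bigr\},
\]
and proves $\sup U=\inf L$ by voter-counting in carefully weighted sums of profiles from the two families, invoking neutrality inside the argument to cancel symmetric blocks of voters. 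For the lower bound it takes two rationals $\tfrac{y}{x}<\tfrac{y'}{x'}$ in a hypothetical gap, shows both force equality, and combines them to reach a contradiction. You instead spend neutrality (via $\sigma$) and Lemma~\ref{lemma:othersIrrelevant} once at the outset to rewrite both $\Delta$-values over the single family $yw^*+xw$, after which the identity $\alpha=\beta$ is pure consistency: a scaling argument for $\alpha\le\beta$, and a one-rational limiting argument for $\alpha\ge\beta$. Your reduction is cleaner and isolates where each axiom is used; the paper's version avoids the upfront translation step but pays for it with more intricate cancellation bookkeeping in mixed profiles.
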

\begin{proof}
  We assume, without loss of generality, that $C_1 \; {\succ}\big[
  \spcvote{C_1}{I_1}{C_2}{I_2} \big] \; C_2$.\footnote{This assumption
    is without loss of generality because the condition from the
    statement of the lemma, $\Delta_{I_2, I_1} = -\Delta_{I_1, I_2}$,
    is symmetric; if it held that $C_2 \;
    {\succ}\big[\spcvote{C_1}{I_1}{C_2}{I_2}\big] \; C_1$ then we
    could simply swap $I_2$ and $I_1$, and we would prove that
    $\Delta_{I_1, I_2} = -\Delta_{I_2, I_1}$.}  Let us consider two
  sets:
  \begin{align*}
    U = \Big\{\frac{y}{x} \colon C_2 \:{\succ}\Big[P_{x(C_1 \rightarrow I_2,
      C_2 \rightarrow I_1)}^{y(C_1 \rightarrow I_1^*, C_2 \rightarrow
      I_2^*)}\Big]\: C_1, \ \ x, y \in \naturals \Big\}
  \end{align*}
  ($U$ is the set that we take supremum of in
  Equation~\eqref{eq:diffDefinition}), and:
  \begin{align*}
    L = \Big\{\frac{y}{x} \colon C_2 \:{\succ}\Big[P_{x(C_1 \rightarrow I_1,
      C_2 \rightarrow I_2)}^{y(C_1 \rightarrow I_2^*, C_2 \rightarrow
      I_1^*)}\Big]\: C_1, \ \ x, y \in \naturals \Big\}
  \end{align*}
  (thus, $L$ is the set that we take infimum of in
  Equation~\eqref{eq:diffDefinition}, for $\Delta_{I_2,I_1}$). We will
  show that $\sup U = \inf L$. First, we show that $\sup U \leq \inf
  L$. For the sake of contradiction, let us assume that this is not
  the case, i.e., that there exists $\frac{y}{x} \in U$ and
  $\frac{y'}{x'} \in L$ such that $\frac{y}{x} > \frac{y'}{x'}$.
  Since $\frac{y}{x} \in U$ and $\frac{y'}{x'} \in L$, we get:
  \[
  C_2 \:{\succ}\Big[P_{x(C_1 \rightarrow I_2, C_2 \rightarrow
    I_1)}^{y(C_1 \rightarrow I_1^*, C_2 \rightarrow I_2^*)}\Big]\: C_1
  \quad\text{and}\quad
  C_2 \:{\succ}\Big[P_{x'(C_1 \rightarrow I_1, C_2 \rightarrow
    I_2)}^{y'(C_1 \rightarrow I_2^*, C_2 \rightarrow I_1^*)}\Big]\:
  C_1.
  \]
  Let us consider the voting situation:
  \begin{align*}
    S = 
    y'\cdot P_{x(C_1 \rightarrow I_2, C_2 \rightarrow I_1)}^{y(C_1
      \rightarrow I_1^*, C_2 \rightarrow I_2^*)} +
    y \cdot P_{x'(C_1 \rightarrow I_1, C_2 \rightarrow I_2)}^{y'(C_1
      \rightarrow I_2^*, C_2 \rightarrow I_1^*)} \textrm{.}
  \end{align*}
  By consistency, we have that $C_2 \succ_S C_1$.  However, let us
  count the number of voters in $S$ that rank committees $C_1$ and
  $C_2$ on particular positions. There are $yy'$ voters that rank
  $C_1$ and $C_2$ on positions $I_1^*$ and $I_2^*$, respectively, and
  the same number $yy'$ of voters that rank $C_1$ and $C_2$ on
  positions $I_2^*$ and $I_1^*$. Due to neutrality and consistency,
  these voters cancel each other out. (Formally, if $S'$ were a voting
  situation limited to these voters only, we would have $C_1 =_{S'}
  C_2$. This is so due to the symmetry of $f_k$ and the fact that for
  any permutation $\sigma$ that swaps all the members of $C_1
  \setminus C_2$ with all the members of $C_2\setminus C_1$, we have
  $S' = \sigma(S')$.)  Next, there are $x'y$ voters that rank $C_1$
  and $C_2$ on positions $I_1$, and $I_2$, and $xy'$ of voters that
  rank $C_1$ and $C_2$ on positions $I_2$ and $I_1$, respectively.
  Since we assumed that $\frac{y}{x} > \frac{y'}{x'}$, we have that
  $x'y > xy'$.  So, $xy'$ voters from each of the two aforementioned
  groups cancel each other out (in the same sense as above), and we
  are left with considering $x'y-xy' > 0$ voters that rank $C_1$ and
  $C_2$ on positions $I_1$ and $I_2$. Thus, we conclude that $C_1
  \succ_S C_2$.  However, this contradicts the
  fact that $C_2 \succ_S C_1$ and we conclude that $\sup U \leq \inf
  L$.
  

  Next, we show that $\sup U \geq \inf L$. To this end, we will show
  that there are no values $\frac{y}{x}$ and $\frac{y'}{x'}$ such that
  $\sup U < \frac{y}{x} < \frac{y'}{x'} < \inf L$. Assume on the contrary  that this is not the case and that such
  values exist. It must be the case that $\frac{y}{x}$ is not in $U$
  and, so, we have:
  \begin{equation}\label{l2:i}
     C_1 \:{\succeq}\Big[P_{x(C_1 \rightarrow I_2, C_2 \rightarrow
    I_1)}^{y(C_1 \rightarrow I_1^*, C_2 \rightarrow I_2^*)}\Big]\: C_2.
  \end{equation}
  Since $\frac{y}{x}$ also cannot be in $L$, we have:
  \begin{equation}\label{l2:ii}
    C_1 \:{\succeq}\Big[P_{x(C_1 \rightarrow I_1, C_2 \rightarrow
    I_2)}^{y(C_1 \rightarrow I_2^*, C_2 \rightarrow I_1^*)}\Big]\: C_2.
  \end{equation}

  By neutrality (applied to~\eqref{l2:ii}, and any permutation $\sigma$ that swaps candidates from
  $C_1 \setminus C_2$ with those from $C_2 \setminus C_1$), we have that:
  \begin{equation}\label{l2:iii}
    C_2 \:{\succeq}\Big[P_{x(C_1 \rightarrow
      I_2, C_2 \rightarrow I_1)}^{y(C_1 \rightarrow I_1^*, C_2
      \rightarrow I_2^*)}\Big]\: C_1.
  \end{equation}
  By putting together Equations~\eqref{l2:i} and~\eqref{l2:iii}, and
  by noting that the same reasoning can be repeated for
  $\frac{y'}{x'}$ instead of $\frac{y}{x}$, we conclude that it must
  be the case that:
  \begin{equation}\label{l2:added}
  C_1 \:{=}\Big[P_{x(C_1 \rightarrow I_2, C_2 \rightarrow I_1)}^{y(C_1
    \rightarrow I_1^*, C_2 \rightarrow I_2^*)}\Big]\: C_2
  \quad \text{and} \quad
  C_1 \:{=}\Big[P_{x'(C_1 \rightarrow I_2, C_2 \rightarrow
    I_1)}^{y'(C_1 \rightarrow I_1^*, C_2 \rightarrow I_2^*)}\Big]\:
  C_2. 
  \end{equation}
  After applying neutrality
  to the first voting situation  in~\eqref{l2:added} (and copying the second
  part of~\eqref{l2:added})
  we obtain:
  \begin{equation}\label{l2:iv}
  C_1 \:{=}\Big[P_{x(C_1 \rightarrow I_1, C_2 \rightarrow I_2)}^{y(C_1
    \rightarrow I_2^*, C_2 \rightarrow I_1^*)}\Big]\: C_2
  \quad \text{and} \quad
  C_1 \:{=}\Big[P_{x'(C_1 \rightarrow I_2, C_2 \rightarrow
    I_1)}^{y'(C_1 \rightarrow I_1^*, C_2 \rightarrow I_2^*)}\Big]\:
  C_2.
  \end{equation}
  We now define voting situation: 
  \begin{align*}
    Q = 
      x' \cdot P_{x(C_1 \rightarrow I_1, C_2 \rightarrow I_2)}^{y(C_1
      \rightarrow I_2^*, C_2 \rightarrow I_1^*)} 
      + 
      x \cdot P_{x'(C_1
      \rightarrow I_2, C_2 \rightarrow I_1)}^{y'(C_1 \rightarrow
      I_1^*, C_2 \rightarrow I_2^*)} \textrm{.}
  \end{align*}
  From Equation~\eqref{l2:iv} (and consistency), we get that $C_1 =_Q
  C_2$. In $Q$ there is the same number of voters who rank $C_1$ and
  $C_2$ on positions $I_1$ and $I_2$ as those that rank them on
  positions $I_2$ and $I_1$, respectively (so these voters cancel each
  other out). On the other hand, there are $yx'$ voters who rank $C_1$
  and $C_2$ on positions $I_2^*$ and $I_1^*$, and $y'x$ voters who
  rank these committees on positions $I_1^*$ and $I_2^*$,
  respectively. Since $yx' < y'x$, we get that $C_1 \succ_Q C_2$,
  which contradicts our earlier observation
  that $C_1 =_Q C_2$. We conclude that it must be the case that $\sup
  U \geq \inf L$.

  Finally, since we have shown that $\sup U \leq \inf L$ and $\sup U
  \geq \inf L$, we have that $\sup U = \inf L$. This proves that
  $\Delta_{I_2,I_1} = -\Delta_{I_1,I_2}$.
\end{proof}

The next lemma shows that $\Delta_{I_1,I_2}$ provides a threshold
value for proportions of voters in distinguished profiles with respect
to the relation between $C_1$ and $C_2$.

\begin{lemma}\label{lemma:deltaDefinition}
  Let $I_1$ and $I_2$ be two committee positions such that $|I_1 \cap
  I_2| = s$, and let $x,y$ be two positive
  integers.  The following two implications hold:
  \begin{enumerate}
  \item if $C_1 \; {\succ}\big[\spcvote{C_1}{I_1}{C_2}{I_2}\big] \;
    C_2$ and $\displaystyle \frac{y}{x} < \Delta_{I_1, I_2}$, then $C_2
    \;{\succ}\Big[P_{x(C_1 \rightarrow I_2, C_2 \rightarrow
      I_1)}^{y(C_1 \rightarrow I_1^*, C_2 \rightarrow I_2^*)}\Big]\;
    C_1$,
  \item if $C_2 \; {\succ}\big[ \spcvote{C_1}{I_1}{C_2}{I_2} \big] \;
    C_1$ and $\displaystyle \frac{y}{x} > -\Delta_{I_1, I_2}$, then $C_2
    \;{\succ}\Big[P_{x(C_1 \rightarrow I_2, C_2 \rightarrow
      I_1)}^{y(C_1 \rightarrow I_2^*, C_2 \rightarrow I_1^*)}\Big]\;
    C_1$.
  \end{enumerate}
\end{lemma}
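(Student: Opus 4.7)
The plan is to prove both implications by contradiction, using the defining extremal property of $\Delta_{I_1, I_2}$ to extract a rational witness from the corresponding set, and then combining this witness with a suitably scaled copy of the supposed counterexample via consistency.

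For the first implication, assume the hypotheses and, for contradiction, suppose that $C_1 \;{\succeq}\big[P^{y(C_1 \rightarrow I_1^*, C_2 \rightarrow I_2^*)}_{x(C_1 \rightarrow I_2, C_2 \rightarrow I_1)}\big]\; C_2$. Since $y/x < \Delta_{I_1, I_2}$ equals the supremum defining $\Delta_{I_1, I_2}$, there exist positive integers $x', y'$ with $y/x < y'/x'$ such that $C_2 \;{\succ}\big[P^{y'(C_1 \rightarrow I_1^*, C_2 \rightarrow I_2^*)}_{x'(C_1 \rightarrow I_2, C_2 \rightarrow I_1)}\big]\; C_1$. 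Scale the first (supposed) relation by $x'$ and the second (witness) relation by $x$; both remain true by iterated consistency. The resulting profiles share the common base-voter count $xx' = x'x$ at positions $(C_1 \rightarrow I_2, C_2 \rightarrow I_1)$, while the scaled witness has $xy' - x'y > 0$ more voters at $(C_1 \rightarrow I_1^*, C_2 \rightarrow I_2^*)$ than the scaled supposition. Each such surplus voter, treated as a single-vote profile, yields $C_1 \succ C_2$ by~\eqref{main_assumption}. Applying the strict clause of consistency iteratively to the scaled supposition ($C_1 \succeq$) and these surplus voters (each giving $C_1 \succ$) upgrades the relation to $C_1 \succ C_2$ on the scaled witness, contradicting the known $C_2 \succ C_1$ there.

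The second implication is structurally symmetric, with the roles of base and boost voters reversed in their preferences. Under the hypothesis $C_2 \;{\succ}\big[\spcvote{C_1}{I_1}{C_2}{I_2}\big]\; C_1$, neutrality implies that the base voter at $(C_1 \rightarrow I_2, C_2 \rightarrow I_1)$ favors $C_1$, while the boost voter at $(C_1 \rightarrow I_2^*, C_2 \rightarrow I_1^*)$ favors $C_2$ by~\eqref{main_assumption_consequence}. Since $y/x > -\Delta_{I_1, I_2}$ equals the infimum defining $-\Delta_{I_1, I_2}$, pick a rational witness $y'/x' < y/x$ from the corresponding set and scale analogously: the scaled supposed counterexample now has $x'y - xy' > 0$ excess boost voters over the scaled witness, each giving $C_2 \succ C_1$. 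Consistency propagates the strict $C_2 \succ C_1$ relation from the scaled witness to the scaled counterexample, contradicting the assumed $C_1 \succeq C_2$ there.

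The main obstacle is purely combinatorial bookkeeping: choosing scaling factors so that the base-voter counts cancel exactly while the boost-voter surplus is positive and in the direction favoring the desired committee, and correctly identifying which single-vote profiles favor which committee via~\eqref{main_assumption}, \eqref{main_assumption_consequence}, and neutrality. The only subtle use of the axioms is the iterated upgrade from $\succeq$ to $\succ$ via the strict clause of consistency, which absorbs the strict-preference surplus voters one at a time.
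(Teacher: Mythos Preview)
Your argument is correct and in fact somewhat more direct than the paper's. Both proofs begin the same way: assume the conclusion fails, and use the extremal definition of $\Delta_{I_1,I_2}$ to extract a rational witness $y'/x'$ on the appropriate side of $y/x$. The divergence is in how the contradiction is obtained. The paper applies neutrality to the supposed counterexample to flip the roles of $C_1$ and $C_2$, combines this flipped profile with the witness, and then deduces $C_2 \succ C_1$ on the sum by consistency; it then re-derives $C_1 \succ C_2$ on that same sum by a voter-counting argument in which opposite blocks of voters ``cancel out'' (a step that itself rests on symmetry). Your approach instead scales the supposed counterexample and the witness by $x'$ and $x$ respectively so that their $(C_1\!\to\! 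I_2,\,C_2\!\to\! I_1)$ blocks coincide exactly; since the distinguished single-vote profiles $v(\cdot)$ are fixed, one scaled profile is then literally the other plus $|xy'-x'y|$ copies of the relevant starred vote, and a single application of the strict clause of consistency (together with~\eqref{main_assumption} or~\eqref{main_assumption_consequence}) yields the contradiction. Your route therefore avoids both the neutrality flip and the cancellation step, at the cost of relying on the fact that the profiles $P^{\,\cdot}_{\,\cdot}$ are built from the same fixed votes so that the containment of voting situations is exact. One small remark: in your discussion of Part~2 you note via neutrality that the base voter at $(C_1\!\to\! I_2,\,C_2\!\to\! I_1)$ favors $C_1$, but your actual contradiction never uses this fact---only~\eqref{main_assumption_consequence} and consistency are needed.
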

\begin{proof}
  Let us start with proving the first implication.  Assume that $C_1
  \; {\succ}\big[\spcvote{C_1}{I_1}{C_2}{I_2}\big] \; C_2$, and, for
  the sake of contradiction, that:
  \begin{equation}\label{l3:i}
  C_1\: \:{\succeq}\left[P_{x(C_1 \rightarrow I_2, C_2 \rightarrow
    I_1)}^{y(C_1 \rightarrow I_1^*, C_2 \rightarrow I_2^*)}\right]\: \:  C_2.
  \end{equation}
  It follows from the definition of $\Delta_{I_1, I_2}$ that there exist
  two numbers $x', y' \in \naturals$, such that $\frac{y}{x} <
  \frac{y'}{x'} \leq \Delta_{I_1,I_2}$ and:
  \begin{equation}\label{l3:ii}
    C_2 \:{\succ}\left[P_{x'(C_1 \rightarrow
      I_2, C_2 \rightarrow I_1)}^{y'(C_1 \rightarrow I_1^*, C_2
      \rightarrow I_2^*)}\right]\: C_1. 
  \end{equation}
  Let us consider a voting situation that is obtained from $P_{x(C_1
    \rightarrow I_2, C_2 \rightarrow I_1)}^{y(C_1 \rightarrow I_1^*,
    C_2 \rightarrow I_2^*)}$ (i.e., from the voting situation that
  appears in~\eqref{l3:i}) by swapping positions of $C_1$ and $C_2$,
  i.e., let us consider voting situation $P_{x(C_1 \rightarrow I_1,
    C_2 \rightarrow I_2)}^{y(C_1 \rightarrow I_2^*, C_2 \rightarrow
    I_1^*)}$. Naturally, in such a voting situation $C_2$ is weakly
  preferred over $C_1$:
  \begin{equation}\label{l3:iii}
    C_2 \:{\succeq}\left[  P_{x(C_1 \rightarrow I_1,
      C_2 \rightarrow I_2)}^{y(C_1 \rightarrow I_2^*, C_2 \rightarrow
      I_1^*)}\right]\: C_1. 
  \end{equation}
  By Equations~\eqref{l3:iii}, \eqref{l3:ii}, and consistency of
  $f_k$, we observe that in the voting situation:
  \[ P = x' \cdot P_{x(C_1 \rightarrow I_1, C_2
    \rightarrow I_2)}^{y(C_1 \rightarrow I_2^*, C_2 \rightarrow
    I_1^*)} + x \cdot P_{x'(C_1 \rightarrow I_2, C_2 \rightarrow
    I_1)}^{y'(C_1 \rightarrow I_1^*, C_2 \rightarrow I_2^*)}
  \]
  committee $C_2$ is strictly preferred over $C_1$ (i.e., $C_2 \succ_P
  C_1$). Let us now count the voters in $P$.  There are $xx'$ of them who put
  $C_1$ and $C_2$ on positions $I_1$ and $I_2$, respectively, and
  there are $xx'$ voters who put $C_1$ and $C_2$ on positions $I_2$
  and $I_1$, respectively. By the same arguments as used in the proof
  of Lemma~\ref{lemma:reverseDelta}, these voters cancel each other
  out. Next, there are $y'x$ voters who put $C_1$ and $C_2$ on
  positions $I_1^*$ and $I_2^*$, respectively, and  $x'y$
  voters who put $C_1$ and $C_2$ on positions $I_2^*$ and $I_1^*$,
  respectively. Since $y'x > yx'$. we conclude that $C_1 \succ_P C_2$
  (again, using the same reasoning as we used in
  Lemma~\ref{lemma:reverseDelta} for similar arguments).  This is a
  contradiction with our earlier observation that $C_2 \succ_P C_1$.
  This completes the proof of the first part of the lemma.


  The proof of the second implication is similar and we provide it for
  the sake of completeness. We assume that $C_2 \; {\succ}\big[
  \spcvote{C_1}{I_1}{C_2}{I_2} \big] \; C_1$ and, for the sake of
  contradiction, that:
  \begin{equation}\label{l3:iv}
    C_1
    \;{\succeq}\left[P_{x(C_1 \rightarrow I_2, C_2 \rightarrow
        I_1)}^{y(C_1 \rightarrow I_2^*, C_2 \rightarrow I_1^*)}\right]\;
    C_2. 
  \end{equation}
  From the definition of $\Delta_{I_1, I_2}$ we know that there must
  be two numbers $x', y' \in \naturals$, such that $\frac{y}{x} >
  \frac{y'}{x'} \geq -\Delta_{I_1,I_2}$ and:
  \begin{equation}\label{l3:v}
    C_2 \:{\succ}\big[P_{x'(C_1 \rightarrow I_2, C_2
      \rightarrow I_1)}^{y'(C_1 \rightarrow I_2^*, C_2 \rightarrow
      I_1^*)}\big]\: C_1.
  \end{equation}
  If we swap the positions of committees $C_1$ and $C_2$ in the voting
  situation used in Equation~\eqref{l3:iv}, then by neutrality we have that:
  \begin{equation}\label{l3:vi}
    C_2 \;{\succeq}\left[P_{x(C_1 \rightarrow I_1, C_2 \rightarrow I_2)}^{y(C_1 \rightarrow I_1^*,
        C_2 \rightarrow I_2^*)}\right]\; C_1
  \end{equation}
  We now form voting situation:
  \[
  Q = x' \cdot P_{x(C_1 \rightarrow I_1, C_2 \rightarrow I_2)}^{y(C_1
    \rightarrow I_1^*, C_2 \rightarrow I_2^*)} + x \cdot P_{x'(C_1
    \rightarrow I_2, C_2 \rightarrow I_1)}^{y'(C_1 \rightarrow I_2^*,
    C_2 \rightarrow I_1^*)}
  \]
  By Equations~\eqref{l3:vi}, \eqref{l3:v}, and consistency of $f_k$
  we have that $C_2 \succ_Q C_1$. However, counting voters again leads
  to a contradiction. Indeed, we have $xx'$ voters who put $C_1$ and
  $C_2$ on positions $I_1$ and
  $I_2$, respectively, and $xx'$ voters who put $C_1$ and $C_2$ on
  positions $I_2$ and $I_1$, respectively. These voters cancel each
  other out. Then we have $y'x$ voters who put $C_1$ and $C_2$ on
  positions $I_2^*$ and $I_1^*$, respectively, and we have $x'y$
  voters who put $C_1$ and $C_2$ on positions $I_1^*$ and $I_2^*$,
  respectively.  Since $y'x < x'y$, we have that $C_1 \succ_Q C_2$,
  which is a contradiction with our previous conclusion that $C_2
  \succ_Q C_1$. This proves the second part of the lemma.
\end{proof}

\paragraph{Putting Together the Decision Scoring Function for
  \boldmath{$f_{C_1,C_2}$}.}
We are ready to define a decision scoring function $d_s$ for
$f_{C_1,C_2}$. For any two committee positions $I_1$ and $I_2$, with
$|I_1\cap I_2|=s$, we set:
\[
d_s(I_1, I_2) = \Delta_{I_1, I_2} \textrm{.}
\]
We note that our $d_s$
formally depends on the choice of $I_1^*$ and $I_2^*$, however this is
not a problem. We simple need a decision scoring function that behaves
correctly and each choice of $I_1^*$ and $I_2^*$ would give us one.
Intuitively, we can think of $d_s(I_1,I_2)$ as an (oriented) distance
between $I_1$ and $I_2$. The next lemma shows that we treat the
distance between $I_1^*$ and $I_2^*$ as a sort of gauge to measure
distances between other positions.

\begin{lemma}
  \label{unit_distance}
  It holds that $\Delta_{I_1^*,  I_2^*} = 1$.
\end{lemma}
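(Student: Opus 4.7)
The plan is to evaluate the set in the definition of $\Delta_{I_1^*,I_2^*}$ by direct inspection. By Equation~\eqref{main_assumption}, $C_1\;{\succ}\big[\spcvote{C_1}{I_1^*}{C_2}{I_2^*}\big]\;C_2$, so the first branch of Equation~\eqref{eq:diffDefinition} applies and
\[ \Delta_{I_1^*,I_2^*} \;=\; \sup\Big\{\tfrac{y}{x}\colon C_2\;{\succ}\big[P^*_{x,y}\big]\;C_1,\ x,y\in\naturals\Big\}, \]
where I abbreviate $P^*_{x,y} = y\cdot \spcvote{C_1}{I_1^*}{C_2}{I_2^*} + x\cdot \spcvote{C_1}{I_2^*}{C_2}{I_1^*}$. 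I would show that this set is exactly $\{y/x : y,x\in\naturals,\; y<x\}$, so that its supremum is $1$.

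The crux is the balanced case $y=x$. Let $\sigma$ be any permutation swapping the members of $C_1\setminus C_2$ with those of $C_2\setminus C_1$ and fixing the rest; then $\sigma(C_1)=C_2$ and $\sigma(C_2)=C_1$. Inside a single vote, $\sigma(\spcvote{C_1}{I_1^*}{C_2}{I_2^*})$ and $\spcvote{C_1}{I_2^*}{C_2}{I_1^*}$ place the very same subset of candidates into each cell of the partition of positions induced by $I_1^*$ and $I_2^*$, so they differ only by swaps internal to $C_1\cap C_2$, to $C_1\setminus C_2$, to $C_2\setminus C_1$, or to the complement of $C_1\cup C_2$. Lemma~\ref{lemma:othersIrrelevant} guarantees that such swaps do not affect the $C_1$-versus-$C_2$ verdict, so $\sigma(P^*_{x,x})$ is indistinguishable from $P^*_{x,x}$ for that comparison. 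Meanwhile neutrality yields $f_k(\sigma(P^*_{x,x}))(C_1,C_2) = f_k(P^*_{x,x})(C_2,C_1) = -f_k(P^*_{x,x})(C_1,C_2)$, and combining the two forces $C_1 =_{P^*_{x,x}} C_2$, so $y/x=1$ is excluded from the set.

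The remaining two cases now follow by splitting and invoking consistency. For $y<x$, write $P^*_{x,y} = P^*_{y,y} + (x-y)\cdot\spcvote{C_1}{I_2^*}{C_2}{I_1^*}$; the first summand gives a tie by the balanced case, while the residual is $(x-y)$ copies of a vote on which Equation~\eqref{main_assumption_consequence} gives $C_2\succ C_1$. Consistency then promotes this to $C_2 \succ_{P^*_{x,y}} C_1$, placing $y/x$ inside the set. Symmetrically, when $y>x$, the residual $(y-x)\cdot\spcvote{C_1}{I_1^*}{C_2}{I_2^*}$ together with Equation~\eqref{main_assumption} and consistency gives $C_1 \succ_{P^*_{x,y}} C_2$, excluding $y/x\ge 1$. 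Hence the set equals $\{y/x : y<x\}$, and its supremum is $1$.

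The main obstacle I anticipate is purely bookkeeping: the ``fixed predetermined order'' used to define templates like $\spcvote{C_1}{I_1}{C_2}{I_2}$ means that $\sigma(\spcvote{C_1}{I_1^*}{C_2}{I_2^*})$ and $\spcvote{C_1}{I_2^*}{C_2}{I_1^*}$ are not literally identical as votes, only equivalent up to the four types of swaps handled by Lemma~\ref{lemma:othersIrrelevant}. Keeping this equivalence cleanly separated from literal equality in the write-up is the only delicate point; once that is done, the argument is a short combination of neutrality and consistency.
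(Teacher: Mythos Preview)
Your proposal is correct and follows essentially the same approach as the paper's proof: establish the tie at the balanced profile $P^*_{x,x}$ via the swap permutation $\sigma$, then use consistency together with Equations~\eqref{main_assumption} and~\eqref{main_assumption_consequence} to handle the cases $y<x$ and $y>x$, concluding that the set equals $\{y/x:y<x\}$ with supremum $1$. Your treatment is in fact more careful than the paper's on the one point you flag as delicate: the paper tacitly asserts the cancellation $C_1 =_{P^*_{x,x}} C_2$ by symmetry (as it does in the proof of Lemma~\ref{lemma:reverseDelta}), whereas you correctly note that $\sigma(P^*_{x,x})$ and $P^*_{x,x}$ need not be literally equal as voting situations and invoke Lemma~\ref{lemma:othersIrrelevant} to bridge the gap.
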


  
\begin{proof}
  We note that for each positive integer $z$, we have $C_1 \: {=}\big[
  P_{z(C_1 \rightarrow I_2^*, C_2 \rightarrow I_1^*)}^{z(C_1
    \rightarrow I_1^*, C_2 \rightarrow I_2^*)} \big] \: C_2$. Further,
  due to consistency of $f_k$ (used as in
  Remark~\ref{rem:using-consistency}) and
  by the choice of $I_1^*$ and $I_2^*$ (recall Equations
  \eqref{main_assumption} and \eqref{main_assumption_consequence}), we
  observe that $C_1 \: {\succ}\big[ P_{x(C_1 \rightarrow I_2^*, C_2
    \rightarrow I_1^*)}^{y(C_1 \rightarrow I_1^*, C_2 \rightarrow
    I_2^*)} \big] \: C_2$ whenever $y > x$ and $C_2 \: {\succ}\big[
  P_{x(C_1 \rightarrow I_2^*, C_2 \rightarrow I_1^*)}^{y(C_1
    \rightarrow I_1^*, C_2 \rightarrow I_2^*)} \big] \: C_1$ whenever
  $y < x$. We conclude that $\Delta_{I_1^*,I_2^*} = \sup\left\{
    \frac{y}{x} \colon y < x, \text{ for }x,y \in \naturals_+\right\}
  = 1.$
\end{proof}

The next lemma shows that $d_s$ is a decision scoring function for
$f_{C_1,C_2}$. Based on this result, we will later argue that it works
for all pairs of committees, not only for $(C_1,C_2)$, and hence that
it is a decision scoring function for $f_{k,s}$.

\begin{lemma}\label{thm:nontransitive}
  Let $C_1,C_2$ and $d_s$ be as defined in the
  above discussion.  Then for each voting situation $P$ the following
  three implications hold:
  (i) if $\pairscorefull{d_s}{C_1}{C_2}{P} > 0$, then $C_1 \succ_P C_2$;
  (ii) if $\pairscorefull{d_s}{C_1}{C_2}{P} = 0$, then  $C_1 =_P C_2$;
  (iii) if $\pairscorefull{d_s}{C_1}{C_2}{P} < 0$, then $C_2 \succ_P C_1$.
\end{lemma}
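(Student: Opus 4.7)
The plan is to prove part (i) directly, obtain part (iii) from neutrality applied to part (i), and derive part (ii) by a short continuity argument. As a preliminary simplification, Lemma~\ref{lemma:othersIrrelevant} lets us assume that every vote in $P$ is a distinguished vote $\spcvote{C_1}{I_1}{C_2}{I_2}$ with $|I_1\cap I_2|=s$, so $P = \sum_j n_j v_j$ where each $v_j = \spcvote{C_1}{I_1^j}{C_2}{I_2^j}$ contributes $\Delta_j := \Delta_{I_1^j, I_2^j}$ to $\pairscorefull{d_s}{C_1}{C_2}{P}$.

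For part (i), set $S = \pairscorefull{d_s}{C_1}{C_2}{P} > 0$ and fix a large integer $M$. For each $j$ with $\Delta_j > 0$ pick $N_j^+ \in \naturals$ with $N_j^+ < M n_j \Delta_j$; for each $j$ with $\Delta_j < 0$ pick $N_j^- \in \naturals$ with $N_j^- > M n_j |\Delta_j|$. Applying Lemma~\ref{lemma:deltaDefinition} and then neutrality via the permutation $\sigma$ swapping $C_1 \setminus C_2$ with $C_2 \setminus C_1$ shows that each sub-voting-situation $M n_j v_j + N_j^+ \cdot \spcvote{C_1}{I_2^*}{C_2}{I_1^*}$ (for $\Delta_j>0$) and $M n_j v_j + N_j^- \cdot \spcvote{C_1}{I_1^*}{C_2}{I_2^*}$ (for $\Delta_j<0$) satisfies $C_1 \succ C_2$; votes with $\Delta_j=0$ are balanced by definition of $\Delta_j$. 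Consistency aggregates these to $C_1 \succ_{MP+Q} C_2$, where $Q$ contains $N_{\textrm{pos}}=\sum_j N_j^+$ copies of $\spcvote{C_1}{I_2^*}{C_2}{I_1^*}$ and $N_{\textrm{neg}}=\sum_j N_j^-$ copies of $\spcvote{C_1}{I_1^*}{C_2}{I_2^*}$. The individual bounds yield $N_{\textrm{pos}} < MS^+$ and $N_{\textrm{neg}} > M|S^-|$, where $S^\pm$ denote the positive and negative parts of $S$; since $S > 0$ gives $S^+ > |S^-|$, for $M$ large enough we can further arrange $N_{\textrm{pos}} \geq N_{\textrm{neg}}$. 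With this choice, pairing the reference and reverse-reference votes (a pair that is $\sigma$-symmetric, hence balanced) and observing that each leftover reverse-reference vote satisfies $C_2 \succ C_1$ gives $C_2 \succeq_Q C_1$. A minor extension of Remark~\ref{rem:using-consistency}---if $C_1 \succ_{P'+Q'} C_2$ and $C_2 \succeq_{Q'} C_1$, then $C_1 \succ_{P'} C_2$, proved by the same direct contradiction via the $\succeq$-form of consistency---strips $Q$ off and yields $C_1 \succ_{MP} C_2$; a final application of consistency promotes this to $C_1 \succ_P C_2$.

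Part (iii) follows from part (i) and neutrality: the permutation $\sigma$ sends $P$ to $\sigma(P)$, whose score is $-S > 0$ by Lemma~\ref{lemma:reverseDelta}, so part (i) applied to $\sigma(P)$ gives $C_1 \succ_{\sigma(P)} C_2$, which neutrality translates back to $C_2 \succ_P C_1$. For part (ii), assume for contradiction that $S = 0$ but $C_1 \succ_P C_2$ (the reverse orientation is symmetric). Continuity applied with the favouring profile taken to be $P$ itself and the fixed profile taken to be the single reverse-reference vote $\spcvote{C_1}{I_2^*}{C_2}{I_1^*}$ supplies an $N \in \naturals$ for which $\spcvote{C_1}{I_2^*}{C_2}{I_1^*} + N P$ satisfies $C_1 \succ C_2$. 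But this combined voting situation has score $-1 + N \cdot 0 = -1 < 0$, so part (iii) forces $C_2 \succ C_1$, a contradiction.

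The main technical obstacle is the integer bookkeeping in part (i): one must verify that the simultaneous strict constraints $N_j^+ < Mn_j\Delta_j$ and $N_j^- > Mn_j|\Delta_j|$ are realisable by integers alongside the global inequality $N_{\textrm{pos}}\geq N_{\textrm{neg}}$, even when the individual $\Delta_j$ are irrational. The key observation is that each individual constraint costs at most one unit of integrality slack, while the global gap $MS = M(S^+-|S^-|)$ grows linearly in $M$ and so eventually dominates the $O(k)$ rounding losses; in particular, the possibly irrational nature of individual $\Delta_j$ is harmless, because only the single real-valued gap $S^+-|S^-|$ enters. Everything else is routine consistency and neutrality bookkeeping.
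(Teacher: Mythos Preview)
Your proof is correct, and for parts~(ii) and~(iii) it is essentially identical to the paper's (the paper adds a single reference vote and invokes~(i), you add a single reverse-reference vote and invoke~(iii); both derive~(iii) from~(i) and the $\sigma$-symmetry).

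For part~(i) you take a genuinely different route. The paper argues by contradiction: assuming $C_2 \succeq_P C_1$, it performs a \emph{sequential} replacement, handling one pair $(I_1,I_2)$ at a time. At each step it uses Lemma~\ref{lemma:deltaDefinition} to trade all votes on positions $(I_1,I_2)$ for votes on $(I_1^*,I_2^*)$ or $(I_2^*,I_1^*)$, while maintaining the two invariants $\pairscorefull{d_s}{C_1}{C_2}{P_i}>0$ and $C_2\succeq_{P_i}C_1$; at the end only reference and reverse-reference votes remain, and the two invariants clash. Your argument is \emph{parallel} and direct: you scale $P$ by a large $M$, treat every vote type simultaneously by tacking on enough reference or reverse-reference votes so that each block individually satisfies $C_1\succ C_2$ (again via Lemma~\ref{lemma:deltaDefinition} plus the $\sigma$-swap), aggregate by consistency, and then peel off the auxiliary block $Q$ using the extended form of Remark~\ref{rem:using-consistency}, since the auxiliary votes net out in favour of $C_2$ once $N_{\mathrm{pos}}\geq N_{\mathrm{neg}}$. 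The integer bookkeeping you flag is indeed the only delicate point, and your observation that the global gap $MS$ grows linearly while the rounding loss is $O(1)$ per vote type handles it cleanly. Your approach avoids the step-by-step transformation and the need to track invariants through a sequence of profiles; the price is the mild extension of Remark~\ref{rem:using-consistency} and the explicit $M$-scaling.
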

\begin{proof}
  We start by proving (i). Let $P$ be a voting situation such
  that $\pairscorefull{d_s}{C_1}{C_2}{P} > 0$.  For the sake of contradiction we assume
  that $C_2 \succeq_{P} C_1$.

  The idea of the proof is to perform a sequence of transformations of
  $P$ so that the result according to $f_k$ does not change
  (due to the imposed axioms), but,
  eventually, in the resulting profile each voter puts committees
  $C_1$ and $C_2$ either on positions $I_1^*$, $I_2^*$ or the other
  way round.  Let $t$ be the total number of transformations that
  we perform to achieve
  this and let $P_i$ be the voting situation that we obtain after the
  $i$-th transformation.  We will ensure that for each voting
  situation $P_i$ it holds that $\pairscorefull{d_s}{C_1}{C_2}{P_i} >
  0$ and $C_2 \succeq_{P_i} C_1$. In particular, for the final voting
  situation $P_t$ we will have $C_2 \succeq_{P_t} C_1$,
  $\pairscorefull{d_s}{C_1}{C_2}{P_t} > 0$, and each voter
  will have committees $C_1$ and $C_2$ on
  positions $I_1^*$ and $I_2^*$ or the other way round.  Therefore,
  we will have:
  \[
  P_t=x(C_1 \rightarrow I_1^*, C_2 \rightarrow I_2^*) + y(C_1 \rightarrow I_2^*, C_2 \rightarrow I_1^*)
  \]
  for some nonnegative integers $x$ and $y$, and by
  Lemmas~\ref{lemma:reverseDelta}~and~\ref{unit_distance} we will have:
  \[
  \pairscorefull{d_s}{C_1}{C_2}{P_t} = xd_s(I_1^*,I_2^*) + yd_s(I_2^*,I_1^*)=x-y
  \]
  However, from $\pairscorefull{d_s}{C_1}{C_2}{P_t} > 0$ we will
  conclude that $x>y$, i.e., there must be more voters who put $C_1$
  and $C_2$ on positions $I_1^*$ and $I_2^*$ than on positions $I_2^*$
  and $I_1^*$. By our choice of $I_1^*$ and
  $I_2^*$ (recall Equation \eqref{main_assumption} as in the proof of
  Lemma~\ref{unit_distance}) we will conclude that $C_1 \succ_{P_t}
  C_2$.  This will be a contradiction with $C_2 \succeq_{P_t} C_1$.
\par\smallskip

  We now describe the transformations.  We set $P_0 = P$. We perform
  the $i$-th transformation in the following way. If for each voter in
  $P_{i-1}$, committees $C_1$ and $C_2$ stand on positions $I_1^*$ and
  $I_2^*$ (or the other way round), we finish our sequence of
  transformations. Otherwise, we take a preference order of an
  arbitrary voter from $P_{i-1}$, for whom the set of committee
  positions of $C_1$ and $C_2$ is not $\{I_1^*, I_2^*\}$.  Let us
  denote this voter by $v_i$. Let $z$ denote the number of voters in
  $P_{i-1}$ who rank $C_1$ and $C_2$ on the same positions as $v_i$,
  including $v_i$ (so $z \geq 1$). Let $I_1$ and $I_2$ denote the
  positions of the committees $C_1$ and $C_2$ in the preference order
  of $v_i$, respectively. Let $\epsilon = \pairscorefull{d_s}{C_1}{C_2}{P_{i-1}}/2z >
  0$.

  \begin{description}
  \item{\bf Case 1:} If $C_1 \, {=}\big[ \spcvote{C_1}{I_1}{C_2}{I_2}
    \big] \, C_2$, then we obtain $P_i$ by removing from $P_{i-1}$ all
    $z$ voters with the same preference order as $v_i$. By consistency
    of $f_k$, it follows that in the resulting voting situation $P_i$
    it still holds that $C_2 \succeq_{P_i} C_1$ (this is, in essence,
    the same canceling out of voters that we already used in
    Lemmas~\ref{lemma:reverseDelta}
    and~\ref{lemma:deltaDefinition}). Also, by definition of $\Delta_{I_1,I_2}$
    in Equation~\eqref{eq:diffDefinition}, we have $\Delta_{I_1,I_2} =
    0$. Hence, it still holds that $\pairscorefull{d_s}{C_1}{C_2}{P_i}
    > 0$.

  \item{\bf Case 2:} If $C_1 \;
    {\succ}\big[\spcvote{C_1}{I_1}{C_2}{I_2} \big] \; C_2$, then let
    $x$ and $y$ be such integers that $\Delta_{I_1, I_2} - \epsilon <
    \frac{y}{x} < \Delta_{I_1, I_2}$ (recall that $\epsilon$ is
    defined just above Case 1, and that $z$ is the number of voters
    with the same preference order as $v_i$).  We define two new
    voting situations:
    \[
    R_{i-1} = z \cdot P_{x(C_1 \rightarrow I_2, C_2 \rightarrow
      I_1)}^{y(C_1 \rightarrow I_1^*, C_2 \rightarrow I_2^*)} \quad
    \text{and} \quad Q_{i-1} = x \cdot P_{i-1} + R_{i-1} \textrm{.}
    \]
    From Lemma~\ref{lemma:deltaDefinition} it follows that $C_2
    \succ_{R_{i-1}} C_1$ and, by consistency, we get that $C_2
    \succ_{Q_{i-1}} C_1$.
%
%
    Let us now calculate 
    $\pairscorefull{d}{C_1}{C_2}{R_{i-1}}$. We note that $R_{i-1}$
    consists of $zx$ voters who rank $C_1$ and $C_2$ on positions
    $I_2$ and $I_1$ (and who contribute $zx \Delta_{I_2, I_1} = -zx
    \Delta_{I_1, I_2}$ to the value of
    $\pairscorefull{d}{C_1}{C_2}{R_{i-1}}$) and of $zy$ voters who
    rank $C_1$ and $C_2$ on positions $I_1^*$ and $I_2^*$,
    respectively (who contribute value $zy \Delta_{I_1^*, I_2^*} =
    zy$). That is, we have $\pairscorefull{d}{C_1}{C_2}{R_{i-1}} = -zx
    \Delta_{I_1, I_2} + zy$.  Further, by definition of $\epsilon$, we
    have that $\pairscorefull{d}{C_1}{C_2}{P_{i-1}} = 2z\epsilon$.
    In consequence, we have that:
    \begin{align*}
      \pairscorefull{d}{C_1}{C_2}{Q_{i-1}} &= x \cdot \pairscorefull{d}{C_1}{C_2}{P_{i-1}} + \pairscorefull{d}{C_1}{C_2}{R_{i-1}}\\
      &= 2zx\epsilon + ( - zx \Delta_{I_1, I_2} + zy) \\
      &= 2zx\epsilon + zx(-\Delta_{I_1, I_2} + \frac{y}{x}) \geq
      2zx\epsilon - zx\epsilon > 0 \textrm{.}
    \end{align*}
    The first inequality (in the final row) follows from the fact that
    we assumed $\Delta_{I_1, I_2} - \epsilon < \frac{y}{x} <
    \Delta_{I_1, I_2}$.  We now move on to Case 3, where we also build
    voting situation $Q_{i-1}$ with a similar property, and then
    describe how to obtain $P_i$ from $Q_{i-1}$'s.

  \item{\bf Case 3:} If $C_2 \; {\succ}\big[
    \spcvote{C_1}{I_1}{C_2}{I_2} \big] \; C_1$, then our reasoning is
    very similar to that from Case~$2$. Let $x$ and $y$ be such
    integers that $-\Delta_{I_1, I_2} < \frac{y}{x} < -\Delta_{I_1,
      I_2} + \epsilon$. We define two voting situations
    \begin{align*}
      R_{i-1} = z \cdot P_{x(C_1 \rightarrow I_2, C_2 \rightarrow
        I_1)}^{y(C_1 \rightarrow I_2^*, C_2 \rightarrow I_1^*)}
      \quad \text{and} \quad
      Q_{i-1} = x \cdot P_{i-1} + R_{i-1}.
    \end{align*}
     Lemma~\ref{lemma:deltaDefinition} implies that $C_2
    \succ_{R_{i-1}} C_1$, and, thus, from consistency, we get that
    $C_2 \succ_{Q_{i-1}} C_1$. Further, using similar analysis as in
    Case~2, we get that:
    \begin{align*}
      \pairscorefull{d}{C_1}{C_2}{Q_{i-1}} &= x \cdot \pairscorefull{d}{C_1}{C_2}{P_{i-1}} + zx \Delta_{I_2, I_1} - zy \\
      &= 2zx\epsilon + zx(-\Delta_{I_1, I_2} - \frac{y}{x}) \geq
      2zx\epsilon - zx\epsilon > 0 \textrm{.}
    \end{align*}
    The first inequality (in the final row) follows from the
    assumption that $-\Delta_{I_1, I_2} < \frac{y}{x} < -\Delta_{I_1,
      I_2} + \epsilon$. Below we describe how to obtain $P_i$ from
    $Q_{i-1}$ (for both Cases~2 and 3).
  \end{description}
  \medskip

  In Cases~2 and~3, in the voting situation $Q_{i-1}$ exactly $zx$
  voters have $C_1$ and $C_2$ on positions $I_2$ and $I_1$,
  respectively (for both cases, these voters are introduced in voting
  situation $R_{i-1}$). Further, there are exactly $zx$
  voters who rank $C_1$ and $C_2$ on positions $I_1$ and $I_2$,
  respectively (these are the cloned-$x$-times voters that were
  originally in $P_{i-1}$).  We define $P_{i}$ as $Q_{i-1}$ with these
  $2zx$ voters removed. Since we removed the same number of voters who
  rank $C_1$ and $C_2$ on positions $I_2$ and $I_1$, respectively, as
  the number of voters who rank these committees on positions $I_1$
  and $I_2$, respectively, we conclude that $\pairscorefull{d}{C_1}{C_2}{P_i} =
  \pairscorefull{d}{C_1}{C_2}{Q_{i-1}} > 0$ and that $C_2 \succeq_{P_i} C_1$.

  We note that after the just-described transformation none of the
  voters has both $C_1$ and $C_2$ on positions $I_1$ and $I_2$,
  respectively, and that we only added a number of voters that rank
  $C_1$ and $C_2$ on positions $I_1^*$ and $I_2^*$ (or the other way
  round) or we
  cloned voters already present. Hence, if we perform such
  transformations for all possible pairs of committee positions $I_1$
  and $I_2$, we will obtain our final voting situation, $P_t$, for which it holds that
  the following three conditions are satisfied: (i)
  $\pairscorefull{d}{C_1}{C_2}{P_t} > 0$, (ii) $C_2 \succeq_{P_t}
  C_1$, and (iii) in $P_t$ each voter ranks $C_1$ and $C_2$ on
  positions $I_1^*$ and $I_2^*$ (or the other way round).  Given (i)
  and (iii) we conclude that in $P_t$ there are more votes in which
  $C_1$ stands on position $I_1^*$ and $C_2$ stands on position
  $I_2^*$ than there are voters where the opposite holds. However,
  this implies that $C_1 \succ_{P_t} C_2$ and contradicts the fact
  that $C_2 \succeq_{P_t} C_1$.  This completes the proof of the first
  part of the lemma.

  Next, we consider part (ii) of the theorem.  Let $P$ be some voting
  situation such that $\pairscorefull{d}{C_1}{C_2}{P} = 0$. For the
  sake of contradiction we assume that $C_2 \neq_{P} C_1$, and,
  without loss of generality, we assume that
  $C_2 \succ_{P} C_1$. Since for the voting situation
  $\spcvote{C_1}{I_1^*}{C_2}{I_2^*}$ it holds that
  $\pairscorefull{d}{C_1}{C_2}{\spcvote{C_1}{I_1^*}{C_2}{I_2^*}} > 0$,
  then for each $n \in \naturals$, in the voting situation
  \[
  Q_n=nP + \spcvote{C_1}{I_1^*}{C_2}{I_2^*},
  \]
  we have $\pairscorefull{d}{C_1}{C_2}{Q_n} > 0$, and---from part (i) of the
  theorem---we get that $C_1 \succ_{Q_n} C_2$. On the other hand,
  continuity requires that there exists some value of $n$ such that
  $C_2 \succ_{Q_n} C_1$. 

  To prove part (iii) of the theorem, it suffices to observe that if
  $\pairscorefull{d_s}{C_1}{C_2}{P} < 0$, then
  $\pairscorefull{d_s}{C_2}{C_1}{P} =
  -\pairscorefull{d_s}{C_1}{C_2}{P} > 0$ and use part (i) of the
  theorem to conclude that in such case we have $C_2 \succ_P C_1$.
  This gives a contradiction and completes the proof.
\end{proof}

  \paragraph{Completing the Proof of Theorem~\ref{thm:decision-rules}.}
  We have dealt with a fixed pair of committees $(C_1,C_2)$ and we
  have proven Lemma~\ref{thm:nontransitive} which justifies that $d_s$
  is a decision scoring function for $f_{C_1, C_2}$. From neutrality
  it follows that $d_s$ will give us a decision scoring function for
  $f_{k, s}$. However, as we noted at the beginning of this section,
  $f_k$ can be viewed as a collection of independent functions $f_{k,
    s}$ for $s \in \{0 \ldots k-1\}$, thus this observation is
  sufficient to prove Theorem~\ref{thm:decision-rules}, a Young-Style
  characterization of decision scoring rules.

\subsection{The Tools to Deal with Committee Scoring Rules}
\label{sec:toolsforcsrs}

We have proved Theorem~\ref{thm:decision-rules}, which will serve as a
useful tool for proving Theorem~\ref{thm:theMainTheorem}.  However, to
complete the proof of Theorem~\ref{thm:theMainTheorem} we still need
to derive one more technical tool---Lemma~\ref{thm:nontransitive2}
below---that applies the results obtained so far to committee scoring
rules. To achieve this goal, we need to change our domain from
$\naturals^{m!}$ to $\rationals^{m!}$, and before we make this change,
we need to introduce several new notions.  (While the correctness of
our first domain change relied on the decision rule being symmetric,
this second domain change, similarly to the case considered by
Young~\shortcite{you:j:scoring-functions}, uses our further axioms.)

We distinguish one specific voting situation, $e = \langle 1, 1,
\ldots, 1 \rangle$, called the \emph{null profile}, describing the
setting where each possible vote is cast exactly once. It immediately
follows that under each symmetric $k$-decision rule $f_k$, each two
committees are ranked equally in $e$, i.e., for each two committees
$C_1, C_2$ we have $C_1 =_e C_2$.

\begin{definition}[Independence of Symmetric Profiles]
  A symmetric $k$-decision rule $f_k$ is \emph{independent of
    symmetric profiles} if for every voting situation $P \in
  \naturals^{m!}$ and for every $\ell \in \naturals$, we have that
  $f_k(P + \ell e) = f_k(P)$.
\end{definition}

\begin{definition}[Homogeneity]
  A symmetric $k$-decision rule $f_k$ is \emph{homogeneous} if for
  every voting situation $P \in \naturals^{m!}$ and for every $\ell
  \in \naturals$, we have $f_k(\ell P) = f_k(P)$.
\end{definition}

Intuitively, independence of symmetric profiles says that if we add
one copy of each possible vote then they will all cancel each other
out. Homogeneity says that the result of an election depends only on
the relative proportions of the linear orders in the voting situation
and not on the exact numbers of linear orders.  One can verify that
each symmetric and consistent $k$-decision rule satisfies both
independence of symmetric profiles and homogeneity (indeed, the
requirement in the definition of homogeneity is a special case of the
requirement from the definition of consistency).\footnote{The reader
  may ask why do we introduce independence of symmetric profiles and
  homogeneity, when what we require from them already follows from
  consistency. The reason is that, we believe, these two properties
  better explain---on the intuitive level---why the second domain
  change is allowed.}

\paragraph{Second Domain Change.}
Now we are ready to extend our
domain from $\naturals^{m!}$ to $\rationals^{m!}$. To this end, we use
the following result. It was originally stated for single-winner rules
but it can be adapted to the multiwinner setting in a straightforward
way.

\begin{lemma}[Young~\shortcite{you:j:scoring-functions}, Merlin~\shortcite{merlinAxiomatic}]
\label{thm:uniqe-ext}
Suppose a $k$-decision rule $f_k \colon \naturals^{m!} \to \big(\powA
\times \powA \to \{-1, 0, 1\}\big) $ is symmetric, independent of
symmetric profiles and homogeneous. There exists a unique extension of
$f_k$ to the domain $\rationals^{m!}$ (which we also denote by $f_k$),
satisfying for each positive $\ell \in \naturals$, and $P \in \naturals^{m!}$
the following two conditions:
  \begin{enumerate}
  \item $f_k(P - \ell e) = f_k(P)$,
  \item  $f_k\left(\frac{P}{\ell}\right) = f_k(P)$.
  \end{enumerate}
\end{lemma}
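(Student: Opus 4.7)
The plan is to define the extension by reducing any rational voting situation to an integer one via scaling and shifting, and then checking that the reduction does not depend on the choices made. Given $Q \in \rationals^{m!}$, first pick a positive integer $c$ such that $cQ$ has all integer coordinates (any common denominator of the entries of $Q$ works), and then a nonnegative integer $n$ large enough that $P := cQ + ne \in \naturals^{m!}$ (for example, any $n \geq -\min_i (cQ)_i$). Set $f_k(Q) := f_k(P)$.

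The main obstacle is well-definedness: the value should not depend on the choice of $(c,n)$. Suppose $(c',n')$ is another valid pair giving $P' := c'Q + n'e \in \naturals^{m!}$. By homogeneity, $f_k(P) = f_k(c'P) = f_k(cc'Q + c'ne)$ and $f_k(P') = f_k(cP') = f_k(cc'Q + cn'e)$. The two arguments on the right-hand sides both lie in $\naturals^{m!}$ and differ by the integer multiple $(c'n - cn')e$ of $e$ (without loss of generality nonnegative), so independence of symmetric profiles forces $f_k(c'P) = f_k(cP')$, and hence $f_k(P) = f_k(P')$. This is the only place in the construction where the hypotheses on $f_k$ are genuinely used.

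Once well-definedness is established, properties (1) and (2) are read off immediately: applying the construction to $Q = P - \ell e$ with scalar $c = 1$ and shift $n = \ell$ gives $f_k(P - \ell e) = f_k(P)$, while applying it to $Q = P/\ell$ with scalar $c = \ell$ and shift $n = 0$ gives $f_k(P/\ell) = f_k(P)$. Uniqueness is also forced by the same reduction: for any extension $\tilde f_k$ obeying (1) and (2) and any $Q \in \rationals^{m!}$, pick $(c,n)$ as above so that $P = cQ + ne \in \naturals^{m!}$; by iterated application of (1) and (2) on $P$ one transforms $\tilde f_k(P)$ into $\tilde f_k(Q)$ without changing the value, so $\tilde f_k(Q) = f_k(P)$, matching our construction.
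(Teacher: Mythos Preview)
The paper does not give its own proof of this lemma; it is quoted as a result of Young and Merlin with the comment that it ``was originally stated for single-winner rules but it can be adapted to the multiwinner setting in a straightforward way.'' Your argument is exactly the standard construction underlying that citation: clear denominators by a positive integer $c$, shift by a multiple of $e$ into $\naturals^{m!}$, and use homogeneity together with independence of symmetric profiles to check that the outcome does not depend on the pair $(c,n)$. So there is nothing in the paper to compare against, and your write-up is correct and matches the intended argument.

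One minor point on the uniqueness paragraph: as the lemma is literally phrased, conditions~(1) and~(2) are only required for $P \in \naturals^{m!}$, and under that restriction your ``iterated application of (1) and (2)'' does not quite reach an arbitrary $Q \in \rationals^{m!}$, since after one application the argument may leave $\naturals^{m!}$ (e.g.\ applying~(2) to $P = cQ + ne$ lands at $Q + (n/c)e$, which is generally not an integer vector, so~(1) cannot then be invoked). This is an imprecision in the statement rather than a flaw in your reasoning: the extension you construct in fact satisfies (1) and (2) for all $P \in \rationals^{m!}$, as your well-definedness computation already shows, and under that stronger reading uniqueness is immediate. It would be worth saying this explicitly.
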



Lemma~\ref{thm:uniqe-ext} allows us to consider voting situations with
fractional numbers of linear orders. From now on, when we speak of
voting situations, we mean voting situations from our new domain,
$\rationals^{m!}$.  We note that within our new domain, the score of a
pair $(C_1, C_2)$ of committees relative to a voting situation $P$
under decision scoring function $d$ can still be expressed as in
Equation~\eqref{def_D_P_vot_sit}.
Indeed, for decision scoring rules, this definition gives the unique
extension that Lemma~\ref{thm:uniqe-ext} speaks of. Thus
Theorem~\ref{thm:decision-rules} extends to decision rules with domain
$\rationals^{m!}$.


\paragraph{Constructing a Tool for Committee Scoring Rules.}

Since $\rationals^{m!}$ is a vector space over the field of rational
numbers, from Theorem~\ref{thm:decision-rules} (extended to
$\rationals^{m!}$) we infer that for each two committees $C_1$ and
$C_2$, the space of voting situations $P$ such that $C_1 =_P C_2$ is a
hyperplane in the $m!$-dimensional vector space of all voting
situations. This is so, because if we treat a voting situation $P$ as
a vector of $m!$ variables, then condition
$\pairscorefull{d}{C_1}{C_2}{P} = 0$ turns out to be a single linear
equation.  Hence, the space of voting situations $P$ such that $C_1
=_P C_2$ is a hyperplane in $\rationals^{m!}$ and has dimension
$m!-1$. This can be summarized as the following
corollary.

\begin{corollary}\label{cor:hyperplane}
  The set $\{P \in \rationals^{m!} \colon C_1 =_P C_2\}$ is a
  hyperplane in the vector space of all voting situations~$\rationals^{m!}$.
\end{corollary}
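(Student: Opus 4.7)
The plan is to invoke Theorem~\ref{thm:decision-rules} (in its extension to $\rationals^{m!}$ provided by Lemma~\ref{thm:uniqe-ext}) and then observe that the defining condition for the equivalence set is a single linear equation in the coordinates of~$P$. More precisely, since $f_k$ is symmetric, consistent, and continuous, Theorem~\ref{thm:decision-rules} gives a decision scoring function $d\colon [m]_k \times [m]_k \to \reals$ such that $C_1 =_P C_2$ if and only if $\pairscorefull{d}{C_1}{C_2}{P} = 0$. Identifying a voting situation $P \in \rationals^{m!}$ with the tuple $(P(v))_{v \in \orders}$, Equation~\eqref{def_D_P_vot_sit} shows that
\[
   \pairscorefull{d}{C_1}{C_2}{P} \;=\; \sum_{v \in \orders} d\bigl(\pos_v(C_1), \pos_v(C_2)\bigr)\cdot P(v)
\]
is a linear functional $\rationals^{m!} \to \reals$ with fixed coefficients $d(\pos_v(C_1), \pos_v(C_2))$ indexed by $v \in \orders$.

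Consequently, $\{P \in \rationals^{m!} : C_1 =_P C_2\}$ is the kernel of this linear functional. Whenever at least one coefficient is nonzero, the kernel is a vector subspace of codimension one, i.e., a hyperplane of dimension $m!-1$, which is the desired conclusion. The only way the kernel could fail to be a proper hyperplane is if every coefficient $d(\pos_v(C_1), \pos_v(C_2))$ vanishes; by Lemma~\ref{thm:nontransitive} this happens exactly when $C_1 =_P C_2$ holds for every $P$, i.e., when $f_{C_1, C_2}$ is the trivial rule. I would either exclude this degenerate case explicitly (it is uninteresting for the subsequent argument, since then $f$ trivially agrees with the zero scoring function on the pair $(C_1,C_2)$), or adopt the convention that the whole space is treated as a degenerate hyperplane.

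I do not expect any real obstacle here: once Theorem~\ref{thm:decision-rules} is in hand and the domain has been extended to $\rationals^{m!}$, the content of the corollary is essentially bookkeeping, and the only nuance is the trivial-rule edge case described above.
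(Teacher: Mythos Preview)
Your proposal is correct and follows essentially the same approach as the paper: invoke Theorem~\ref{thm:decision-rules} (extended to $\rationals^{m!}$) to obtain the decision scoring function $d$, and observe that $\pairscorefull{d}{C_1}{C_2}{P}=0$ is a single linear equation in the $m!$ coordinates of $P$, so its solution set is a hyperplane of dimension $m!-1$. You are in fact slightly more careful than the paper in explicitly flagging the degenerate case where all coefficients vanish (the trivial rule), which the paper handles separately elsewhere rather than in the corollary itself.
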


From now on, we assume that our $k$-decision rule $f_k$ is transitive,
that is, we require that for each voting situation $P$ and each three
committees $C_1$, $C_2$, and $C_3$ it holds that:
\[
(C_1\succeq_P C_2)\ \text{and}\ (C_2\succeq_P C_3)\
\text{implies}\ (C_1\succeq_P C_3).
\]
In other words, from now on we require $f_k$ to be a $k$-winner
election rule.

\begin{lemma}\label{thm:nontransitive2}
  Let $f_{k}$ be a symmetric, consistent,
  committee-dominant, continuous $k$-winner election rule, and let
  $\lambda\colon [m]_k \to \reals$ be a committee scoring function.
  If it holds that for each two committees $C_1$ and $C_2$ and each
  voting situation $P$ it holds that the committee scores of $C_1$ and
  $C_2$ are equal (according to $\lambda$) if and only $C_1$ and $C_2$
  are equivalent according to ${f_k}$, then it holds that: For each
  two committees $C_1$ and $C_2$ and each voting situation $P$, if the
  committee score of $C_1$ is greater than that of $C_2$ (according to
  $\lambda$) then $C_1$ is preferred over $C_2$ according to ${f_k}$
  (i.e., $C_1 \succ_P C_2$).
\end{lemma}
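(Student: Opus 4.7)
The plan is to convert the hypothesis into an exact proportionality between two linear functionals on $\rationals^{m!}$ and then use committee dominance to pin down the sign of the proportionality constant. First I would apply Theorem~\ref{thm:decision-rules} (in its extension to $\rationals^{m!}$ justified just before Corollary~\ref{cor:hyperplane}) to $f_k$, viewed as a $k$-decision rule on the pair $(C_1,C_2)$: this produces a decision scoring function $d$ such that $h(P):=\pairscorefull{d}{C_1}{C_2}{P}$ satisfies $h(P)>0$ iff $C_1\succ_P C_2$, and $h(P)=0$ iff $C_1=_P C_2$. On the other side, set $g(P):=\scorefull{\lambda}{C_1}{P}-\scorefull{\lambda}{C_2}{P}$; this is also a linear functional on $\rationals^{m!}$.

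By the hypothesis of the lemma, $g(P)=0$ iff $C_1=_P C_2$ iff $h(P)=0$, so $g$ and $h$ have the same zero set, which by Corollary~\ref{cor:hyperplane} is a hyperplane in $\rationals^{m!}$. Two linear functionals with the same kernel hyperplane must be scalar multiples of each other, so either both are identically zero (in which case no $P$ satisfies $g(P)>0$ and the conclusion is vacuously true) or there is a nonzero constant $c$ with $h=c\,g$. The lemma then reduces to showing $c>0$.

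To pin down the sign, I would exhibit a profile where both $g$ and $h$ are nonnegative, with at least one of them strictly positive. The natural candidate is a single vote $v^*$ placing all candidates of $C_1$ in the top $k$ positions and the candidates of $C_2\setminus C_1$ immediately below, so that $\pos_{v^*}(C_1)=\{1,\ldots,k\}$ dominates $\pos_{v^*}(C_2)$. Committee dominance applied to the one-vote situation $v^*$ yields $C_1\succeq_{v^*}C_2$, hence $h(v^*)\geq 0$, and the dominance property of the committee scoring function $\lambda$ yields $g(v^*)\geq 0$. If $g(v^*)>0$, then $h(v^*)=c\,g(v^*)\geq 0$ forces $c\geq 0$, and combined with $c\neq 0$ we obtain $c>0$, which is exactly what is needed.

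The main obstacle is the degenerate case in which $g$ vanishes on every profile where $C_1$ dominates $C_2$ vote by vote. To handle this, I would exploit the global dominance inequalities imposed on $\lambda$ (which relate $\lambda$-values beyond the intersection-$s$ stratum, where $s=|C_1\cap C_2|$) together with the neutrality-induced antisymmetry $g(\sigma(P))=-g(P)$ for the permutation $\sigma$ swapping $C_1\setminus C_2$ with $C_2\setminus C_1$, in order to propagate the forced equalities $\lambda(I_1)=\lambda(I_2)$ through the dominance lattice of committee positions. The combinatorial claim to prove is that these equalities collapse $g$ to the zero functional, placing us back in the vacuous case; making this propagation rigorous is the technically delicate step of the argument.
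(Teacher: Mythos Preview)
Your overall strategy coincides with the paper's: both view the $d$-score $h(P)=\pairscorefull{d}{C_1}{C_2}{P}$ (from Theorem~\ref{thm:decision-rules}) and the $\lambda$-score difference $g(P)=\scorefull{\lambda}{C_1}{P}-\scorefull{\lambda}{C_2}{P}$ as linear functionals on $\rationals^{m!}$ with the same zero hyperplane, and then use committee dominance to fix the sign. The paper phrases the dichotomy as ``either $g>0$ always implies $C_1\succ_P C_2$ or it always implies $C_2\succ_P C_1$'' rather than writing $h=c\,g$, but that is cosmetic.

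Where you leave a genuine gap is the sign step. You pick one specific dominating vote $v^*$, note that $g(v^*)\ge 0$ and $h(v^*)\ge 0$, and then have to confront the possibility $g(v^*)=0$. Your proposed fix---propagating forced equalities $\lambda(I_1)=\lambda(I_2)$ through the dominance lattice to collapse $g$ to zero---is vague, and you flag it yourself as ``technically delicate.'' The paper sidesteps this entirely with a short direct construction. Set $x=|C_1\cap C_2|$ and write $I_i=\{i,i+1,\ldots,i+k-1\}$. If $g\not\equiv 0$ then $\lambda$ is not constant; since $I_1$ dominates every committee position and every committee position dominates $I_{m-k+1}$, the dominance property of $\lambda$ forces $\lambda(I_1)>\lambda(I_{m-k+1})$. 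The chain $\lambda(I_1)\ge\lambda(I_2)\ge\cdots\ge\lambda(I_{m-k+1})$ is weakly decreasing with a strict overall drop, so for some $p$ we get $\lambda(I_p)>\lambda(I_{p+k-x})$. Now take a single vote $S$ with $C_1$ on position $I_p$ and $C_2$ on position $I_{p+k-x}$ (possible because $|I_p\cap I_{p+k-x}|=x$). Then $g(S)=\lambda(I_p)-\lambda(I_{p+k-x})>0$ by construction, and committee dominance gives $C_1\succeq_S C_2$, i.e.\ $h(S)\ge 0$. This single vote pins down the sign with no residual degenerate case; your propagation argument is unnecessary.
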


\begin{proof}
  Based on $\lambda$, we build a decision scoring function $g$
  as follows. For each two committee positions $I_1$ and $I_2$, we
  have $g(I_1, I_2) = \lambda(I_1) - \lambda(I_2)$.  The score of a
  committee pair $(C_1, C_2)$ in voting situation $P$ under $g$ is
  given by:
  \begin{align*}
    \pairscorefull{g}{C_1}{C_2}{P} = \sum_{\pi \in \orders} P(\pi) \cdot g(\pos_{\pi}(C_1), \pos_{\pi}(C_2)) \textrm{.}
  \end{align*}

  Let us fix $x \in [k-1]$ and two arbitrary committees $C_1^*$ and
  $C_2^*$ such that $|C_1^* \cap C_2^*| = x$.  We note that, by the
  assumptions of the theorem, if it holds that:
  \[
   \pairscorefull{g}{C_1^*}{C_2^*}{P} = 0 \iff C_1^* =_P C_2^*,
  \] 
  then, by Corollary~\ref{cor:hyperplane}, $H = \{P \in
  \rationals^{m!}  \colon C_1^* =_P C_2^*)\}$ is an $(m! -
  1)$-dimensional hyperplane.  More so, this is the same hyperplane as
  the following two (where $d=d_x$ is the decision scoring function
  from the thesis of Lemma~\ref{thm:nontransitive}, built for $f_k$):
  \[
   \{P \in \rationals^{m!} \colon  \pairscorefull{g}{C_1^*}{C_2^*}{P} = 0\} 
   \quad \text{and} \quad 
   \{P \in \rationals^{m!} \colon  \pairscorefull{d}{C_1^*}{C_2^*}{P} = 0\},
  \]  
  We claim that for $C_1^*$ and $C_2^*$ one of the following
  conditions must hold:
  \begin{enumerate}
  \item For each voting situation $P$, if
    $\pairscorefull{g}{C_1^*}{C_2^*}{P} > 0$ then $C_1^* \succ_P
    C_2^*$.
  \item For each voting situation $P$, if
    $\pairscorefull{g}{C_1^*}{C_2^*}{P} > 0$ then $C_2^* \succ_P
    C_1^*$.
  \end{enumerate}
  Why is this so? For the sake of contradiction, let us assume that
  there exist two voting situations, $P$ and $Q$, such that
  $\pairscorefull{g}{C_1^*}{C_2^*}{P} > 0$ and
  $\pairscorefull{g}{C_1^*}{C_2^*}{Q} > 0$, but $C_1^* \succeq_{P}
  C_2^*$ and $C_2^* \succeq_{Q} C_1^*$. From the fact that
  $\pairscorefull{g}{C_1^*}{C_2^*}{P} > 0$ and
  $\pairscorefull{g}{C_1^*}{C_2^*}{Q} > 0$, we see that the points $P$
  and $Q$ lie on the same side of hyperplane $H$ and neither of them
  lies on $H$. From $C_1^* \succeq_{P} C_2^*$, $C_2^* \succeq_{Q}
  C_1^*$, and from Lemma~\ref{thm:nontransitive}, we see that
  $\pairscorefull{d}{C_1^*}{C_2^*}{P} \geq 0$ and
  $\pairscorefull{d}{C_1^*}{C_2^*}{Q} \leq 0$. That is, at least one
  of the voting situations $P$ and $Q$ lies on the hyperplane, or they
  both lie on different sides of the hyperplane. This gives a
  contradiction and proves our claim.

  Now, using the committee dominance axiom, we exclude the second possibility.
  For each $i \in [m-k+1]$ we set $I_i = \{i, i+1, \ldots, i+k-1\}$.
  Let $I$ and $J$ denote, respectively, the best possible and the
  worst possible position of a committee, i.e., $I = I_1$ and $J =
  I_{m-k+1}$.  For the sake of contradiction, let us assume that there
  exists a profile $P'$, where $\pairscorefull{g}{C_1^*}{C_2^*}{P'} >
  0$ and $C_2^* \succ_{P'} C_1^*$.  Since there exists a profile with
  $\pairscorefull{g}{C_1^*}{C_2^*}{P'} > 0$, it must be the case that
  $\lambda(I) > \lambda(J)$ (otherwise $\lambda$ would be a constant
  function). Thus there must exist $p$ such that $\lambda(I_p) >
  \lambda(I_{p+k-x})$. Let us consider a profile $S$ consisting of a
  single vote where $C_1^*$ stands on position $I_p$ and $C_2^*$
  stands on position $I_{p+k-x}$ (as
  $|C_1^* \cap C_2^*| = x$, this is possible).  Since $\lambda(I_p) >
  \lambda(I_{p+k-x})$, we have that
  $\pairscorefull{g}{C_1^*}{C_2^*}{S} > 0$. By committee-dominance of
  $f_k$, it follows that $C_1^* \succeq_S C_2^*$. However, from the
  reasoning in the preceding paragraph (applied to profile $S$), we
  know that either $C_1^* \succ_S C_2^*$ or $C_2^* \succ_S C_1^*$.
  Putting these two facts together, we conclude that $C_1^* \succ_S
  C_2^*$.  Since we have shown a single profile $S$ such that
  $\pairscorefull{g}{C_1^*}{C_2^*}{S} > 0$ and $C_1^* \succ_S C_2^*$,
  by the argument from the previous paragraph, we know that for every
  profile $P$ it holds that:
 \[
   \text{If } \pairscorefull{g}{C_1^*}{C_2^*}{P} > 0 \text{ then } C_1^* \succ_P C_2^*.
 \]
 Our choice of committees $C_1^*$ and $C_2^*$ was arbitrary and, thus,
 the above implication holds for all pairs of committees. This
 completes the proof.
\end{proof}

Due to Lemma~\ref{thm:nontransitive2}, in our further discussion,
given a symmetric, consistent, committee-dominant, continuous
$k$-winner election rule $f_k$ we can focus solely on the subspace
$\{P\colon C_1 =_P C_2\}$.
%
If we manage to show that committees $C_1$ and $C_2$ are equivalent if
and only if the score of $C_1$ is equal to the score of $C_2$
according to some committee scoring function $\lambda$, then we can
conclude that $f_k$ is a committee scoring rule defined by
this committee scoring function $\lambda$.
This important observation concludes the first part of the proof.


\subsection{Second Part of the Proof: Committees with All but One
  Candidate in Common}\label{sec:second-part-a}

We now start the second part of the proof. The current section is
independent from the results of the previous one, but we do use all
the notation that was introduced and, in particular, we consider
voting situations over $\rationals^{m!}$. We will use results from
Sections~\ref{sec:nontransitiveRules}~and~\ref{sec:toolsforcsrs} only in
Section~\ref{sec:specialCase_base}, where we conclude the whole proof.

\paragraph{The Setting and Our Goal.}
As before, the size of committees is denoted as $k$. Throughout this
section we assume $f_k$ to be a $k$-winner election rule that is
symmetric, consistent, committee-dominant, and continuous.  Our goal
is to show that as long as we consider committees that contain some
$k-1$ fixed members and can differ only in the final one, $f_k$ acts
on such committee pairs as a committee scoring rule.  The discussion
in this section is inspired by that of
Young~\shortcite{you:j:scoring-functions} and
Merlin~\shortcite{merlinAxiomatic}, but the main part of our analysis
is original (in particular Lemma~\ref{lemma:zeroAlpha}).

\paragraph{Position-Difference Function.}


Let $P$ be a voting situation in $\rationals^{m!}$, $C$ be some
size-$k$ committee, and $I$ be a committee position. We define the
weight of position $I$ with respect to $C$ within $P$ as:
\[
\posweight_I(C,P) = \sum_{\pi \in \orders \colon \pos_\pi(C) = I} P(\pi),
\]
That is, $\posweight_I(C,P)$ is the (rational) number of votes in
which committee $C$ is ranked on position $I$.

For each two committees $C_1, C_2$ such that $|C_1 \cap C_2| = k-1$,
we define a committee position-difference function
$\alpha_{C_1,C_2}\colon \rationals^{m!} \to \rationals^{m \choose k}$
that for each voting situation $P \in \rationals^{m!}$ returns a
vector of ${m \choose k}$ elements, indexed by committee positions
(i.e., elements of $[m]_k$), such that for each committee position
$I$, we have: 
\[
   \alpha_{C_1,C_2}(P)[I] \: = \: \posweight_I(C_1,P) - \posweight_I(C_2,P).
\]
%
%
Naturally, $\alpha_{C_1,C_2}(P)$ is a linear function of $P$.
We 
claim that for each voting situation $P$, we have:
\begin{equation}
\label{sumI=0}
\sum_{I \in [m]_k} \alpha_{C_1,C_2}(P)[I] = 0.
\end{equation}
To see why this is the case, we note that $\sum_{I \in [m]_k}
\posweight_I(C_1,P) = \sum_{\pi \in \orders} P(\pi)$ because every
vote is accounted exactly once. Thus, we have that:
\begin{align*}
  \sum_{I \in [m]_k} \alpha_{C_1,C_2}(P)[I]
  & = \sum_{I \in [m]_k}\bigg( \posweight_I(C_1,P) - \posweight_I(C_2,P) \bigg)\\
  & = \sum_{I \in [m]_k} \posweight_I(C_1,P) - \sum_{J \in [m]_k}\posweight_J(C_2,P) \\
  & = \sum_{\pi \in \orders} P(\pi) - \sum_{\pi' \in \orders} P(\pi')
  = 0.
\end{align*}

Position-difference functions will be important technical tools that
we will soon use in the proof (in particular, in
Lemma~\ref{lemma:zeroAlpha} we will show that if $\alpha_{C_1,C_2}(P)
= \langle 0, \ldots, 0\rangle$ then $C_1 =_P C_2$ ).  However, we need
to provide some more tools first.

\paragraph{Johnson Graphs and Hamiltonian Paths.}

We will need the following graph-theoretic results to build certain
votes and preference profiles in our following analysis. We mention
that the graphs that Lemmas~\ref{lemma:hamiltonian1}
and~\ref{lemma:hamiltonian2} speak of are called Johnson
graphs. Lemma~\ref{lemma:hamiltonian1} was known before (we found the
result in the work of Asplach~\cite{Asp13} and could not trace an
earlier reference\footnote{We suspect the results might have been
  known before the work of Asplach. Indeed, similar results appear in
  the form of algorithms that output all size-$k$ subsets of a given
  set in the order so that each two consecutive sets differ in only
  one element. Yet, we need the specific variants provided in
  Lemmas~\ref{lemma:hamiltonian1} and~\ref{lemma:hamiltonian2} that
  finish the Hamiltonian path on a specific
  vertex. Asplach~\cite{Asp13} does not mention directly that his
  proofs provide this property, but close inspection shows that this
  is the case.}), and we provide the proof for the sake of
completeness.

\begin{lemma}\label{lemma:hamiltonian1}
  Let $p$ and $j$ be integers such that $1 \leq j \leq p$. Let $G(j,
  p)$ be a graph constructed in the following way. We associate
  $j$-element subsets of $\{1, \ldots p\}$ with vertices and we say
  that two vertices are connected if the corresponding subsets differ
  by exactly one element (they have $j-1$ elements in common). Such a
  graph contains a Hamiltonian path, i.e., a path that visits each
  vertex exactly once, that starts from the set $\{1, \ldots, j\}$ and
  ends in the set $\{p-j+1, \ldots, p\}$.
\end{lemma}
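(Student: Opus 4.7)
I would prove the lemma by induction on $p$. The base cases are immediate: when $j=1$, the graph $G(1,p)$ is the complete graph $K_p$ (any two singletons differ in exactly one element), and the sequence $\{1\}, \{2\}, \ldots, \{p\}$ is a Hamiltonian path from $\{1\}$ to $\{p\}$; when $j=p$, the graph has a single vertex and the path is empty; similarly for $j = p-1$ (complete graph on complements). These handle the base.

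For the inductive step with $1 < j < p-1$, I would split the vertex set of $G(j,p)$ according to whether the element $p$ appears. The subgraph $G_A$ on the vertices avoiding $p$ is naturally isomorphic to $G(j, p-1)$, and the subgraph $G_B$ on vertices containing $p$ is isomorphic to $G(j-1, p-1)$ via the correspondence $T \mapsto T \setminus \{p\}$. The plan is to build the required Hamiltonian path by concatenating three pieces: a Hamiltonian path in $G_A$ from $\{1,\ldots,j\}$ to $S = \{p-j,\ldots,p-1\}$ (the canonical endpoint given by the inductive hypothesis for $G(j,p-1)$); a single transition edge from $S$ to $S' = \{p-j,\ldots,p-2,p\} \in G_B$ (obtained by swapping out $p-1$ for $p$); and finally a Hamiltonian path in $G_B$ from $S'$ to the target $\{p-j+1,\ldots,p\}$, which via the isomorphism corresponds to a path in $G(j-1,p-1)$ from $\{p-j,\ldots,p-2\}$ to $\{p-j+1,\ldots,p-1\}$.

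The main obstacle is that the endpoints required for the third piece, $\{p-j,\ldots,p-2\}$ and $\{p-j+1,\ldots,p-1\}$ in $G(j-1,p-1)$, are \emph{not} the canonical $\{1,\ldots,j-1\}$--$\{p-j+1,\ldots,p-1\}$ pair supplied directly by the inductive hypothesis. To address this I would strengthen the inductive statement to Hamiltonian-connectedness of Johnson graphs: for every $1 \leq j' \leq p' - 1$ with $p' < p$, and every pair of distinct vertices $X, Y$ in $G(j', p')$, there exists a Hamiltonian path from $X$ to $Y$. This stronger claim is exactly what lets us invoke the inductive hypothesis on both $G(j, p-1)$ (we still use a specific endpoint pair there, which is easier) and $G(j-1, p-1)$ (where we need the arbitrary endpoint pair for the gluing step).

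The bulk of the real work is therefore proving Hamiltonian-connectedness itself, which I would establish by essentially the same two-part partition and gluing argument but carried out generically: given arbitrary targets $X$ and $Y$, one analyzes the cases based on which of $X, Y$ contain the partitioning element $p$ (both in $G_A$, both in $G_B$, or one in each), and in each case chooses the transition edge(s) between $G_A$ and $G_B$ so that the sub-paths required by the inductive hypothesis are between distinct vertices in the respective smaller Johnson graphs. Once Hamiltonian-connectedness is in hand, the specific conclusion of the lemma with endpoints $\{1,\ldots,j\}$ and $\{p-j+1,\ldots,p\}$ follows as an immediate special case.
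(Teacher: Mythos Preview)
Your approach is correct in principle but takes a substantially harder route than the paper's. The paper does not split into two pieces according to membership of $p$; instead it partitions the vertex set of $G(j,p)$ into $p-j+1$ groups $V(j,p,1),\ldots,V(j,p,p-j+1)$, where $V(j,p,x)$ consists of all $j$-subsets whose \emph{smallest} element is~$x$. Each $V(j,p,x)$ is isomorphic to $G(j-1,p-x)$ via removal of~$x$, and the inductive hypothesis there yields a Hamiltonian path from $\{x,x+1,\ldots,x+j-1\}$ to $\{x,p-j+2,\ldots,p\}$. The key observation is that the canonical endpoints of consecutive groups are already adjacent in the Johnson graph: $\{x,p-j+2,\ldots,p\}$ and $\{x+1,p-j+2,\ldots,p\}$ differ in one element, as do $\{x,x+1,\ldots,x+j-1\}$ and $\{x+1,\ldots,x+j\}$. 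One therefore alternates forward and reversed traversals of the canonical paths inside the groups, gluing them by single edges, and terminates at the singleton group $V(j,p,p-j+1)=\{\{p-j+1,\ldots,p\}\}$. No strengthening beyond the lemma's own statement is needed.

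Your two-piece decomposition forces the strengthening to full Hamiltonian-connectedness precisely because the transition vertex you reach in $G_B$ is not a canonical endpoint of $G(j-1,p-1)$; the paper's finer decomposition is engineered so that every transition lands exactly on a canonical endpoint. Your plan to prove Hamiltonian-connectedness by the same split-and-glue scheme is valid (Johnson graphs are indeed Hamiltonian-connected), but the case analysis---particularly when both prescribed endpoints lie on the same side and one must splice a detour through the other side at a well-chosen edge---carries nontrivial bookkeeping that you have only sketched. The paper's argument is shorter, self-contained, and avoids proving a strictly stronger theorem as a lemma.
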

\begin{proof}
  We prove this lemma by induction over $j$ and $p$.
  For $j =1$ and for each $p \geq 1$, it is easy to see that the
  required path exists (in this case, the graph is simply a full
  clique). This provides the induction base.  For the inductive step,
  we assume that there are two numbers, $p'$ and $j'$, such that for
  each $p$ and $j$ ($j \leq p$) such that $p < p'$ and $j < j'$ it
  holds that graph $G(j, p)$ contains a Hamiltonian path satisfying
  the constraints from the lemma.  We will prove that such a path also
  exists for $G(j',p')$.

  We partition the set of vertices of $G(j', p')$ into $p'-j+1$ groups
  $V(j', p', 1), \ldots, V(j', p', p'-j+1)$, where for each $x \in
  \{1, \ldots, p'-j+1\}$, group $V(j', p', x)$ consists of all sets of
  $j$ elements (vertices of the graph) such that $x$ is the lowest
  among them.

  We build our Hamiltonian path for $G(j', p')$ as follows. We start
  with the vertex $\{1, \ldots, j'\}$. By our inductive hypothesis, we
  know that there is a path that starts with $\{1, \ldots, j'\}$,
  traverses all vertices in $V(j', p', 1)$, and ends in $\{1, p'-j'+2,
  \ldots, p'\}$. From $\{1, p'-j'+2, \ldots, p'\}$ we can go, over a
  single edge, to $\{2, p'-j'+2, \ldots, p'\}$. Starting with this
  vertex, by our inductive hypothesis, we can traverse all the
  vertices of $V(j', p', 2)$. Then, over a single edge, we can move to
  some vertex from $V(j', p', 3)$, traverse all the vertices there,
  and so on.  By repeating this procedure, we will eventually reach
  some vertex in the set $V(j',p', p'-j'+1)$. However, $V(j',p',
  p'-j'+1)$ contains exactly one vertex, $\{p'-j'+1, \ldots,
  p'\}$. This means that we have found the desired Hamiltonian path.
\end{proof}

\begin{lemma}\label{lemma:hamiltonian2}
  Let $r, p$ and $j$ be integers such that $1 \leq r \leq p$ and $1
  \leq j \leq p-1$. Let $\tilde{G}(j, p, r)$ be a graph constructed in
  the following way: (i) A $j$-element subset of $\{1, \ldots, p\}$ is
  a vertex of $\tilde{G}(j, p, r)$ if and only if it contains at least
  one element smaller than $r$. (ii) There is an edge between two
  vertices if they differ in exactly one element (i.e., if they have
  $j-1$ elements in common). Such a graph contains a Hamiltonian path.
\end{lemma}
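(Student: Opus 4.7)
\textbf{Plan for the proof of Lemma \ref{lemma:hamiltonian2}.} My plan is to proceed by induction on $p$, with case analysis on $r$. Let me write $V_{j,p,r}$ for the vertex set of $\tilde{G}(j,p,r)$, i.e., the collection of $j$-subsets of $[p]$ that meet $[r-1]$. The two natural base cases are: (a)~$j=1$, where $V_{j,p,r} = \{\{1\},\ldots,\{r-1\}\}$ and any two distinct vertices are adjacent, so the graph is a clique and any ordering works; and (b)~$r\ge p-j+1$, where every $j$-subset of $[p]$ automatically contains an element of $[r-1]$ (since a $j$-subset cannot be contained in $\{r,\ldots,p\}$ when that range has fewer than $j$ elements), so $\tilde{G}(j,p,r) = G(j,p)$ and the conclusion follows from Lemma~\ref{lemma:hamiltonian1}.

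For the inductive step, I would partition $V_{j,p,r}$ according to whether the subset contains the element $p$. Let $A = \{S \in V_{j,p,r} : p \in S\}$ and $B = \{S \in V_{j,p,r} : p\notin S\}$. The map $S\mapsto S\setminus\{p\}$ identifies $A$ with $V_{j-1,p-1,r}$, while $B$ is exactly $V_{j,p-1,r}$; the induced subgraphs are precisely $\tilde{G}(j-1,p-1,r)$ and $\tilde{G}(j,p-1,r)$, so by the inductive hypothesis each admits a Hamiltonian path. The task is then to glue these two paths along an edge of $\tilde{G}(j,p,r)$: concretely, if one path ends at $\{p\}\cup S$ and the other begins at $S\cup\{x\}$ for some $x\in[p-1]\setminus S$, concatenation yields the desired Hamiltonian path on all of $V_{j,p,r}$.

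To make this gluing work I expect to strengthen the inductive hypothesis to a statement of the form \emph{``for a suitable choice of specified endpoint vertex $v^\star\in V_{j,p,r}$ there exists a Hamiltonian path ending at~$v^\star$''}, mirroring the endpoint control built into Lemma~\ref{lemma:hamiltonian1}. The natural choice is to designate an endpoint containing element $r-1$ (so that it belongs to $V_{j,p,r}$ for free) and compatible with the recursion; for instance, I would try to arrange for the $A$-path to finish at $\{p\}\cup T$ with $T$ containing $r-1$, and the $B$-path to start at $T\cup\{x\}$, which certainly lies in $B$ since $T$ already witnesses membership in $V_{j,p-1,r}$. In degenerate corners (such as $A=\emptyset$, which occurs exactly when $j=1$ and $r\le p$, or $B=\emptyset$, which occurs when there are no $j$-subsets of $[p-1]$ meeting $[r-1]$), one of the two pieces disappears and the argument collapses to the surviving base case.

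The main obstacle, as in the proof of Lemma~\ref{lemma:hamiltonian1}, is the bookkeeping needed to guarantee that the endpoints produced by the two inductive paths actually share $j-1$ elements, so that they span an edge of $\tilde{G}(j,p,r)$. Managing this cleanly will require choosing the inductive endpoint very carefully (perhaps iterating over several sub-layers analogous to the strata $V(j',p',x)$ used in Lemma~\ref{lemma:hamiltonian1}) and verifying that each stratum contains the specified endpoint. Once this endpoint-compatible inductive hypothesis is properly set up, the rest of the argument is a direct assembly by concatenation.
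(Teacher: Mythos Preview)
Your plan is viable, but it takes a genuinely different route from the paper. The paper does not set up a fresh induction on $p$; instead it stratifies $V_{j,p,r}$ by \emph{minimum element}, writing $V(j,p,x)=\{S:\min S=x\}$ for $x=1,\ldots,r-1$. Each stratum $V(j,p,x)$ is isomorphic to the full Johnson graph $G(j-1,p-x)$ via $S\mapsto S\setminus\{x\}$, so Lemma~\ref{lemma:hamiltonian1}---with its built-in endpoint control---immediately supplies a Hamiltonian path through that stratum running between $\{x,x+1,\ldots,x+j-1\}$ and $\{x,p-j+2,\ldots,p\}$; consecutive strata are then linked by the single edge that swaps $x$ for $x+1$. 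This sidesteps your proposed endpoint-strengthened induction entirely: all the endpoint bookkeeping is delegated to Lemma~\ref{lemma:hamiltonian1}, and no new inductive hypothesis is needed. Your decomposition by membership of $p$ would also go through once the strengthened hypothesis is stated precisely and verified, but it re-creates machinery that Lemma~\ref{lemma:hamiltonian1} already provides. One small slip: in your base case~(b) the correct threshold is $r\ge p-j+2$, since at $r=p-j+1$ the set $\{r,\ldots,p\}$ is itself a $j$-subset disjoint from $[r-1]$; this boundary case is still harmless (delete the terminal vertex of the Lemma~\ref{lemma:hamiltonian1} path), but it is not literally the full $G(j,p)$.
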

\begin{proof}
  The proof is very similar to the previous one. We partition the set
  of vertices of $\tilde{G}(j, p, r)$ into $r-1$ groups $V(j, p, 1),
  \ldots, V(j, p, r-1)$, where for each $x \in \{1, \ldots, r-1\}$,
  group $V(j, p, x)$ consists of all the sets (i.e., all the vertices)
  such that $x$ is their smallest member.

  We build our Hamiltonian path for $\tilde{G}(j, p, r)$ as
  follows. We start with the vertex $\{1, \ldots, j\}$. By
  Lemma~\ref{lemma:hamiltonian1}, we can continue the path from $\{1,
  \ldots, j\}$, traverse all vertices in $V(j, p, 1)$, and end in
  $\{1, p-j+2, \ldots, p\}$. From $\{1, p-j+2, \ldots, p\}$ we can go,
  over a single edge, to $\{2, p-j+2, \ldots, p\}$, and we can
  traverse all vertices in $V(j, p, 2)$. Then we can go, over a single
  edge, to some vertex from $V(j, p, 3)$, and we can continue in the
  same way as in the proof of Lemma~\ref{lemma:hamiltonian1}.
\end{proof}

\paragraph{The Range of $\boldsymbol{\alpha_{C_1,C_2}}$.}

Let us consider two distinct committees $C_1$ and $C_2$. Using
Lemma~\ref{lemma:hamiltonian1}, we establish the dimension of the
range of function $\alpha_{C_1,C_2}$. This result will be useful in
the proof of Lemma~\ref{lemma:zeroAlpha}.

\begin{lemma}\label{lem:codomain}
  For two committees, $C_1$ and $C_2$, the range of the function
  $\alpha_{C_1,C_2}$ has dimension ${m \choose k} - 1$.
\end{lemma}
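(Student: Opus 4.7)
The plan is to prove that the range of $\alpha_{C_1,C_2}$ is precisely the hyperplane
\[
H \; = \; \Big\{ v \in \rationals^{\binom{m}{k}} : \sum_{I \in [m]_k} v[I] = 0 \Big\} \; \subset \; \rationals^{\binom{m}{k}},
\]
which has dimension $\binom{m}{k} - 1$. Equation~\eqref{sumI=0} already shows that the image of $\alpha_{C_1, C_2}$ is contained in $H$, so this yields the upper bound on the dimension. It remains to prove that every element of $H$ is hit.

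The first step is to exhibit a large family of vectors in the range. I claim that for every pair of committee positions $I, J \in [m]_k$ with $|I \cap J| = k - 1$, the vector $e_I - e_J$ (the difference of the corresponding standard basis vectors of $\rationals^{\binom{m}{k}}$) lies in the range of $\alpha_{C_1, C_2}$. Because $|C_1 \cap C_2| = k - 1$ equals $|I \cap J|$, one can build a single preference order $\pi \in \orders$ with $\pos_\pi(C_1) = I$ and $\pos_\pi(C_2) = J$: assign the $k-1$ members of $C_1 \cap C_2$ bijectively to the $k-1$ positions in $I \cap J$, place the unique element of $C_1 \setminus C_2$ on the unique position in $I \setminus J$, place the unique element of $C_2 \setminus C_1$ on the unique position in $J \setminus I$, and distribute the remaining $m - k - 1$ candidates arbitrarily on the remaining positions. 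Viewing this single vote as a voting situation $P_\pi \in \rationals^{m!}$, we directly compute $\posweight_I(C_1, P_\pi) = 1$ and $\posweight_J(C_2, P_\pi) = 1$ with all other weights being zero, so $\alpha_{C_1, C_2}(P_\pi) = e_I - e_J$.

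The second step is to upgrade this family into a basis of $H$ via a Hamiltonian path. Applying Lemma~\ref{lemma:hamiltonian1} with $j = k$ and $p = m$ yields a Hamiltonian path $I_1, I_2, \ldots, I_N$ (with $N = \binom{m}{k}$) through the Johnson graph $G(k, m)$, in which consecutive positions satisfy $|I_i \cap I_{i+1}| = k - 1$. By the previous step, each of the $N-1$ vectors $u_i := e_{I_i} - e_{I_{i+1}}$ lies in the range of $\alpha_{C_1, C_2}$. A standard telescoping argument shows that $\{u_1, \ldots, u_{N-1}\}$ is a basis of $H$: given any $v \in H$, setting $c_i := v[I_1] + \cdots + v[I_i]$ for $i = 1, \ldots, N-1$ gives $\sum_{i=1}^{N-1} c_i u_i = v$, where the last coordinate $-c_{N-1} = v[I_N]$ uses exactly that $\sum_I v[I] = 0$. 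Thus the range of $\alpha_{C_1, C_2}$ contains a basis of $H$, so it equals $H$, which has dimension $\binom{m}{k} - 1$.

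The only nontrivial step is the combinatorial construction of the single vote realizing arbitrary adjacent position pairs, and this is straightforward precisely because the hypothesis $|C_1 \cap C_2| = k-1$ matches the edge condition $|I \cap J| = k - 1$ of the Johnson graph. The role of Lemma~\ref{lemma:hamiltonian1} is then to stitch together these ``adjacent'' differences into a full basis of $H$ in a clean way; a direct linear-algebraic argument without the Hamiltonian path would also work but would be less transparent.
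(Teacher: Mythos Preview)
Your proof is correct and follows essentially the same approach as the paper's: both use Equation~\eqref{sumI=0} for the upper bound and then invoke the Hamiltonian path in the Johnson graph $G(k,m)$ from Lemma~\ref{lemma:hamiltonian1}, constructing for each edge a single vote whose $\alpha_{C_1,C_2}$-image is $e_I - e_J$. You are more explicit than the paper in spelling out why such a vote exists (matching $|C_1\cap C_2|=k-1$ to $|I\cap J|=k-1$) and in verifying via telescoping that the resulting vectors span $H$, whereas the paper simply asserts linear independence; but the strategy is identical.
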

\begin{proof}
  From Equation~\eqref{sumI=0}, we get that the dimension of the
  range of function $\alpha_{C_1,C_2}$ is at most ${m \choose k} -
  1$.  Now, let us consider graph $G = G(k, m)$ from
  Lemma~\ref{lemma:hamiltonian1} and the Hamiltonian path specified in
  this lemma. Note that we can understand each vertex in $G$ as a
  committee position.  For each edge $(I, I')$ on our Hamiltonian
  path, consider a single vote where $C_1$ stands on position $I$ and
  $C_2$ stands on position $I'$. For such a vote, $\alpha_{C_1,C_2}$
  returns a vector with all zeros except a single $1$ on position $I$
  and a single $-1$ on position $I'$. It is easy to observe that there
  are ${m \choose k} - 1$ such votes and that so constructed vectors
  are linearly independent.
\end{proof}

\paragraph{($\boldsymbol{C_1}$,$\boldsymbol{C_2}$)-Symmetric Profiles.}
The final tool that we need to provide before we prove
Lemma~\ref{lemma:zeroAlpha} is the definition of ($C_1$, $C_2$)-symmetric
profiles.  Suppose $\sigma$ is a permutation of $A$. Then we can
extend its action to linear orders and voting situations in the
natural way.

\begin{definition}\label{def:c1c2symetric}
  Let $C_1$ and $C_2$ be two size-$k$ committees. We say that a voting
  situation $P$ is \emph{($C_1$,~$C_2$)-symmetric} if there exists a
  permutation of the set of candidates $\sigma$ and a sequence of
  committees $F_1, F_2, \ldots, F_x$ such that $P = \sigma(P)$ and:
  \begin{enumerate}
  \item $C_1 = F_1 = F_x$ and $C_2 = F_2$,
  \item for each $i \in [x-1]$ it holds that $\sigma(F_i) = F_{i+1}$.
  \end{enumerate} 
\end{definition}
\noindent
If a voting situation $P$ is ($C_1$,~$C_2$)-symmetric then we know
that $C_1 =_P C_2$. Why is this the case? For the sake of
contradiction let us assume that $C_1 \neq_P C_2$, and, without loss
of generality, that $C_1 \succ_P C_2$.  From $C_1 \succ_P C_2$ (which
translates to $F_1 \succ_P F_2$) by neutrality of $f_k$ we infer that
$F_2 \succ_{\sigma(P)} F_3$, thus that $F_2 \succ_{P} F_3$. By the
same arguments, we get that $F_1 \succ_P F_2 \succ_P F_3 \succ_P
\cdots \succ_P F_x$.
In consequence, we get that $C_1 \succ_P C_1$, a contradiction.

Further, we observe that for each ($C_1$,~$C_2$)-symmetric voting
situation $P$ it holds that $\alpha_{C_1,C_2}(P) =
\zerovector$. Indeed, 
if $\sigma$ is as in Definition~\ref{def:c1c2symetric}, we note that
since $\sigma(C_1) = C_2$ and since $\sigma(P) = P$, for each
(fractional) vote in $P$ where committee $C_1$ stands on some position
$I$ we can uniquely assign a (fractional) vote in $P$ where committee
$C_2$ stands on the same position $I$.
This shows that $\alpha_{C_1,C_2}(P)[I]$ is a
vector of non-positive numbers. By an analogous argument (using the
fact that $\sigma^{(-1)}(C_2) = C_1$ and $\sigma^{(-1)}(P) = P$) we
infer that $\alpha_{C_1,C_2}(P)[I]$ is a vector of nonnegative numbers,
and, so, we conclude that
$\alpha_{C_1,C_2}(P) = \zerovector$.

\paragraph{Inferring Committee Equivalence Using $\boldsymbol{\alpha_{C_1,C_2}}$.}
We are ready to present
Lemma~\ref{lemma:zeroAlpha}, our main technical tool required in this
part of the proof. On the intuitive level, it says that for $|C_1 \cap
C_2| = k-1$ the information provided by the function
$\alpha_{C_1,C_2}$ in relation to a profile $P$ is sufficient to
distinguish whether $C_1$ is equivalent to $C_2$ with respect to $P$.


\begin{lemma}\label{lemma:zeroAlpha}
  For each two committees $C_1, C_2 \in \powA$ such that $|C_1 \cap
  C_2| = k-1$ and for each voting situation $P \in \rationals^{m!}$,
  if $\alpha_{C_1,C_2}(P) = \zerovector$ then $C_1 =_P C_2$.
\end{lemma}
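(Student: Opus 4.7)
The strategy is to prove the inclusion $\ker\alpha_{C_1,C_2} \subseteq \mathcal{E}$, where $\mathcal{E} := \{P \in \rationals^{m!} : C_1 =_P C_2\}$, by exhibiting a spanning set of $\ker\alpha_{C_1,C_2}$ consisting of $(C_1,C_2)$-symmetric profiles. Once such a spanning set is in hand, the conclusion is immediate: each $(C_1,C_2)$-symmetric profile lies in $\mathcal{E}$ (as already established in the discussion following Definition~\ref{def:c1c2symetric}), and $\mathcal{E}$ is closed under rational linear combinations, so every $P \in \ker\alpha_{C_1,C_2}$ also lies in $\mathcal{E}$.

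The first step is to verify that $\mathcal{E}$ is indeed a linear subspace of $\rationals^{m!}$. Closure under addition follows from consistency (extended to the rational domain via Lemma~\ref{thm:uniqe-ext}); closure under subtraction follows from Remark~\ref{rem:using-consistency}; and closure under rational scaling follows from homogeneity together with the fact that negating a profile flips the rule's strict output while preserving equality. Moreover, every $(C_1,C_2)$-symmetric profile already lies in $\ker\alpha_{C_1,C_2}$ by the pairing-of-votes observation recorded after Definition~\ref{def:c1c2symetric}, so the span $W$ of all $(C_1,C_2)$-symmetric profiles is contained in $\mathcal{E} \cap \ker\alpha_{C_1,C_2}$.

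The crux of the proof is the reverse containment $\ker\alpha_{C_1,C_2} \subseteq W$. By Lemma~\ref{lem:codomain}, $\dim\ker\alpha_{C_1,C_2} = m! - \binom{m}{k}+1$, so one must produce this many linearly independent $(C_1,C_2)$-symmetric profiles. Writing $C_1 = B\cup\{a\}$ and $C_2 = B\cup\{b\}$, I combine two families. First, the profiles $\pi + \tau(\pi)$ for $\pi\in\orders$ and $\tau = (a\;b)$ are $\tau$-invariant (hence $(C_1,C_2)$-symmetric) and span the $\tau$-fixed subspace of $\rationals^{m!}$, of dimension $m!/2$. Second, for permutations $\sigma$ with $\sigma(C_1)=C_2$ whose cycle structure also involves elements of $B$ or of $A\setminus(C_1\cup C_2)$, the orbit-sum profiles $\sum_{i=0}^{\mathrm{ord}(\sigma)-1}\sigma^i(\pi)$ give additional $(C_1,C_2)$-symmetric profiles. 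I use the Hamiltonian paths in the Johnson graphs supplied by Lemmas~\ref{lemma:hamiltonian1} and~\ref{lemma:hamiltonian2} to organize the choice of $\sigma$ and $\pi$: walking the path one committee position at a time, at each new vertex I introduce a fresh orbit profile chosen so that it is linearly independent of the profiles already produced (modulo the $\tau$-fixed part).

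The main obstacle is precisely this spanning argument. Its difficulty lies in the combinatorial bookkeeping needed to confirm linear independence and to hit the exact target dimension $m! - \binom{m}{k}+1$; this is the point at which the multiwinner setting diverges from the single-winner treatment of Young and Merlin and requires the original constructions advertised in the section introduction. Once spanning is established, $W = \ker\alpha_{C_1,C_2}$, so the chain $\ker\alpha_{C_1,C_2} = W \subseteq \mathcal{E}$ yields $C_1 =_P C_2$ whenever $\alpha_{C_1,C_2}(P) = \zerovector$, as required.
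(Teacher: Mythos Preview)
Your overall strategy---show that $\ker\alpha_{C_1,C_2}$ is spanned by $(C_1,C_2)$-symmetric profiles, each of which lies in $\mathcal{E}$ by the remark after Definition~\ref{def:c1c2symetric}, and then use closure of $\mathcal{E}$ under rational linear combinations---is exactly the paper's strategy. The gap is that you do not actually carry out the spanning argument; you flag it as ``the main obstacle'' and leave it at the level of a plan.

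The paper's execution of this step differs substantially from what you sketch. The paper proceeds by a two-dimensional induction on $(k,m)$, with Young--Merlin's single-winner basis as the base case $k=1$. For the inductive step it fixes a distinguished candidate $a_j\in C_1\cap C_2$, builds a first block $B_1$ by inserting $a_j$ at each of the $p$ positions into every profile from the basis for $(k-1,m-1)$, and then builds a second block $B_2$ of ``distinctive-vote'' profiles organised by Hamiltonian paths in the graphs of Lemma~\ref{lemma:hamiltonian2}, one graph for each position $r$ of $a_j$; the vertices of those graphs are $(j-1)$-element sets of positions for $\{a_1,\dots,a_{j-1},a_1'\}$, not committee positions in $[m]_k$. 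The profiles in $B_2$ use either the $3$-cycle $(a_1,a_j,a_1')$ or the double transposition $(a_1,a_1')(a,a_j)$ with $a\in C_1\cap C_2$, and linear independence of $B_1\cup B_2$ is established via auxiliary linear functionals $\beta_{r,R}$ that detect the last edge used along each Hamiltonian path.

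Your alternative first block---the $\tau$-fixed subspace for $\tau=(a\ b)$, of dimension $m!/2$---is legitimate and avoids the induction, but then you must produce $m!/2-\binom{m}{k}+1$ further symmetric profiles that are independent \emph{modulo} the $\tau$-fixed part, and your description (``walking the path one committee position at a time, at each new vertex I introduce a fresh orbit profile'') is too vague to verify: it does not specify which $\sigma$ and which seed vote $\pi$ to take at each step, nor give any functional that witnesses independence from what came before. Without that, the proposal is a plausible outline rather than a proof. If you want to pursue the non-inductive route, you will need a concrete construction and independence argument at least as detailed as the paper's $\beta_{r,R}$ device; otherwise, the inductive architecture is what makes the counting and independence tractable.
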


\begin{proof}
  The kernel of a linear function is the space of all vectors for
  which this function returns the zero vector. In particular, the
  kernel of $\alpha_{C_1,C_2}$, denoted $\ker(\alpha_{C_1,C_2})$, is
  the space of all voting situations $P$ such that
  $\alpha_{C_1,C_2}(P) = \zerovector$.
  Since the domain of function $\alpha_{C_1,C_2}$ has dimension $m!$
  and, by Lemma~\ref{lem:codomain}, its range has dimension ${m
    \choose k} - 1$, the kernel of $\alpha_{C_1,C_2}$ has dimension
  $m! - {m \choose k} + 1$.  We will construct a base of this kernel
  that will consists of ($C_1$,~$C_2$)-symmetric voting situations
  only. Since for each ($C_1$,~$C_2$)-symmetric voting situation $P$
  it holds that $C_1 =_P C_2$ and $\alpha_{C_1,C_2}(P) = \zerovector$,
  by consistency of $f_k$ and linearity of $\alpha_{C_1,C_2}$ we will
  prove the conclusion of the theorem.

  We prove the statement by a two-dimensional induction on $k$
  (the committee size) and $m$ (the size of the set of candidates). As a base
  for the induction we will show that the property holds for $k = 1$
  and all values of $m$. For the inductive step we will show that from
  the fact that the property holds for committee size $j-1$ and for
  $p-1$ candidates it follows that the property also holds for
  committee size $j$ and for $p$ candidates. This will allow us to
  conclude that the property holds for all values of $m$ and $k$ with
  $m \geq k$.

  For $k = 1$ and for an arbitrary value of $m$, the problem collapses
  to the single-winner setting. It has been shown by
  Young~\shortcite{you:j:scoring-functions} (and by
  Merlin~\shortcite{merlinAxiomatic}) that for each two candidates
  $c_1$ and $c_1'$, there exists a base of
  $\ker(\alpha_{\{c_1\},\{c_1'\}})$ that consists of $m! - (m-1)$
  voting situations which are ($\{c_1\}$,~$\{c_1'\}$)-symmetric. This
  gives us the base for the induction.

  Let us now prove the inductive step.  We want to show that the
  statement is satisfied for $A_p= \{a_1, a_2, \ldots a_{p}\}$, $C_{1,
    j} = \{a_1, a_2, \ldots, a_{j}\}$ and $C_{2, j} = \{a_1', a_2,
  \ldots, a_{j}\}$, where we set $a'_1 = a_{j+1}$. (We note that since
  $f_k$ is symmetric, the exact names of the candidates we use here
  are irrelevant, and we picked these for notational convenience.)
  From the sets $A_p$, $C_{1, j}$ and $C_{2, j}$ we take out element
  $a_j$ and get $A_{p-1} = \{a_1, a_2, \ldots, a_{j-1}, a_{j+1},
  \ldots a_{p}\}$, $C_{1, (j-1)} = \{a_1, a_2, \ldots, a_{j-1}\}$ and
  $C_{2, (j-1)} = \{a_1', a_2, \ldots, a_{j-1}\}$.  Let $V_{j-1}$ be a
  base of $\ker(\alpha_{C_{1, (j-1)}, C_{2, (j-1)}})$ that consists of
  ($C_{1, (j-1)}$,~$C_{2, (j-1)}$)-symmetric voting situations. We know
  that it exists from the induction hypothesis. We also know that it
  consists of $(p-1)! - {p - 1 \choose j-1} + 1$ voting situations. We
  now build the desired base for $\ker(\alpha_{C_{1, j}, C_{2, j}})$
  using $V_{j-1}$ as the starting point. Our base has to consist of
  $p! - {p \choose j} + 1$ linearly independent,
  ($C_{1,j}$,~$C_{2,j}$)-symmetric voting situations. \medskip

  First, for each voting situation $P \in V_{j-1}$ and for each $r \in
  \{1, \ldots p\}$ we create a voting situation $P_r$ as follows. We
  take each vote $v$ in $P$ and we put $a_j$ in the $r$-th position of
  $v$, pushing the candidates on positions $r, r+1, r+2, \ldots$ back
  by one position, but keeping their relative order unchanged. There
  are $p! - p{p - 1 \choose j-1} + p$ such vectors and it is easy to
  see that they are linearly independent. Let us refer to the set of
  these vectors as $B_1$. Naturally, the vectors from $B_1$ do not
  span the whole space $\ker(\alpha_{\{a_1, \ldots, a_j\},\{a_1',
    \ldots, a_j\}})$; there is simply too few of them. However, there
  is also a certain structural reason for this and understanding this
  reason will help us further in the proof.  Let $\lin(B_1)$ denote
  the set of linear combinations of voting situations from $B_1$. For
  each $r \in \{1, \ldots, p\}$ and each $T \in \lin(B_1)$, let $T(a_j
  \rightarrow r)$ denote the voting situation that consists of all
  votes from $T$ which have $a_j$ on the $r$-th position. We can see
  that for each $r \in \{1, \ldots, p\}$ and each $T \in \lin(B_1)$, it
  holds that $\alpha_{C_{1,j},C_{2,j}}(T(a_j \rightarrow r)) =
  \zerovector$ (the reason for this is that $T(a_j \rightarrow r)$ is,
  in essence, a linear combination of voting situations from
  $V_{j-1}$, with $a_j$ inserted at position $r$) . This property
  certainly does not hold for all the voting situations in
  $\ker(\alpha_{\{a_1, \ldots, a_j\},\{a_1', \ldots,
    a_j\}})$. \medskip

  We now form the second part of our base, denoted $B_2$ and
  consisting of $p {p-1 \choose j-1} \cdot \frac{j-1}{j} - (p-1)$
  voting situations (($C_{1,j}$,~$C_{2,j}$)-symmetric and linearly
  independent from each other and all the voting situations in
  $B_1$). We start constructing each voting situation in $B_2$ by
  constructing its distinctive vote. To construct a distinctive vote,
  we first select the position for candidate $a_j$; we consider each
  position from $\{1, \ldots, p\}$. Let us fix $r \in \{1, \ldots, p\}$
  as the position that we picked. Next, we select a set of $j$
  positions for the candidates from $\{a_1, \ldots, a_{j-1},
  a_1'\}$. To do that, we first construct the following graph. We
  associate all sets of $j-1$ positions such that $r$ is
  greater\footnote{There is a possible point of confusion here. By
    ``greater'' we mean greater as a number. So, for example, position
    $7$ is greater than position $5$ (even though we would say that a
    candidate ranked on position $5$ is ranked higher than candidate
    ranked on position $7$).} than at least one of them with vertices
  (for a fixed $r$ there are ${p-1 \choose j-1} - {p-r \choose j-1}$
  such vertices; we choose $j-1$ positions out of $p-1$ still
  available, but we omit the situations where all these $j-1$
  positions are greater than $r$). We say that two vertices are
  connected if the corresponding sets differ by exactly one
  element. From Lemma~\ref{lemma:hamiltonian2} it follows that such a
  graph contains a Hamiltonian path.
  Now, for each edge $(X, X')$ on the considered Hamiltonian path we
  do the following. Let $B = X \cap X'$, and let $b$ and $b'$ be the
  two elements such that $b < b'$ and $\{b, b'\} = (X \setminus B)
  \cup (X' \setminus B)$.  (In other words, $b$ and $b'$ are the two
  elements on which $X$ and $X'$ differ.)  Note that $|B| = j-2$.  We
  form a distinctive vote by putting candidate $a_j$ on position $r$,
  candidates $a_2, \ldots, a_{j-1}$ on the positions from $B$ (in some
  arbitrary order), $a_1$ on position $b$, $a_1'$ on position $b'$,
  and all the other candidates on the remaining positions (in some
  arbitrary order).

  How many distinctive votes have we constructed? There are $p$
  possible values for the position of $a_j$, and for each such
  position we consider a graph. If the position of $a_j$ is $r$, then
  the graph has ${p-1 \choose j-1} - {p-r \choose j-1}$ vertices.
  Thus, altogether, the number of vertices is:
  %
  \begin{align*}
    \sum_{r=1}^{p} \left( {p-1 \choose j-1} - {p-r \choose j-1}
    \right)
    & = p {p-1 \choose j-1} - \sum_{r=1}^p {p-r \choose j-1} \\
    & = p {p-1 \choose j-1} - {p \choose j} = p {p-1 \choose j-1} - \frac{p}{j}{p-1 \choose j-1} = p{p-1 \choose j-1}\frac{j-1}{j},
  \end{align*}
  where the second equality follows from the following property of
  binomial coefficients: for $m,n \in \naturals$ we have
  $\sum_{k=0}^n{k \choose m} = {n+1 \choose m+1}.$ (An intuitive way
  to obtain the same result is as follows.  Let us fix the value $r$
  chosen uniformly at random. The vertices for the graph for this
  value of $r$ are size-$(j-1$) subsets of $p-1$ positions, except
  those subsets that contain only elements greater than $r$.  By
  symmetry, on the average the number of subsets that we omit is a
  $\nicefrac{1}{j}$ fraction of all the subsets. Since we have all the
  graphs for all values of $r$, altogether we have $p{p-1 \choose
    j-1}\frac{j-1}{j}$ vertices.)  One of the graphs is empty (it is
  the one that is constructed for $r = 1$, because there is no element
  in $\{1, \ldots, p\}$ lower than $r=1$). Thus we have $p-1$
  non-empty graphs. As a result, the total number of edges in the
  considered Hamiltonian paths is $p{p-1 \choose j-1}\frac{j-1}{j} -
  (p-1)$.  Every edge corresponds to a distinctive vote, so this is
  also the number of distinctive votes constructed.

 For each distinctive vote $v$ constructed, we build the following voting
  situation:

  \begin{description}
  \item[Case 1.] If $a_1$ and $a_1'$ are both ranked ahead of
  $a_j$, then we let $\tau$ be permutation $\tau:= (a_1, a_j, a_1')$
  (i.e., we let $\tau$ be the identity permutation except that
  $\tau(a_1) = a_j$, $\tau(a_j) = a_1'$, $\tau(a_1') = a_1$)
  and we let the voting situation consist of three votes, $v$, $\tau(v)$, and
  $\tau^{(2)}(v)$:
  \begin{align*}
    v&\colon \cdots \succ a_1 \succ \cdots \succ a_1' \succ \cdots \succ a_j \succ \cdots \\
    \tau(v)&\colon \cdots \succ a_j \succ \cdots \succ a_1 \succ
    \cdots \succ a_1' \succ \cdots \\
    \tau^{(2)}(v)&\colon \cdots \succ a_1' \succ \cdots \succ a_j \succ \cdots \succ a_1 \succ \cdots
  \end{align*}
  Note that permutation $\tau$ and the sequence $F_1 = \{a_1, \ldots, a_j\}$, $F_2 = \{a_2, \ldots, a_j, a_1'\}$, $F_3 = \{a_1, \ldots, a_{j-1}, a_1'\}$, $F_4 = \{a_1, \ldots, a_{j}\}$ witness that this voting situation is
  ($C_{1,j}$,~$C_{2,j}$)-symmetric.

\item[Case 2.] If it is not the case that $a_1$ and $a'_1$ are both
  ranked ahead of $a_j$ in distinctive vote $v$, then we know that
  there is some other candidate $a \in \{a_2, \ldots, a_{j-1}\}$
  ranked ahead of $a_j$. This is due to our construction of
  distinctive votes---we always put $a_j$ on position $r$ and make
  sure that there is some candidate ranked on a position ahead of
  $r$. If all the candidates $a_2, \ldots, a_{j-1}$ were ranked behind
  $a_j$, then it would have to be the case that both $a_1$ and $a'_1$
  are ranked ahead of $a_j$.\footnote{To see why this is the case,
    recall how the distinctive votes are produced. We have an edge
    $(X,X')$ on a Hamiltonian path in our graph. We set $B = X \cap
    X'$ and $\{b,b'\} = (X \setminus B) \cup (X' \setminus B)$.  $B$
    contains positions of the candidates $a_2, \ldots, a_{j-1}$,
    whereas $b$ and $b'$ are positions of $a_1$ and $a'_1$. Without
    loss of generality, we can take $X = B \cup \{b\}$ and $X' = B
    \cup \{b'\}$.  Since---by our assumption here---the positions of
    $a_2, \ldots, a_{j-1}$ (i.e., the positions in $B$) are greater
    than the position of $a_j$ (denoted $r$ in the description of
    distinctive votes construction), for $X$ and $X'$ to be vertices
    in the graph, we need both $b$ and $b'$ to be smaller than $r$
    (and, in effect, both $a_1$ and $a'_1$ precede $a_j$).}  Since it
  is not the case that both $a_1$ and $a'_1$ are ranked ahead of
  $a_j$, there must be some other candidate from $\{a_2, \ldots,
  a_{j-1}\}$ that is. We call this candidate $a$.
  We let $\rho$ be permutation $\rho:= (a_1, a_1')(a, a_j)$ (i.e., we
  let $\rho$ be the identity permutation, except that it swaps $a_1$
  with $a'_1$ and $a$ with $a_j$). We form a voting situation that
  consists of $v$ and $\rho(v)$:
  \begin{align*}
    v&\colon \cdots \succ a \succ \cdots \succ a_j \succ \cdots \succ a_1 \succ \cdots \succ a_1' \succ \cdots \\
    \rho(v)&\colon \cdots \succ a_j \succ \cdots \succ a \succ
    \cdots \succ a'_1 \succ \cdots \succ a_1 \succ \cdots
  \end{align*}
  Permutation $\rho$ and the sequence $F_1 = \{a_1, \ldots, a_j\}$, $F_2 = \{a_2, \ldots, a_j, a_1'\}$, $F_3 = \{a_1, \ldots, a_{j}\}$ witness that this is a
  ($C_{1,j}$,~$C_{2,j}$)-symmetric voting situation.  
  \end{description}
  Let $B_2$ consists of all the voting situations constructed from the
  distinctive votes.\medskip

  For each $r \in \{1, \ldots, p\}$, each set of $j-1$ positions $R$
  from $\{1, \ldots, p\} \setminus \{r\}$, and each voting situation
  $P$, we define $\gamma_{r, R}(P)$ to be the total (possibly
  fractional) number of votes from $P$ that have $a_j$ on the $r$-th
  position and that have candidates from $\{a_1, a_2, \ldots,
  a_{j-1}\}$ on positions from $R$. We define $\gamma'_{r, R}(P)$
  analogously, for the votes where $a_j$ is on position $r$ and
  candidates $a'_1, a_2, \ldots, a_{j-1}$ take positions from $R$.  We
  define $\beta_{r,R}(P)$ to be $\gamma_{r,R}(P) - \gamma'_{r,R}(P)$.
  For example, for each $P \in B_1$ we have $\beta_{r, R}(P) = 0$.
  
  Let us consider voting situations from $B_2$ which were created from
  a single Hamiltonian path in one of the graphs. The distinctive
  votes for all these voting situations have $a_j$ on the same
  position; we denote this position by $r$. For each such voting
  situation $P$, each non-distinctive vote belonging to $P$ has $a_j$
  on a position ahead of position $r$. Further, we see that there
  exist exactly two sets $R_1$ and $R_2$ such that $\beta_{r, R_1}(P)
  \neq 0$ and $\beta_{r, R_2}(P) \neq 0$. These are the sets that
  correspond to the vertices connected by the edge from which the
  distinctive vote for $P$ was created (for one of them, let us say
  $R_1$, we have $\beta_{r, R_1}(P) = 1$, and for the other we have
  $\beta_{r, R_2}(P) = -1$; to see that this holds, recall that $a_j$
  is ranked on positions ahead of $r$ in non-distinctive votes and,
  thus, it suffices to consider the distinctive vote only).

  Now we are ready to explain why the vectors from $B_1 \cup B_2$ are
  linearly independent. For each nontrivial linear combination $L$ of
  the vectors from $B_1 \cup B_2$ we will show that $L$ cannot be
  equal to the zero vector. For the sake of contradiction let us
  assume that $L = \zerovector$. We start by showing that all
  coefficients of vectors from $B_2$ in $L$ are equal to zero. Again,
  for the sake of contradiction let us assume that this is not the
  case. Let $B'_2$ consist of those vectors from $B_2$ that appear in
  $L$ with non-zero coefficients. Let $r$ be the largest position of
  $a_j$ in some vote in $B'_2$ (by ``largest position'' we mean
  largest numerically, i.e., for each vote $v$ that occurs in some
  voting situation from $B'_2$ it holds that $\pos_v(a_j) \leq
  r$). Let $B'_{2,r}$ be the set of all voting situations from $B'_2$
  that have some votes which have $a_j$ on position $r$. Each voting
  situation in $B'_{2,r}$ consists of either two or three
  votes. However, the votes belonging to those voting situations which
  have $a_j$ on position $r$ must be distinctive votes (all
  non-distinctive votes for voting situations in $B_2$ have $a_j$ on
  positions ahead of $r$). Each such distinctive vote is built from an
  edge of a single Hamiltonian path (they come from the same
  Hamiltonian path because otherwise they would not have $a_j$ on the
  same position). Let $S$ be a voting situation in $B'_{2,r}$ that has
  a distinctive vote built from the latest edge on the path, among the
  edges that contributed voting situations to $B'_{2,r}$ (to make this
  notion meaningful, we orient the path in one of the two possible
  ways). Let $R_1$ and $R_2$ be the sets of $j-1$ positions that form
  this edge. By the reasoning from the previous paragraph we have that
  $\beta_{r, R_1}(S) \neq 0$, $\beta_{r, R_2}(S) \neq 0$, and one of
  the following two conditions must hold (depending on the orientation
  of the Hamiltonian path that we chose):
  \begin{enumerate}
  \item For each voting situation $Q'$ in $B'_2$ other than $S$ we have
    $\beta_{r, R_1}(Q') = 0$.
  \item For each voting situation $Q'$ in $B'_2$ other than $S$ we have
    $\beta_{r, R_2}(Q') = 0$.
  \end{enumerate}
  Further, for each $Q \in B_1$ we have $\beta_{r, R_1}(Q) =
  \beta_{r, R_2}(Q) = 0$.  Thus, since $\beta_{r, R_1}$ and
  $\beta_{r, R_2}$ are linear functions, we have that either
  $\beta_{r, R_1}(L) \neq 0$ or $\beta_{r, R_2}(L) \neq 0$.  Thus,
  $L$ cannot be a zero-vector, which gives a contradiction.

  We have shown that all coefficients of vectors from $B_2$ used to
  form $L$ are equal to zero. Thus $L$ must be a linear combination of
  vectors from $B_1$. However, the vectors from $B_1$ are linearly
  independent, which means that if $L$ is $\zerovector$, then the
  coefficients of all the vectors from $B_1$ are zeros.  Thus we
  conclude that the vectors from $B_1 \cup B_2$ are linearly
  independent.

  It remains to show that $B_1 \cup B_2$ indeed forms a base of the
  kernel of $\alpha_{C_{1,j},C_{2,j}}$. Since vectors in $B_1$ and
  $B_2$ are linearly independent, it suffices to check that the
  cardinality of $B_1 \cup B_2$ is equal to the dimension of
  $\ker(\alpha_{C_{1,j},C_{2,j}})$.  The number of vectors in $B_1
  \cup B_2$ is equal to:
  \begin{align*}
    \underbrace{\left( p! - p{p - 1 \choose j-1} + p \right)}_{|B_1|}
    + \underbrace{\left( p {p-1 \choose j-1} \cdot \frac{j-1}{j} - p +
        1\right)}_{|B_2|} = p! - \frac{p}{j} {p-1 \choose j-1} + 1 =
    p! - {p \choose j} + 1.
  \end{align*}
  This completes our induction. The proof works for arbitrary
  committees $C_1$ and $C_2$ with $|C_1\cap C_2|=k-1$ due to symmetry
  of $f_k$.
\end{proof}

We are almost ready to show that for committees that differ by one
candidate only, $f_k$ is a committee scoring rule, and to derive its
committee scoring function. However, before we do that we need to
change the domain once again. We will also need some notions from
topology.

\paragraph{Topological Definitions.}
For every set $S$ in some Euclidean space $\reals^n$, by $\inter(S)$
we mean the interior of $S$, i.e., the largest (in terms of inclusion)
open set contained in $S$. By $\cvx(S)$ we mean the convex hull of
$S$, i.e., the smallest (in terms of inclusion) convex set that
contains $S$. Finally, by $\overline{S}$ we define the closure of $S$,
i.e., the smallest (in terms of inclusion) closed set that contains
$S$.
We use the concept of $\rationals$-convex sets of
Young~\shortcite{you:j:scoring-functions} and we recall his two
observations.

\begin{definition}[$\rationals$-convex sets]
  A set $S \subseteq \reals^n$ is $\rationals$-convex if $S \subseteq
  \rationals^n$ and for each $s_1, s_2 \in S$ and each $q \in
  \rationals$, $0 \leq q \leq 1$, it holds that $q \cdot s_1 + (1 - q)
  \cdot s_2 \in S$.
\end{definition}

\begin{lemma}[Young~\shortcite{you:j:scoring-functions}]\label{lemma:qconvex1}
  Set $S \subseteq \reals^n$ is $\rationals$-convex if and only if $S
  = \rationals^n \cap \cvx(S)$.
\end{lemma}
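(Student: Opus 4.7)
The plan is to prove the two inclusions separately, with the forward direction being the more substantive one. For the backward direction, assume $S = \rationals^n \cap \cvx(S)$. Then $S \subseteq \rationals^n$ is immediate. Given $s_1, s_2 \in S$ and $q \in \rationals \cap [0,1]$, the point $qs_1 + (1-q)s_2$ lies in $\cvx(S)$ by convexity of the hull, and it lies in $\rationals^n$ because $s_1, s_2, q$ are all rational. Hence it belongs to $\rationals^n \cap \cvx(S) = S$, showing $\rationals$-convexity.

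For the forward direction, assume $S$ is $\rationals$-convex. The inclusion $S \subseteq \rationals^n \cap \cvx(S)$ is immediate from the definition. For the reverse inclusion, take $x \in \rationals^n \cap \cvx(S)$. By Carath\'eodory's theorem (or just the definition of $\cvx(S)$), we can write $x = \sum_{i=1}^\ell \mu_i s_i$ with $s_i \in S$, $\mu_i \geq 0$ and $\sum_i \mu_i = 1$, where a priori the $\mu_i$ are real. The first key step is to replace these real coefficients with rational ones: since $x$ and all $s_i$ have rational coordinates, the affine system $\sum_i \lambda_i s_i = x$, $\sum_i \lambda_i = 1$ has rational coefficients, and the nonempty set of its solutions with $\lambda_i \geq 0$ is a rational polytope and therefore contains a rational point $(\lambda_1, \ldots, \lambda_\ell)$. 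Thus $x$ admits a representation as a convex combination of points of $S$ with rational weights.

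The second key step is to reduce this length-$\ell$ convex combination to an iterated sequence of length-$2$ combinations, so that the definition of $\rationals$-convexity can be applied. Proceed by induction on $\ell$. The case $\ell = 1$ is trivial. For $\ell \geq 2$, if $\lambda_1 = 1$ we are done; otherwise write
\[
x = \lambda_1 s_1 + (1-\lambda_1)\, y, \qquad y = \sum_{i=2}^{\ell} \frac{\lambda_i}{1-\lambda_1}\, s_i,
\]
where the coefficients $\lambda_i / (1-\lambda_1)$ are nonnegative rationals summing to $1$. Since each $s_i$ has rational coordinates, $y \in \rationals^n$, and by the inductive hypothesis $y \in S$. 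Then $\rationals$-convexity applied to $s_1, y \in S$ and the rational weight $\lambda_1 \in [0,1]$ gives $x \in S$, completing the proof.

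The only mildly delicate step is the passage from real to rational convex coefficients; everything else is a direct application of definitions and a clean induction. I would expect no real obstacle beyond noting that a nonempty rational polyhedron always contains a rational point (which follows from the fact that a vertex of such a polyhedron, obtained by solving a rational linear system, is rational).
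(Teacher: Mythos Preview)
Your proof is correct. Note that the paper does not supply its own proof of this lemma; it simply attributes the result to Young and states it without argument, so there is no in-paper proof to compare against. Your argument is the natural one: the backward direction is immediate, and for the forward direction you correctly pass from an arbitrary real convex combination to a rational one (using that a nonempty bounded rational polyhedron has a rational vertex), after which a clean induction on the number of summands reduces to the pairwise definition of $\rationals$-convexity. The only step worth a moment of care is the existence of rational coefficients, and your justification via vertices of rational polyhedra is sound.
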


\begin{lemma}[Young~\shortcite{you:j:scoring-functions}]\label{lemma:qconvex2}
  If a set $S$ is $\rationals$-convex, then $\overline{S} =
  \overline{\cvx(S)}$; moreover, $\overline{S}$ is convex.
\end{lemma}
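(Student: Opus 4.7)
The plan is to establish the sandwich $\overline{S} \subseteq \overline{\cvx(S)} \subseteq \overline{S}$, after which convexity of $\overline{S}$ follows from the standard fact that the closure of a convex set (in a normed space) is convex. The first inclusion is immediate since $S \subseteq \cvx(S)$. The whole work, therefore, goes into showing $\cvx(S) \subseteq \overline{S}$, from which $\overline{\cvx(S)} \subseteq \overline{\overline{S}} = \overline{S}$.

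To prove $\cvx(S) \subseteq \overline{S}$, I would first upgrade the binary $\rationals$-convexity hypothesis to a $k$-ary version: for any $s_1,\ldots,s_k \in S$ and any rationals $q_1,\ldots,q_k \geq 0$ with $\sum_i q_i = 1$, the combination $\sum_i q_i s_i$ lies in $S$. This is a direct induction on $k$: if $q_k < 1$, write $\sum_i q_i s_i = (1-q_k)\bigl(\sum_{i<k} \tfrac{q_i}{1-q_k}\, s_i\bigr) + q_k s_k$, apply the inductive hypothesis to the inner sum (its coefficients are nonnegative rationals summing to $1$), and then invoke binary $\rationals$-convexity.

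Next, take an arbitrary $x \in \cvx(S)$, written as $x = \sum_{i=1}^k \lambda_i s_i$ with $s_i \in S$, $\lambda_i \geq 0$, and $\sum_i \lambda_i = 1$. The goal is to construct a sequence of rational convex combinations of the $s_i$ converging to $x$. By density of $\rationals$ in $\reals$, for every $\epsilon > 0$ I can pick rationals $q_1,\ldots,q_{k-1} \geq 0$ with $|q_i - \lambda_i| < \epsilon$ and $q_1 + \cdots + q_{k-1} \leq 1$ (when some $\lambda_i$ equals $0$, simply set the corresponding $q_i = 0$); then $q_k := 1 - \sum_{i<k} q_i$ is a nonnegative rational close to $\lambda_k$. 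By the $k$-ary version above, $\sum_i q_i s_i \in S$, and these points converge to $x$ as $\epsilon \to 0$. Hence $x \in \overline{S}$.

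The only mild subtlety is the rational approximation of the coefficients while preserving both nonnegativity and the normalization $\sum q_i = 1$; the $k = 1$ and boundary cases $\lambda_i = 0$ should be handled explicitly as above so that no negative rational coefficient is ever produced. Once $\overline{S} = \overline{\cvx(S)}$ is established, convexity of $\overline{S}$ is immediate from the classical fact that the topological closure of a convex subset of $\reals^n$ is convex.
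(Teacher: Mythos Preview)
The paper does not include its own proof of this lemma; it is simply quoted from Young~\cite{you:j:scoring-functions} and used as a black box. Your argument is correct and is the standard one: extend binary $\rationals$-convexity to finitary rational convex combinations by induction, then approximate an arbitrary real convex combination $\sum_i \lambda_i s_i$ by rational convex combinations (choosing $q_i \in \rationals \cap [0,\lambda_i]$ with $|q_i - \lambda_i| < \epsilon$ for $i<k$ and setting $q_k = 1 - \sum_{i<k} q_i$) to get $\cvx(S) \subseteq \overline{S}$; the two closures then coincide and convexity of $\overline{S}$ follows from convexity of $\cvx(S)$. The only point to be slightly more explicit about in a full write-up is guaranteeing $\sum_{i<k} q_i \leq 1$ simultaneously with $q_i \geq 0$, but as you note this is immediate once one chooses each positive $q_i$ no larger than the corresponding $\lambda_i$.
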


\paragraph{Third Domain Change.}

In the following arguments, we fix two arbitrary committees $C_1$ and
$C_2$ such that $|C_1 \cap C_2| = k-1$ and focus on them.  (In other
words, we consider function $f_{C_1, C_2}$ instead of $f_k$.) In this
case, Lemma~\ref{lemma:zeroAlpha} allows us to change the domain of
the function.

Let us consider two voting situations $P$ and $Q$ such that
$\alpha_{C_1,C_2}(P) = \alpha_{C_1,C_2}(Q)$. Since $\alpha_{C_1,C_2}$
is a linear function, we have $\alpha_{C_1,C_2}(P - Q) = \zerovector$.
Thus, by Lemma~\ref{lemma:zeroAlpha}, we know that $C_1 =_{P - Q}
C_2$.  We can express $Q$ as $Q = P + (Q - P)$ and thus, by
consistency of $f_{C_1,C_2}$, we have that:
\[
  C_1 \succ_{P} C_2 \iff C_1 \succ_{Q} C_2.
\]
Consequently, to answer the question ``what is the relation between
committees $C_1$ and $C_2$ according to $f_{C_1, C_2}$ in voting
situation $P$?'' it suffices to know the value $\alpha_{C_1,C_2}(P)$.
This is exactly because for any two profiles, $P$ and $Q$, with the
same values of function $\alpha_{C_1,C_2}$ the result of comparison of
committees $C_1$ and $C_2$ according to $f_{C_1, C_2}$ is the same in
$P$ and $Q$.

In effect, we can restrict the domain of $f_{C_1, C_2}$ to an
$\left({m \choose k} - 1\right)$-dimensional space $D$:
\begin{align*}
  D = \left\{P \in \rationals^{m \choose k}: \sum_{I \in [m]_k}P[I] = 0 \right\} \textrm{.}
\end{align*}
We interpret elements of $D$ as the values of the committee
position-difference function $\alpha_{C_1,C_2}$ and, so, the condition
$\sum_{I \in [m]_k}P[I] = 0$ corresponds to the property of committee
position-difference functions given in
Equation~\eqref{sumI=0}. By the argument given prior to the definition
of $D$, we know that from the point of view of comparing committees
$C_1$ and $C_2$ using function $f_{C_1,C_2}$, the vector of values
$\alpha_{C_1,C_2}$ provides the same information as a voting situation
from which it is obtained. Thus, we can think of elements of $D$ as
corresponding to voting situations.

\paragraph{Separating Two Committees.}
We proceed by defining two sets, $D_1, D_2 \subseteq D$, such that:
\[
  D_1 = \{P \in D: C_1 \succ_P C_2\} 
  \quad \text{and} \quad 
  D_2 = \{P \in D: C_2 \succ_P C_1\}.
\]
That is, $D_1$ corresponds to situations where, according to $f_{C_1,
  C_2}$, committee $C_1$ is preferred over $C_2$, and $D_2$
corresponds to the situations where it is the other way round.
From consistency of $f_{C_1, C_2}$, it follows that $D_1$ and $D_2$
are $\rationals$-convex.

Let us consider the case
where $f_{C_1,C_2}$ is trivial, i.e., for each voting situation it
ranks $C_1$ and $C_2$ as equal. By neutrality, it follows that $f_k$
ranks equally each two committees $C'_1$ and $C'_2$, such that $|C'_1
\cap C'_2| = k-1$. This means that $f_k$ (for committees with
intersection $k-1$) can be expressed by means of the trivial committee
scoring function $\lambda \equiv 0$.
 So let us assume that $f_{C_1, C_2}$ is
nontrivial and there is some voting situation where it does not rank
$C_1$ and $C_2$ equally. In this case one of the sets $D_1$ and $D_2$
is nonempty. From neutrality it follows that so is the other one. Now,
we move our analysis from $\rationals^{m \choose k}$ to $\reals^{m
  \choose k}$, by analyzing the closures of the sets $D_1$ and $D_2$.

\begin{lemma}\label{lemma:disjointSets}
  The sets $\inter(\overline{D_1})$ and $\overline{D_2}$ are disjoint,
  convex, and nonempty relative to $D$ (i.e., $\inter(\overline{D_1})
  \cap D \neq \emptyset$ and $\overline{D_2} \cap D \neq \emptyset$).
\end{lemma}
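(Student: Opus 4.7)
My plan is to establish the three claims---convexity, nonemptiness, and disjointness---in turn, with disjointness being the main obstacle.

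\emph{Convexity.} By consistency together with homogeneity, $D_1$ and $D_2$ are each $\rationals$-convex: any rational convex combination $qP + (1-q)Q$ of elements of $D_i$ lies in $D_i$ after clearing denominators. Lemma~\ref{lemma:qconvex2} therefore makes $\overline{D_1}$ and $\overline{D_2}$ convex, and the interior of a convex set is convex.

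\emph{Nonemptiness.} Nontriviality of $f_{C_1, C_2}$ supplies $D_1 \neq \emptyset$, and the neutrality-induced permutation that swaps $C_1$ with $C_2$ acts on $D$ by coordinate negation (as may be verified from its action on $\alpha_{C_1, C_2}$), so $D_2 = -D_1 \neq \emptyset$ and $\overline{D_2} \cap D \neq \emptyset$. For $\inter(\overline{D_1}) \cap D$, I would fix $P_0 \in D_1$ and a rational basis $e_1, \ldots, e_{\binom{m}{k}-1}$ of the ambient real space of $D$; by Lemma~\ref{lem:codomain} each $e_i$ is the $\alpha_{C_1, C_2}$-image of some voting situation, so continuity supplies an integer $n_i$ with $e_i + n_i P_0 \in D_1$. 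Setting $N = \max_i n_i$ and invoking consistency together with $(N - n_i) P_0 \in D_1$ upgrades this to $e_i + NP_0 \in D_1$ for every $i$; combined with $NP_0 \in D_1$, these $\binom{m}{k}$ points of $D_1$ are affinely independent, making $\cvx(D_1)$ full-dimensional. By Lemma~\ref{lemma:qconvex2}, $\overline{D_1}$ then has nonempty interior, and density of $D$ in the ambient space places rational points inside.

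\emph{Disjointness.} Since $\inter(\overline{D_1})$ is open and $D_2$ is dense in $\overline{D_2}$, it suffices to show $\inter(\overline{D_1}) \cap D_2 = \emptyset$. Assume for contradiction that $Q \in D_2 \cap \inter(\overline{D_1})$, and let $B$ be a ball around $Q$ with $B \subseteq \overline{D_1}$. Both $D_1$ and $D_2$ are convex cones (consistency plus homogeneity) with $D_2 = -D_1$. The central reflection step is that for any $R \in D_1$ we must have $2Q - R \in D_2$: if instead $C_1 \succeq_{2Q-R} C_2$, then consistency applied to $R + (2Q - R) = 2Q$ would force $2Q \in D_1$, and homogeneity would give $Q \in D_1$, contradicting $Q \in D_2$. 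Combining this with the density of $D_1 \cap B$ in $B$ and reflecting through $Q$ (which maps $B$ to itself) yields $D_2 \cap B$ dense in $B$, so $B \subseteq \overline{D_2}$; by neutrality $-B \subseteq \overline{D_1}$, and convexity of $\overline{D_1}$ then forces a neighborhood of the origin into $\overline{D_1}$. A closed convex cone containing the origin in its interior must equal the entire ambient real space. The main obstacle is closing the argument via committee dominance: the single-vote dominance $\alpha$-vector $v = e_I - e_J$ (with $J$ componentwise dominating $I$) satisfies $C_2 \succeq_v C_1$ by the committee-dominance axiom, and consistency yields the translation-invariance $D_1 - v \subseteq D_1$ (since $C_2 \succeq_{R-v} C_1$ combined with $C_2 \succeq_v C_1$ would violate $R \in D_1$). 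Combining this invariance with the cone structure of $\overline{D_1}$ and with the observation that the $\rationals$-convex cone generated by all such dominance vectors is a proper subcone of the ambient space disjoint from $D_1$ then contradicts $\overline{D_1}$ filling the entire ambient space.
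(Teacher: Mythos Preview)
Your convexity step is fine and matches the paper. Your nonemptiness argument is correct and genuinely different: you build a full-dimensional simplex inside $D_1$ directly via continuity, whereas the paper first proves $\overline{D_1}\cup\overline{D_2}=\overline{D}$ and then runs a Baire-style argument (two closed nowhere-dense sets cannot fill $\overline{D}$). Both work; yours is more constructive, the paper's is shorter.

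The disjointness step has a real gap. Your reflection-through-$Q$ argument is clever and does correctly show that if some $Q\in D_2\cap\inter(\overline{D_1})$ existed then $\overline{D_1}$ would be all of $\overline{D}$. But the attempted contradiction via committee dominance does not close. Knowing that every dominance vector $v=e_I-e_J$ (and every positive rational combination of such) satisfies $v\notin D_1$ tells you nothing about $\overline{D_1}$: those rational points can perfectly well be limits of $D_1$-points, so the dominance cone being ``disjoint from $D_1$'' does not stop $\overline{D_1}$ from filling the space. Your translation-invariance $D_1-v\subseteq D_1$ is correct but likewise does not obstruct density of $D_1$.

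The way to finish from ``$\overline{D_1}=\overline{D}$'' is to invoke Lemma~\ref{lemma:qconvex1}: since $\cvx(D_1)$ is convex with full-dimensional closure, $\inter(\cvx(D_1))=\inter(\overline{\cvx(D_1)})=\overline{D}$, so $\cvx(D_1)=\overline{D}$ and hence $D_1=\rationals^{\binom{m}{k}}\cap\cvx(D_1)=D$, contradicting $D_2\neq\emptyset$. But once you see this, the entire reflection detour is unnecessary: the paper applies exactly this convexity fact \emph{directly} to the hypothetical point $Q$, observing that $\inter(\overline{D_1})=\inter(\overline{\cvx(D_1)})=\inter(\cvx(D_1))$, so any rational $Q$ in $\inter(\overline{D_1})$ already lies in $\cvx(D_1)$ and hence in $D_1$ by Lemma~\ref{lemma:qconvex1}; thus $\inter(\overline{D_1})\cap D_2=\emptyset$, and openness of $\inter(\overline{D_1})$ then excludes all of $\overline{D_2}$. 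Note in particular that committee dominance plays no role in this lemma.
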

\begin{proof}
  This lemma follows from the results given by
  Young~\shortcite{you:j:scoring-functions} and
  Merlin~\shortcite{merlinAxiomatic}. However, in their cases the
  proofs are implicit in the text. We include an explicit proof for
  the sake of completeness.

  From Lemma~\ref{lemma:qconvex2}, it follows that the sets
  $\overline{D_1}$ and $\overline{D_2}$ are convex and, thus, the
  interior $\inter(\overline{D_1})$ is also convex.  Now, we prove
  that $\overline{D_1} \cup \overline{D_2} = \overline{D}$, a fact
  that will be useful in our further analysis. If this is not the
  case, then $\overline{D} - (\overline{D_1} \cup \overline{D_2})$ is
  open in $\overline{D}$. Thus, there exists a point $P$ and an
  $\left({m \choose k} - 1 \right)$-dimensional ball $\mathcal{B}$
  such that $P \in \mathcal{B} \subseteq \overline{D} -
  (\overline{D_1} \cup \overline{D_2})$. Naturally, $C_1 =_P
  C_2$. Thus, for some $S \in D_1$, there exists a (small) $x \in
  \rationals$, such that $Q = x \cdot S + (1 - x) \cdot P$ belongs to
  the ball $\mathcal{B}$. Since $Q$ belongs to $\mathcal{B}$, it must
  be the case that $C_1 =_Q C_2$.  However, by consistency of
  $f_{C_1,C_2}$, we have that $C_1 \succ_{Q} C_2$ and, so, we have $Q
  \in D_1$. This is a contradiction.

  Next, we show that the set $\inter(\overline{D_1})$ is nonempty,
  relatively to $D$. For the sake of contradiction, assume that
  $\inter(\overline{D_1}) \cap D = \emptyset$. Then, from neutrality,
  it follows that also $\inter(\overline{D_2}) \cap D =
  \emptyset$. Thus, $D_1$ and $D_2$ are nowhere dense in
  $D$,\footnote{A subset $A$ of a topological space $X$ is called
    nowhere dense (in $X$) if there is no neighborhood in $X$ on which
    $A$ is dense.} and so are $\overline{D_1}$, $\overline{D_2}$, and
  $\overline{D_1} \cup \overline{D_2} = \overline{D}$. Consequently,
  we get that $\overline{D}$ is nowhere dense in $D$, a contradiction
  with the density of $D$ in $\overline{D}$\footnote{A subset $A$ of a
    topological space $X$ is dense in $X$ if for every point $x$ in
    $X$, each neighborhood of $x$ contains at least one point from $A$
    (i.e., $A$ has non-empty intersection with every non-empty open
    subset of $X$).} (density of $D$ follows immediately from its
  definition).

  To see that $\overline{D_2}\cap D$ is nonempty, it suffices to note
  that $f_k$ is nontrivial (by assumptions just ahead of the statement
  of the lemma) and, so, $D_2$ is nonempty. Since $D_2$ is a subset of
  both $\overline{D_2}$ and $D$, we get that $\overline{D_2}\cap D
  \neq \emptyset$.


  Now we show that $\inter(\overline{D_1})$ and $D_2$ are
  disjoint. For the sake of contradiction, let us assume that there
  exists $P \in \overline{D}$, such that $P \in
  \inter(\overline{D_1})$ and $P \in D_2$. From
  Lemma~\ref{lemma:qconvex2}, we get that $\inter(\overline{D_1}) =
  \inter(\overline{\cvx(D_1)}) = \inter(\cvx(D_1))$. This means that
  $P \in \inter(\cvx(D_1)) \cap D_2$ and, so, $P \in \cvx(D_1) \cap
  D_2$. Since $P \in D_2$, we know that $P \in \rationals^{m \choose
    k}$.  By Lemma~\ref{lemma:qconvex1} we know that $D_1 =
  \rationals^{m \choose k} \cap \cvx(D_1)$. Thus, since $P \in
  \rationals^{m \choose k}$ and $P \in \cvx(D_1)$, we know that $P \in
  D_1$. All in all, it must be the case that $P \in D_1 \cap D_2$,
  which is a contradiction because $D_1 \cap D_2 = \emptyset$.

  Finally, for the sake of contradiction, let us assume that there
  exists $Q \in \overline{D}$, such that $Q \in \inter(\overline{D_1})$ and $Q
  \in \overline{D_2}$. Since $Q \in \overline{D_2}$, this means that
  every open set containing $Q$ must have nonempty intersection with
  $D_2$. Consequently, $\inter(\overline{D_1})$ has nonempty
  intersection with $D_2$, which---by the previous paragraph---gives a
  contradiction. This completes the proof of the lemma.
\end{proof}

\paragraph{Recovering the Scoring Function.}
We are finally ready to derive our committee scoring function.  From
the classic hyperplane separation theorem, it follows that there
exists a vector $\eta \in \reals^{m \choose k}$ such that (for $P \in
D$, by $\eta \cdot P$ we mean the dot product of $P$ and $\eta$, both
treated as $m \choose k$ dimensional vectors):
\begin{enumerate}
\item For each voting situation $P \in \overline{D_2}$ it holds that
  $\eta \cdot P \leq 0$ .
\item For each voting situation $P \in
  \inter(\overline{D_1})$ it holds that $\eta \cdot P > 0$.  
\end{enumerate}
We note that Lemma~\ref{lemma:disjointSets} allows us to directly
apply the hyperplane separation theorem as the sets
$\inter(\overline{D_1})$ and $\overline{D_2}$ are
disjoint.\footnote{This is different from
  Young's~\shortcite{you:j:scoring-functions} and
  Merlin's~\shortcite{merlinAxiomatic} approach, who operate on sets
  with disjoint interiors, but which do not have to be disjoint on
  their own.}

We now show that if $P \in D$ and $\eta \cdot P > 0$, then $P \in
D_1$. Since $\eta \cdot P > 0$, $P$ cannot belong to $D_2$, but it
might be the case that $C_1 =_{P} C_2$. For the sake of contradiction,
let us assume that this is the case.  We observe that there exists an
$\left({m \choose k} - 1 \right)$-dimensional ball $\mathcal{B}$ in
$D$ with $P \in \mathcal{B}$, such that for each $S \in \mathcal{B}$
we have $C_1 \succeq_{S} C_2$ (this is because $P$ does not belong to
$\overline{D_2}$). Let us now consider two cases.
\begin{description}
\item[Case 1.] If for each $S \in \mathcal{B}$ we have $C_1 =_{S}
  C_2$, then we proceed as follows. Let us take some $Q$ such that
  $C_1 \succ_{Q} C_2$.  There must exist some (possibly very small)
  $x$ such that $S = x \cdot Q + (1-x) \cdot P \in
  \mathcal{B}$. However, from consistency we would get that $C_1
  \succ_{S} C_2$, a contradiction.
\item[Case 2.] If there exists $Q \in \mathcal{B}$ such that $C_1
  \succ_{Q} C_2$, then we observe that there exists $0 < \epsilon < 1$
  such that $S = \frac{P - \epsilon Q}{1 - \epsilon} \in \mathcal{B}$.
  Since $S \in \mathcal{B}$, we have that $C_1 \succeq_S C_2$.
  Further, we have that $P = \epsilon Q + (1-\epsilon) S$.  By
  consistency of $f_{C_1,C_2}$ we get that $C_1 \succ_{P}
  C_2$. However, this is a contradiction.
\end{description}

Next, we show that if $\eta \cdot P < 0$, then $P \in D_2$. For the
sake of contradiction, let us assume that there is $P$ such that $\eta
\cdot P < 0$ but $C_1 \succeq_P C_2$. Then there exists such
$\epsilon$ that if $|Q - P| < \epsilon$ then $\eta \cdot Q < 0$ (and
so $Q \notin \inter(\overline{D_1})$). Thus there exists a ball
$\mathcal{B}$ in $D$ with $P \in \mathcal{B}$, such that $\mathcal{B}
\cap \inter(\overline{D_1}) = \emptyset$. Thus, $\mathcal{B} \cap D_1
= \emptyset$. We infer that some point $S$ in $\mathcal{B}$ could be
represented as a linear combination of $P$ and some point from
$D_1$. From consistency we would get that $C_1 \succ_S C_2$, a
contradiction.

\begin{remark}
  We have shown that for each $P \in D$, (a) $\eta \cdot P > 0$
  implies that $P \in D_1$ (and, so, $C_1 \succ_P C_2$), and (b) $\eta
  \cdot P < 0$ implies that $P \in D_2$ (and, so, $C_2 \succ_P
  C_1$). From symmetry, the same vector $\eta$ works for each pair of
  committees $C_1$ and $C_2$ such that $|C_1 \cap C_2| = k-1$.
\end{remark}

Now we will use continuity to prove that if $\eta \cdot P = 0$ then
$C_1 =_P C_2$.  For the sake of contradiction let us assume that this
is not the case, i.e., that there exists a voting situation $P \in D$
such that $\eta \cdot P = 0$ but $C_1 \neq_P C_2$.  Without loss of
generality, let us assume that $C_1 \succ_P C_2$. Let $Q$ be a voting
situation such that $\eta \cdot Q < 0$ and so $C_2 \succ_{Q} C_1$.
For each $x$ it holds that $\eta \cdot (xP + Q) < 0$ and so $C_2
\succ_{xP + Q} C_1$. However, this contradicts continuity of $f_k$.
Thus, for every $P \in D$, if $\eta \cdot P = 0$ then $C_1 =_P C_2$.

From vector $\eta$, we retrieve a committee scoring function
$\lambda$. For each committee position $I \in [m_k]$ we set
$\lambda(I) = \eta[I]$. Now, we can see that for each two committees
$C_1, C_2$, and for each voting situation $P \in \rationals^{m!}$ it
holds that (see the comment below for an
explanation of what $Q$ is):
\begin{align*}
  \scorefull{\posf}{C_1}{P} - \scorefull{\posf}{C_2}{P}
  &= \sum_{I \in [m]_k} \left(\lambda(I) \cdot \posweight_I(C_1,P) - \lambda(I) \cdot \posweight_I(C_2,P)\right) \\
  &= \sum_{I \in [m]_k} \lambda(I) \cdot \alpha_{C_1, C_2}(P)[I] =
  \sum_{I \in [m]_k} \eta[I] \cdot \alpha_{C_1, C_2}(P)[I] = \eta
  \cdot Q \textrm{,}
\end{align*}
where $Q \in D$ is the representation of $P$ in the space $D$ (i.e.,
$Q$ is the vector of values of the committee position-difference
function $\alpha_{C_1, C_2}$ for profile $P$). From the above inequality we see
that $\scorefull{\posf}{C_1}{P} > \scorefull{\posf}{C_2}{P}$ implies
that $C_1 \succ_P C_2$ and that $\scorefull{\posf}{C_1}{P} =
\scorefull{\posf}{C_2}{P}$ implies that $C_1 =_P C_2$. From neutrality
we get that the same committee scoring function $\lambda$ works for
every two committees $C_1'$ and $C_2'$ with $|C_1' \cap C_2'| = k-1$

There is one more issue we need to deal with.  So far, we gave no
argument as to why $\lambda$ should satisfy the dominance property of
committee scoring functions (i.e., that if $I$ and $J$ are two
committee positions such that $I$ dominates $J$, then $\posf(I) \geq
\posf(J)$). However, to get this property it suffices to assume the
committee dominance axiom for $f_k$.

Summarizing our discussion from this section, we get our main result,
Theorem~\ref{thm:theMainTheorem}, for the committees $C_1$ and $C_2$,
with $|C_1\cap C_2| = k-1$. We continue our analysis in the next
section.

\subsection{Putting Everything Together: Comparing Arbitrary
  Committees}\label{sec:specialCase_base}

In this section we conclude the proof of
Theorem~\ref{thm:theMainTheorem} by extending the reasoning from the
previous section to apply to every two committees $C_1$ and $C_2$
irrespective of the size of their intersection.

\paragraph{Setting Up the Proof.}

Let $f_k$ be a $k$-winner election rule that is symmetric,
consistent, continuous, and has the committee
dominance property. Let $\lambda$ be the scoring function derived for
this $f_k$ as described at the end of the previous section.  We know
that for each two committees $C_1$ and $C_2$ such that $|C_1 \cap C_2|
= k-1$ and each voting situation $P \in \rationals^{m!}$ it holds that
$\scorefull{\posf}{C_1}{P} > \scorefull{\posf}{C_2}{P}$ if and only if
$C_1 \succ_P C_2$, and $\scorefull{\posf}{C_1}{P} =
\scorefull{\posf}{C_2}{P}$ if and only if $C_1 =_P C_2$.  We will show
that the same holds for all committees $C_1$ and $C_2$, irrespective
of the size of their intersection.  We will show this by induction
over $k - |C_1 \cap C_2|$.  

Let us fix some value $k' < k-1$ and let us assume that $\lambda$ can
be used to distinguish whether some committee $C_1$ is preferred over
some committee $C_2$ whenever $|C_1 \cap C_2| > k'$. We will show that
the same $\lambda$ can be used to distinguish whether committee $C_1$
is preferred over committee $C_2$ when $|C_1 \cap C_2| = k'$.

Let $C_1$ and $C_2$ be two arbitrary committees such that $|C_1 \cap
C_2| = k'$. Let us rename the candidates so that $C_1 \setminus C_2 =
\{c_1, \ldots, c_{k-k'}\}$, $C_1 \cap C_2 = \{c_{k-k'+1}, \ldots,
c_{k}\}$ and $C_2 \setminus C_1 = \{c_{k+1}, \ldots, c_{2k-k'}\}$.

\paragraph{The Case Where $\bf k - k'$ Is Even.}
 
If $k - k'$ is even, we consider the following two cases:

\begin{description}
\item[Case 1:] There exists a vector of $2k-k'$ positions $\langle
  p_1, \ldots, p_{2k-k'} \rangle$ such that:
  \begin{align}\label{eq:existsnontransitive_C1}
    \lambda(\{p_1, \ldots, p_k\}) + \lambda(\{p_{k-k'+1}, \ldots,
    p_{2k-k'}\}) \neq 2\lambda(\{p_{ \frac{k-k'}{2}+1}, \ldots, p_{
      \frac{k-k'}{2} + k }\}) \textrm{.}
  \end{align}
  Let us consider the committee $C_3 = \{c_{ \frac{k-k'}{2}+1}, \ldots, c_{ \frac{k-k'}{2} +
    k}\}$. We consider the vector space of voting
  situations $P \in \rationals^{m!}$ such that $C_1 =_P C_3$ and $C_3
  =_P C_2$ (the fact that this is a vector space follows from the
  inductive assumption; $|C_1 \cap C_3| = |C_2 \cap C_3| > k'$). 
  The conditions $C_1 =_P C_3$ and $C_3 =_P C_2$ are not contradictory
  (consider the profile in which each vote is cast exactly once---in such
  profile all size-$k$ committees are equivalent with respect to $f_k$).
  This space has dimension either $m! - 2$ or $m! - 1$. This is so, because
  each of the conditions $C_1 =_P C_3$ and $C_2 =_P C_3$ boils down to
  a single linear equation. If these equations are independent then
  the dimension is $m!-2$. Otherwise, it is $m!-1$.  By transitivity
  of $f_k$ we get that in each voting situation $P$ from this space it
  holds that $C_1 =_P C_2$ and that the committee score of $C_1$
  (according to $\lambda$) is equal to the committee score of
  $C_2$. Let $B$ be a base of this space. Further, let $v$ be a vote
  where each candidate $c_i$, $i \in \{1, \ldots, 2k-k'\}$, stands on
  position $p_i$ (recall Equation~\eqref{eq:existsnontransitive_C1}
  above), and let $v'$ be an identical vote except that candidates
  from $C_1 \cup C_2$ are listed in the reverse order (i.e., $c_1$ is
  on position $p_{2k-k'}$, $c_2$ is on position $p_{2k-k'-1}$ and so
  on). Let $S_b$ be a voting situation that consists of $v$ and
  $v'$. The positions of $C_1$ and $C_3$ in $v$ are: 
  \begin{align*}
  \pos_v(C_1) = \{p_1, \ldots, p_k\} 
  \quad & \text{and} \quad
  \pos_v(C_3) = \{p_{ \frac{k-k'}{2} +1}, \ldots, p_{\frac{k-k'}{2} + k }\}  \\
  \intertext{The positions of $C_1$ and $C_3$ in $v'$ are:}
  \pos_{v'}(C_1) = \{p_{k-k'+1}, \ldots, p_{2k-k'}\}
  \quad & \text{and} \quad
  \pos_{v'}(C_3) = \{p_{\frac{k-k'}{2} +1}, \ldots, p_{ \frac{k-k'}{2} + k }\}  
  \end{align*}
  %
  %
  Consequently, according to Equation~\eqref{eq:existsnontransitive_C1}, in voting
  situation $S_b$ the committee score of $C_1$ is not equal to that of
  $C_3$. By the inductive assumption, it must be the case that $C_1
  \neq_{S_b} C_3$. This means that the voting situations in $B \cup
  \{S_b\}$ are linearly independent.

  We now show that $C_1 =_{S_b} C_2$.  Consider a permutation $\sigma$
  (over the candidate set) that swaps $c_1$ with $c_{2k-k'}$, $c_2$
  with $c_{2k-k'-1}$, and so on. We note that $\sigma(C_1) = C_2$,
  $\sigma(C_2) = C_1$, and $S_b = \sigma(S_b)$. Thus, by symmetry of
  $f_k$, it must be the case that $C_1 =_{S_b} C_2$.  Further, the
  committee scores of $C_1$ and $C_2$ are equal in $S_b$.

  Altogether, the base $B \cup \{S_b\}$ defines an $(m! -
  1)$-dimensional space of voting situations $P$ such that $C_1 =_P
  C_2$ and the committee scores of $C_1$ and $C_2$ are equal.  From
  Corollary~\ref{cor:hyperplane} we know that the set of voting
  situations $P$ such that $C_1 =_P C_2$ forms a vector space of
  dimension $m! -1$. As a result, we get that for each voting
  situation $P$ the condition $C_1 =_P C_2$ is equivalent to the
  condition that $C_1$ has the same committee score as $C_2$ according
  to $\lambda$.

  The fact that $C_1 \succ_S C_2$ whenever the committee score of
  $C_1$ is greater than that of $C_2$
  follows from Lemma~\ref{thm:nontransitive2}.

\item[Case 2:] For each vector of $2k-k'$ positions $\langle p_1,
  \ldots, p_{2k-k'} \rangle$ it holds that (note that the condition
  below is a negation of the condition from Case~1):
  \begin{align*} 
      \lambda(\{p_1, \ldots, p_k\}) - &\lambda(\{p_{ \frac{k-k'}{2}+1}, \ldots, p_{ \frac{k-k'}{2}  + k }\}) = \\
      &\lambda(\{p_{ \frac{k-k'}{2}+1}, \ldots, p_{ \frac{k-k'}{2} + k
      }\}) - \lambda(\{p_{k-k'+1}, \ldots, p_{2k-k'}\}) \textrm{.}
  \end{align*}
  As before, let $C_3 = \{c_{ \frac{k-k'}{2}+1}, \ldots, c_{
    \frac{k-k'}{2} + k }\}$.  Since the above equality must hold for
  each vector of $2k-k'$ positions, we see that if the committee score
  of $C_1$ is equal to the committee score of $C_3$, then the
  committee score of $C_3$ is equal to the committee score of
  $C_2$. Consequently, by the inductive assumption, we get that $C_1
  =_P C_3$ implies that $C_3 =_P C_2$. Thus, by $f_k$'s transitivity,
  we get that for each voting situation $P$, the condition $C_1 =_P
  C_3$ implies that $C_1 =_P C_2$. As a consequence of this reasoning,
  there exists an $(m! - 1)$-dimensional space of voting situations
  $P$ such that $C_1 =_P C_2$ and such that $C_1$ has the same
  committee score as $C_2$. Similarly as in Case~1, we conclude that
  for each voting situation $P$ the condition $C_1 =_P C_2$ is
  equivalent to the condition that $C_1$ has the same committee score
  as $C_2$ according to $\lambda$, and that it holds that $C_1 \succ_P
  C_2$ whenever the committee score of $C_1$ is greater than
  that of $C_2$ (by Lemma~\ref{thm:nontransitive2}).
\end{description}

\paragraph{The Case Where $\bf k - k' \geq 3$ and $\bf k - k'$ is Odd.}
Similarly as before we consider two cases:

\begin{description}
\item[Case 1:] There exists a vector of $2k-k'$ positions $\langle
  p_1, \ldots, p_{2k-k'} \rangle$ and a number $x \in \{1, \ldots
  k-k'\}$ such that:
  \begin{align*} 
    \lambda(\{p_1, \ldots, p_k\}) \;+\; &\lambda(\{p_{k-k'+1}, \ldots, p_{2k-k'}\}) \neq \\
    & \lambda(\{p_x, \ldots, p_{k + x - 1}\}) + \lambda(\{p_{k-k'+ 2 -
      x}, \ldots, p_{2k-k'+1-x}\}) \textrm{.}
  \end{align*}
  In this case we can repeat the reasoning from Case 1 from the
  previous subsection (it suffices to take $C_3 = \{c_x, \ldots,
  c_{k+x-1}\}$).

\item[Case 2:] For each vector of $2k-k'$ positions $\langle p_1,
  \ldots, p_{2k-k'} \rangle$ and each number $x \in \{1, \ldots
  k-k'\}$ it holds that:
  \begin{align*} 
    \lambda(\{p_1, \ldots, p_k\}) \;+\; &\lambda(\{p_{k-k'+1}, \ldots, p_{2k-k'}\}) = \\
    &\lambda(\{p_x, \ldots, p_{k + x - 1}\}) + \lambda(\{p_{k-k'+ 2 -
      x}, \ldots, p_{2k-k'+1-x}\}) \textrm{.}
  \end{align*}
  The above inequality for $x = \lfloor \frac{k-k'}{2} \rfloor$ and
  for $x = \lfloor \frac{k-k'}{2} \rfloor + 1$ gives, respectively
  (note that $k - k' - \lfloor \frac{k-k'}{2} \rfloor = \lceil
  \frac{k-k'}{2} \rceil$):
  \begin{align*}
    \lambda(\{p_1, \ldots, p_k\}) \;+\; & \lambda(\{p_{k-k'+1}, \ldots, p_{2k-k'}\}) =\\
    &\lambda(\{p_{\lfloor \frac{k-k'}{2} \rfloor} \ldots, p_{k +
      \lfloor \frac{k-k'}{2} \rfloor - 1}\}) + \lambda(\{p_{\lceil
      \frac{k-k'}{2} \rceil + 2} \ldots, p_{k + \lceil \frac{k-k'}{2}
      \rceil +1}\}) \textrm{,}
  \end{align*}
  and:
  \begin{align*}
    \lambda(\{p_1, \ldots, p_k\}) \;+\; & \lambda(\{p_{k-k'+1}, \ldots, p_{2k-k'}\}) =\\
    &\lambda(\{p_{\lfloor \frac{k-k'}{2} \rfloor + 1} \ldots, p_{k +
      \lfloor \frac{k-k'}{2} \rfloor }\}) + \lambda(\{p_{\lceil
      \frac{k-k'}{2} \rceil + 1} \ldots, p_{k + \lceil \frac{k-k'}{2}
      \rceil }\}) \textrm{.}
  \end{align*}
  Together, these two equalities give that:
  \begin{align*}
    \lambda(\{p_{\lfloor \frac{k-k'}{2} \rfloor} \ldots, p_{k + \lfloor \frac{k-k'}{2} \rfloor - 1}\}) \; + \; & \lambda(\{p_{\lceil \frac{k-k'}{2} \rceil + 2} \ldots, p_{k + \lceil \frac{k-k'}{2} \rceil +1}\}) = \\
    \lambda(\{p_{\lfloor \frac{k-k'}{2} \rfloor + 1} \ldots, p_{k +
      \lfloor \frac{k-k'}{2} \rfloor }\}) \: + \: & \lambda(\{p_{\lceil
      \frac{k-k'}{2} \rceil + 1} \ldots, p_{k + \lceil \frac{k-k'}{2}
      \rceil }\}) \textrm{.}
  \end{align*}
  Since the above equality holds for each vector of $2k-k'$ positions,
  after renaming the positions, we get that for each set of $k+3$
  positions $\langle q_1, \ldots, q_{k+3} \rangle$ it holds that:
  \begin{align*}
    \lambda(\{q_1, \ldots, q_k\}) + \lambda(\{q_4, \ldots, q_{k+3}\})
    = \lambda(\{q_2, \ldots, q_{k+1}\}) + \lambda(\{q_3, \ldots,
    q_{k+2}\}) \textrm{.}
  \end{align*}
  After reformulation we get:
  \begin{align}\label{eq:cyclicMove}
    \lambda(\{q_1, \ldots, q_k\}) - \lambda(\{q_2, \ldots, q_{k+1}\})
    = \lambda(\{q_3, \ldots, q_{k+2}\}) - \lambda(\{q_4, \ldots,
    q_{k+3}\}) \textrm{.}
  \end{align}

  If $k$ is odd, we obtain the following series of equalities (the
  consecutive equalities, except for the last one, are consequences of
  applying Equation~\eqref{eq:cyclicMove} to the cyclic shifts
  of the list $\langle q_1, q_2, \ldots, q_{k+3} \rangle$; the last
  equality breaks the pattern and is a consequence of applying
  Equation~\eqref{eq:cyclicMove} to the list $\langle q_{k+2}, q_{k+3},
  q_1, q_2, \ldots, q_{k-1}, q_{k+1}, q_k \rangle$):
  \begin{align*}
    \lambda(\{q_1, \ldots, q_k\}) \;-\; & \lambda(\{q_2, \ldots, q_{k+1}\}) 
    = \lambda(\{q_3, \ldots, q_{k+2}\}) - \lambda(\{q_4, \ldots, q_{k+3}\})  \\
    &= \lambda(\{q_5, \ldots, q_{k+3}, q_1\}) - \lambda(\{q_6, \ldots, q_{k+3}, q_1, q_2\}) \\
    &= \lambda(\{q_7, \ldots, q_{k+3}, q_1, q_2, q_3\}) - \lambda(\{q_8, \ldots, q_{k+3}, q_1, q_2, q_3, q_4\}) \\
    &\quad \quad \vdots \\
    &= \lambda(\{q_{k+2}, q_{k+3}, q_1, \ldots, q_{k-2}\}) - \lambda(\{q_{k+3}, \ldots, q_{1}, q_{k-1}\}) \\
    &= \lambda(\{q_1, \ldots, q_{k-1}, q_{k+1}\}) - \lambda(\{q_{2},
    \ldots, q_{k+1}\}) \textrm{,}
  \end{align*}
  In consequence, it must be the case that $\lambda(\{q_1, \ldots,
  q_k\}) = \lambda(\{q_1, \ldots, q_{k-1}, q_{k+1}\})$.  Thus, by
  transitivity, we get that $\lambda$ is a constant function (in
  essence, what we have shown is that we can replace positions in the set of $k$ positions, one
  by one, without changing the value of the committee scoring
  function).  Let $C_3 = \{c_2, \ldots, c_{k+1}\}$. Since $\lambda$ is
  a constant function, then by the inductive assumption we have that
  for every voting situation $P$ it holds that $C_1 =_P C_3$ and $C_3
  =_P C_2$. By transitivity we get that for each voting situation $P$
  it holds that $C_1 =_P C_2$. Thus our trivial scoring function works
  correctly on $C_1$ and $C_2$.\medskip

  Let us now assume that $k$ is even. Now we obtain the following
  series of equalities (in this case all the consecutive equalities
  are consequences of applying Equation~\eqref{eq:cyclicMove} to the cyclic shifts of
  the sequence $\langle q_1, q_2, \ldots, q_{k+3} \rangle$):
  \begin{align*}
    \lambda(\{q_1, \ldots, q_k\}) \;-\; & \lambda(\{q_2, \ldots, q_{k+1}\})  
    = \lambda(\{q_3, \ldots, q_{k+2}\}) - \lambda(\{q_4, \ldots, q_{k+3}\})  \\
    &= \lambda(\{q_5, \ldots, q_{k+3}, q_1\}) - \lambda(\{q_6, \ldots, q_{k+3}, q_1, q_2\}) \\
    &= \lambda(\{q_7, \ldots, q_{k+3}, q_1, q_2, q_3\}) - \lambda(\{q_8, \ldots, q_{k+3}, q_1, q_2, q_3, q_4\}) \\
    & \quad \quad \vdots \\
    &= \lambda(\{q_{k+3}, q_1, \ldots, q_{k-1}\}) - \lambda(\{q_1, \ldots, q_k\}) \\
    &= \lambda(\{q_2, \ldots, q_{k+1}\}) - \lambda(\{q_3, \ldots,
    q_{k+2}\}) \textrm{,}
  \end{align*}
  In consequence, it is the case that:
  \begin{align*}
    \lambda(\{q_1, \ldots, q_k\}) - \lambda(\{q_2, \ldots, q_{k+1}\}) = 
    \lambda(\{q_2, \ldots, q_{k+1}\}) - \lambda(\{q_3, \ldots, q_{k+2}\}) \textrm{,}
  \end{align*}
  and this holds for every sequence $\langle q_1, \ldots, q_{k+2}
  \rangle$ of positions.  Thus, we get that for each voting situation
  in which $\{c_{1}, \ldots, c_{k}\}$ is equivalent to $\{c_{2},
  \ldots, c_{k+1}\}$, it also holds that $\{c_{2}, \ldots, c_{k+1}\}$
  is equivalent to $\{c_{3}, \ldots, c_{k+2}\}$, it also holds that
  $\{c_{3}, \ldots, c_{k+2}\}$ is equivalent to $\{c_{4}, \ldots,
  c_{k+3}\}$, etc.  Let $C_3 = \{c_{2}, \ldots, c_{k+1}\}$. From the
  preceding reasoning we have that for each voting situation $P$ the
  fact that it holds that $C_1 =_P C_3$ implies that $C_1 =_P C_2$. We
  conclude the proof in the same way as in the case of even $k-k'$
  (Case 2).  Specifically, we conclude that there exists an $(m! -
  1)$-dimensional space of voting situations $P$ such that $C_1 =_P
  C_2$ and such that $C_1$ has the same committee score as $C_2$. This
  means that for each voting situation $P$ the condition $C_1 =_P C_2$
  is equivalent to the condition that $C_1$ has the same committee
  score as $C_2$ according to $\lambda$, and that it holds that $C_1
  \succ_P C_2$ whenever the committee score of $C_1$ is greater than
  that of $C_2$ (by Lemma~\ref{thm:nontransitive2}).
\end{description}

\paragraph{The End.}
We have shown that if a $k$-winner rule is symmetric,
consistent, continuous, and has the
committee-dominance property, then it is a committee scoring rule. On
the other hand, committee scoring rules satisfy all these conditions.
This completes our proof of Theorem~\ref{thm:theMainTheorem}.

\section{Conclusions}\label{sec:conclusions}

We have provided an axiomatic characterization of committee scoring
rules, a new class of multiwinner voting rules recently introduced by
Elkind et al.~\shortcite{elk-fal-sko-sli:c:multiwinner-rules}.
Committee scoring rules form a remarkably general class of multiwinner
systems that consists of many nontrivial rules with a variety of
applications.  Thus, our characterization constitutes a fundamental
framework for further axiomatic studies of this fascinating class and
makes an important step towards their understanding.  We mention that
various properties of committee scoring rules, and the internal
structure of the class, were already studied by Elkind et
al.~\cite{elk-fal-sko-sli:c:multiwinner-rules} and Faliszewski et
al.~\cite{fal-sko-sli-tal:c:top-k-counting,fal-sko-sli-tal:c:classification}.
However, they mostly focused on specific rules and on subclasses of
the whole class, while this work distinguishes the class of committee
scoring rules among the universe of multiwinner voting rules.

Our Theorem~\ref{thm:theMainTheorem} required developing a set of useful tools and new
concepts, such as decision rules.  We believe that they are an
interesting notion that deserves further study.

\bibliographystyle{plain}
\bibliography{main}

\begin{thebibliography}{10}

\bibitem{arrow1963}
K.~J. Arrow.
\newblock {\em Social Choice and Individual Values}.
\newblock Wiley, New York, 2nd edition, 1963.

\bibitem{arrow2010handbook}
K.~J. Arrow, A.~Sen, and K.~Suzumura, editors.
\newblock {\em Handbook of Social Choice \& Welfare}, volume~2.
\newblock Elsevier, 2010.

\bibitem{arrow2002handbook}
Kenneth~J Arrow, Amartya Sen, and Kotaro Suzumura, editors.
\newblock {\em Handbook of social choice and welfare}, volume~1.
\newblock Elsevier, 2002.

\bibitem{Asp13}
B.~Asplach.
\newblock Johnson graphs are hamilton-connected.
\newblock {\em Ars Mathematica Contemporanea}, 6(1):21--23, 2013.

\bibitem{justifiedRepresenattion}
H.~Aziz, M.~Brill, V.~Conitzer, E.~Elkind, R.~Freeman, , and T.~Walsh.
\newblock Justified representation in approval-based committee voting.
\newblock In {\em Proceedings of the 29th Conference on Artificial Intelligence
  (AAAI-2015)}, 2015.

\bibitem{azi-gas-gud-mac-mat-wal:c:multiwinner-approval}
H.~Aziz, S.~Gaspers, J.~Gudmundsson, S.~Mackenzie, N.~Mattei, and T.~Walsh.
\newblock Computational aspects of multi-winner approval voting.
\newblock In {\em Proceedings of the 14th International Conference on
  Autonomous Agents and Multiagent Systems}, May 2015.

\bibitem{bar-coe:j:non-controversial-k-names}
S.~Barber\`a and D.~Coelho.
\newblock How to choose a non-controversial list with $k$ names.
\newblock {\em Social Choice and Welfare}, 31(1):79--96, 2008.

\bibitem{Barbera198249}
S.~Barberá and B.~Dutta.
\newblock Implementability via protective equilibria.
\newblock {\em Journal of Mathematical Economics}, 10(1):49--65, 1982.

\bibitem{fullyProportionalRepr}
N.~Betzler, A.~Slinko, and J.~Uhlmann.
\newblock On the computation of fully proportional representation.
\newblock {\em Journal of Artificial Intelligence Research}, 47:475--519, 2013.

\bibitem{Bran13a}
F.~Brandl, F.~Brandt, and H.~G. Seedig.
\newblock Consistent probabilistic social choice.
\newblock {\em Econometrica}, 84(5):1839--1880, 2016.

\bibitem{BrandtEtAlChapter}
Felix Brandt, Markus Brill, and Paul Harrenstein.
\newblock Tournament solutions.
\newblock In F.~Brandt, V.~Conitzer, U.~Endriss, J.~Lang, and A.~D. Procaccia,
  editors, {\em Handbook of Computational Social Choice}, chapter~3. Cambridge
  University Press, 2016.

\bibitem{ccElection}
B.~Chamberlin and P.~C{ourant}.
\newblock Representative deliberations and representative decisions:
  {P}roportional representation and the borda rule.
\newblock {\em American Political Science Review}, 77(3):718--733, 1983.

\bibitem{che-sha:j:scoring-rules}
P.~Chebotarev and E.~Shamis.
\newblock Characterizations of scoring methods for preference aggregation.
\newblock {\em Annals of Operations Research}, 80:299--332, 1998.

\bibitem{Ching1996298}
S.~Ching.
\newblock A simple characterization of plurality rule.
\newblock {\em Journal of Economic Theory}, 71(1):298--302, 1996.

\bibitem{cop:m:copeland}
A.~Copeland.
\newblock A ``reasonable'' social welfare function.
\newblock Mimeographed notes from a Seminar on Applications of Mathematics to
  the Social Sciences, University of Michigan, 1951.

\bibitem{deb:j:k-borda}
B.~Debord.
\newblock An axiomatic characterization of {B}orda's $k$-choice function.
\newblock {\em Social Choice and Welfare}, 9(4):337--343, 1992.

\bibitem{elk-fal-sko-sli:c:multiwinner-rules}
E.~Elkind, P.~Faliszewski, P.~Skowron, and A.~Slinko.
\newblock Properties of multiwinner voting rules.
\newblock In {\em Proceedings of the 13th International Conference on
  Autonomous Agents and Multiagent Systems (AAMAS-2014)}, May 2014.
\newblock Also presented in FIT-2013.

\bibitem{fal-saw-sch-smo:c:multiwinner-genetic-algorithms}
P.~Faliszewski, J.~Sawicki, R.~Schaefer, and M.~Smolka.
\newblock Multiwinner voting in genetic algorithms for solving ill­posed
  global optimization problems.
\newblock In {\em Proceedings of the 19th International Conference on the
  Applications of Evolutionary Computation}, 2016.
\newblock To appear.

\bibitem{fal-sko-sli-tal:c:classification}
P.~Faliszewski, P.~Skowron, A.~Slinko, and N.~Talmon.
\newblock Committee scoring rules: Axiomatic classification and hierarchy.
\newblock In {\em Proceedings of the 25th International Joint Conference on
  Artificial Intelligence (IJCAI-2016)}, 2016.

\bibitem{fal-sko-sli-tal:c:top-k-counting}
P.~Faliszewski, P.~Skowron, A.~Slinko, and N.~Talmon.
\newblock Multiwinner analogues of the plurality rule: {Axiomatic} and
  algorithmic views.
\newblock In {\em Proceedings of the 30th Conference on Artificial Intelligence
  (AAAI-2016)}, 2016.

\bibitem{fel-mao:j:norms}
D.S. Felsenthal and Z.~Maoz.
\newblock Normative properties of four single-stage multi-winner electoral
  procedures.
\newblock {\em Behavioral Science}, 37:109--127, 1992.

\bibitem{fishburn73SocChoice}
P.~Fishburn.
\newblock {\em The Theory of Social Choice,}.
\newblock Princeton University Press, 1973.

\bibitem{fishburn77socChooiceFunctions}
P.~Fishburn.
\newblock Condorcet social choice functions.
\newblock {\em SIAM Journal on Applied Mathematics}, 33(3):469--489, 1977.

\bibitem{fis:j:majority-committees}
P.~Fishburn.
\newblock Majority committees.
\newblock {\em Journal of Economic Theory}, 25(2):255--268, 1981.

\bibitem{fishburnBorda}
P.~Fishburn and W.~Gehrlein.
\newblock Borda's rule, positional voting, and condorcet's simple majority
  principle.
\newblock {\em Public Choice}, 28(1):79--88.

\bibitem{Fish84a}
P.~C. Fishburn.
\newblock Probabilistic social choice based on simple voting comparisons.
\newblock {\em Review of Economic Studies}, 51(4):683--692, 1984.

\bibitem{conf/aaai/FreemanBC14}
R.~Freeman, M.~Brill, and V.~Conitzer.
\newblock On the axiomatic characterization of runoff voting rules.
\newblock In {\em Proceedings of the 28th Conference on Artificial Intelligence
  (AAAI-2014)}, pages 675--681, 2014.

\bibitem{gardenfors73:scoring-rules}
P.~G{\"a}rdenfors.
\newblock Positionalist voting functions.
\newblock {\em Theory and Decision}, 4(1):1--24, 1973.

\bibitem{gib:j:polsci:manipulation}
A.~Gibbard.
\newblock Manipulation of voting schemes.
\newblock {\em Econometrica}, 41(4):587--601, 1973.

\bibitem{10.2307/1911681}
A.~Gibbard.
\newblock Manipulation of schemes that mix voting with chance.
\newblock {\em Econometrica}, 45(3):665--681, 1977.

\bibitem{gilboa2002utility}
Itzhak Gilboa, David Schmeidler, and Peter~P Wakker.
\newblock Utility in case-based decision theory.
\newblock {\em Journal of Economic Theory}, 105(2):483--502, 2002.

\bibitem{hansson76}
B.~Hansson and H.~Sahlquist.
\newblock A proof technique for social choice with variable electorate.
\newblock {\em Journal of Economic Theory}, 13:193--200, 1976.

\bibitem{kay-san:j:condorcet-winners}
B.~Kaymak and R.~Sanver.
\newblock Sets of alternatives as condorcet winners.
\newblock {\em Social Choice and Welfare}, 20(3):477--494, 2003.

\bibitem{kendall1938measure}
M.~G. Kendall.
\newblock A new measure of rank correlation.
\newblock {\em Biometrika}, 30(1/2):81--93, 1938.

\bibitem{kil-handbook}
D.~Kilgour.
\newblock Approval balloting for multi-winner elections.
\newblock In J.~Laslier and R.~Sanver, editors, {\em Handbook on Approval
  Voting}, pages 105--124. Springer, 2010.

\bibitem{kil-mar:j:minimax-approval}
M.~Kilgour and E.~Marshall.
\newblock Approval balloting for fixed-size committees.
\newblock In {\em Electoral Systems, Studies in Choice and Welfare}, volume~12,
  pages 305--326, 2012.

\bibitem{las:b:tournaments}
J.~Laslier.
\newblock {\em Tournament Solutions and Majority Voting}.
\newblock Springer, 1997.

\bibitem{budgetSocialChoice}
T.~Lu and C.~Boutilier.
\newblock Budgeted social choice: {F}rom consensus to personalized decision
  making.
\newblock In {\em Proceedings of the 22nd International Joint Conference on
  Artificial Intelligence (IJCAI-2011)}, pages 280--286, 2011.

\bibitem{bou-lu:c:value-directed}
T.~Lu and C.~Boutilier.
\newblock Value directed compression of large-scale assignment problems.
\newblock In {\em Proceedings of the 29th Conference on Artificial Intelligence
  (AAAI-2015)}, pages 1182--1190, 2015.

\bibitem{mayAxiomatic1952}
K.~O. May.
\newblock A set of independent necessary and sufficient conditions for simple
  majority decision.
\newblock {\em Econometrica}, 20(4):680--684, 1952.

\bibitem{merlinAxiomatic}
V.~Merlin.
\newblock The axiomatic characterization of majority voting and scoring rules.
\newblock {\em Mathematical Social Sciences}, 41(161):87--109, 2003.

\bibitem{nitzan81BordaChar}
S.~Nitzan and A.~Rubinstein.
\newblock A further characterization of borda ranking method.
\newblock {\em Public Choice}, 36(1):153--158, 1981.

\bibitem{fishburn78Approval}
Peter~C. P.~Fishburn.
\newblock {Axioms for approval voting: Direct proof}.
\newblock {\em Journal of Economic Theory}, 19(1):180--185, 1978.

\bibitem{pattanaik2002positional}
P.~K Pattanaik.
\newblock Positional rules of collective decision-making.
\newblock In K.~J. Arrow, A.~Sen, and K.~Suzumura, editors, {\em Handbook of
  social choice and welfare}, volume~1, pages 361--394. Elsevier, 2002.

\bibitem{per:pav-dhondt}
T.~Pereira.
\newblock Proportional approval method using squared loads, approval removal
  and coin-flip approval transformation (pamsac) - a new system of proportional
  representation using approval voting.
\newblock Technical Report arXiv:1602.05248v2 [cs.GT], arXiv.org, March 2016.

\bibitem{complexityProportionalRepr}
A.~Procaccia, J.~Rosenschein, and A.~Zohar.
\newblock On the complexity of achieving proportional representation.
\newblock {\em Social Choice and Welfare}, 30(3):353--362, April 2008.

\bibitem{Puke14a}
F.~Pukelsheim.
\newblock {\em Proportional Representation: Apportionment Methods and Their
  Applications}.
\newblock Springer, 2014.

\bibitem{rat:j:condorcet-inconsistencies}
T.~Ratliff.
\newblock Some startling inconsistencies when electing committees.
\newblock {\em Social Choice and Welfare}, 21(3):433--454, 2003.

\bibitem{RichelsonPlurality}
J.~T. Richelson.
\newblock A characterization result for the plurality rule.
\newblock {\em Journal of Economic Theory}, 19(2):548--550, 1978.

\bibitem{rubinstein80Tournament}
A.~Rubinstein.
\newblock Ranking the participants in a tournament.
\newblock {\em SIAM Journal on Applied Mathematics}, 38(1):108--111, 1980.

\bibitem{sat:j:polsci:manipulation}
M.~Satterthwaite.
\newblock Strategy-proofness and {Arrow's} conditions: Existence and
  correspondence theorems for voting procedures and social welfare functions.
\newblock {\em Journal of Economic Theory}, 10(2):187--217, 1975.

\bibitem{sertel88Approval}
M.~Sertel.
\newblock Characterizing approval voting.
\newblock {\em Journal of Economic Theory}, 45(1):207--211, 1988.

\bibitem{skow:multiwinner-models}
P.~Skowron.
\newblock What do we elect committees for? a voting committee model for
  multi-winner rules.
\newblock In {\em Proceedings of the 24th International Joint Conference on
  Artificial Intelligence (IJCAI-2015)}, pages 1141--1148, 2015.

\bibitem{owaWinner}
P.~Skowron, P.~Faliszewski, and J.~Lang.
\newblock Finding a collective set of items: From proportional
  multirepresentation to group recommendation.
\newblock {\em Artificial Intelligence}, 241:191--216, 2016.

\bibitem{sko-fal-sli:j:multiwinner}
P.~Skowron, P.~Faliszewski, and A.~Slinko.
\newblock Achieving fully proportional representation: {Approximability}
  result.
\newblock {\em Artificial Intelligence}, 222:67--103, 2015.

\bibitem{smi:j:scoring-rules}
J.~Smith.
\newblock Aggregation of preferences with variable electorate.
\newblock {\em Econometrica}, 41(6):1027--1041, 1973.

\bibitem{Thie95a}
T.~N. Thiele.
\newblock Om flerfoldsvalg.
\newblock In {\em Oversigt over det Kongelige Danske Videnskabernes Selskabs
  Forhandlinger}, pages 415--441. 1895.

\bibitem{yager1988}
R.~Yager.
\newblock On ordered weighted averaging aggregation operators in multicriteria
  decisionmaking.
\newblock {\em IEEE Transactions on Systems, Man and Cybernetics},
  18(1):183--190, 1988.

\bibitem{youngBorda}
H.~Young.
\newblock {An axiomatization of Borda's rule}.
\newblock {\em Journal of Economic Theory}, 9(1):43--52, 1974.

\bibitem{you:j:scoring-functions}
H.~Young.
\newblock Social choice scoring functions.
\newblock {\em SIAM Journal on Applied Mathematics}, 28(4):824--838, 1975.

\bibitem{lev-you:j:condorcet}
H.~Young and A.~Levenglick.
\newblock A consistent extension of {Condorcet}'s election principle.
\newblock {\em SIAM Journal on Applied Mathematics}, 35(2):285--300, 1978.

\bibitem{young74}
H.~P. Young.
\newblock A note on preference aggregation.
\newblock {\em Econometrica}, 42(6):1129--1131, 1974.

\end{thebibliography}

\end{document}